\documentclass[a4paper,draft,11pt]{article}

\usepackage{pb-diagram,amsmath,amssymb}
\usepackage{bbm}
\usepackage{dsfont}
\usepackage{amsthm}
\usepackage[T2A]{fontenc}
\usepackage[cp1251]{inputenc}
\usepackage[active]{srcltx}

\usepackage{amsthm}

\setlength{\textheight}{8.5in} \setlength{\textwidth}{5.8in} \setlength{\unitlength}{1mm}
\oddsidemargin 0in \topmargin -.3in

\thispagestyle{empty}
\newtheorem{theorem}{Theorem}
\newtheorem{definition}[theorem]{Definition}
\newtheorem{lemma}[theorem]{Lemma}
\newtheorem{lemmadefinition}[theorem]{Lemma-Definition}
\newtheorem{propositiondefinition}[theorem]{Proposition-Definition}
\newtheorem{proposition}[theorem]{Proposition}
\newtheorem{corollary}[theorem]{Corollary}

\newcommand{\CC}{\mathbb C}

\newcommand{\FF}{\mathbb F}

\newcommand{\NN}{\mathbb N}

\newcommand{\PP}{\mathbb P}
\newcommand{\QQ}{\mathbb Q}
\newcommand{\RR}{\mathbb R}

\newcommand{\ZZ}{\mathbb Z}
\newcommand{\cA}{\mathcal A}

\newcommand{\cE}{\mathcal E}

\newcommand{\cL}{\mathcal L}
\newcommand{\cM}{\mathcal M}

\newcommand{\cO}{\mathcal O}

\newcommand{\cW}{\mathcal W}

\def\myarrow{\ \hbox to 2em{\leaders
\hbox to 0.5ex{\hss\raise 0.55ex\hbox to 0.3ex{\hrulefill}\hss}
\hfill\,\llap{$>$}}\ }


\title{ Mac Williams identities  and  polarized  Riemann-Roch  conditions
 \footnote{{\it  2010 Mathematics Subject Classification:} Primary: 94B27,  14G50; Secondary:    11T71.   \protect\\
 {\it Key words and phrases:}
 Mac Williams identities,
 Duursma's reduced polynomial of an additive code,
 Polarized Riemann-Roch Conditions \protect\\
Supported by   Contract    57/12.04.2016   with the Scientific Foundation of Kliment Ohridski  University of Sofia. } }
\author{   Azniv Kasparian, Ivan Marinov }
\date{      }

\begin{document}
\maketitle

\thispagestyle{empty}

\begin{abstract}
 The present note establishes the equivalence of Mac Williams identities for an additive code $C$ and its dual $C^{\perp}$ to Polarized Riemann-Roch Conditions on their $\zeta$-functions.
 In such a way, the duality of additive codes appears  to be a  polarized form of the Serre duality on a smooth irreducible projective curve.
\end{abstract}

\section{ Introduction }

Let $(G, +)$ be a finite abelian group and $(\widehat{G}, .)$ be the group of the multiplicative characters $\pi : (G, +) \rightarrow (\CC ^*, .)$ of $G$.
The subgroups $(C, +)$ of $(G^n, +)$ are called additive codes.
Any linear code $C \subset \FF_q ^n$ over a finite field $\FF_q$ is an additive code in the $n$-th Cartesian power of the finite abelian group
$(\FF_q, +) \simeq (\ZZ_p ^m, +)$ with $p = {\rm char} \FF_q$, $q = p^m$.
The dual code
$$
C^{\perp} := \{ \pi = (\pi _1, \ldots , \pi _n) \in \widehat{G}^n \, \vert \, \pi ( a) = 1, \, \forall a \in C   \}
$$
is a subgroup $(C^{\perp}, .)$ of $(\widehat{G}^n, .) = (\widehat{G^n}, .)$ and can be viewed as an additive code over $\widehat{G}$.
If $\varepsilon: G \rightarrow \CC^*$ is the trivial character with $\varepsilon (g) = 1$ for $\forall g \in G$ then the Hamming weight on $G$ and $\widehat{G}$ are defined as
\begin{equation}    \label{HammingWeightG}
{\rm wt} : G \longrightarrow  \{ 0,1 \}, \ \ {\rm wt} (g) := \begin{cases}
0   &  \text{ if $g = 0_G$, }  \\
1  &  \text{ if $ g \neq 0_G$,  }
\end{cases}
\end{equation}
respectively,
\begin{equation}    \label{HammingWeightGPerp}
{\rm wt} : \widehat{G}  \longrightarrow  \{ 0,1 \}, \quad  {\rm wt} ( \pi) :=  \begin{cases}
0   &  \text{ if $\pi = \varepsilon$,  } \\
1  &  \text{ if $\pi \neq \varepsilon$.  }
\end{cases}
\end{equation}
For an arbitrary $n \in \NN$, these extend to
\begin{equation}    \label{HammingWeightGn}
{\rm wt} : G^n \longrightarrow \{ 0, 1, \ldots , n \}, \quad {\rm wt} (a_1, \ldots , a_n) := \sum\limits _{i=1} ^n {\rm wt} (a_i), \ \ \mbox{  respectively}
\end{equation}
\begin{equation}    \label{HammingWeightGPerpN}
{\rm wt} : \widehat{G} ^n \longrightarrow  \{ 0, 1, \ldots , n \}, \quad {\rm wt} ( \pi _1, \ldots , \pi _n) := \sum\limits _{i=1} ^n {\rm wt} ( \pi _i).
\end{equation}
That enables to define the homogeneous weight enumerator
$$
\cW_C(x,y) := \sum\limits _{c \in C} x^{n - {\rm wt} (c)} y^{{\rm wt} (c)}
$$
of an additive code $(C, +) \leq (G^n , +)$.
Similarly, the dual code
$$
C^{\perp} := \{  \pi = (\pi _1, \ldots , \pi _n) \in \widehat{G}^n \, \vert \,  \pi (a) = 1, \forall a \in C \} \simeq (\widehat{ G^n  /  C },.)
$$
has homogeneous weight enumerator
$$
\cW_{C^{\perp}} := \sum\limits _{\pi \in C^{\perp}} x^{n - {\rm wt} ( \pi)} y^{{\rm wt} (\pi)}.
$$
According to \cite{Delsarte} (see also \cite{Wood} or \cite{Heide}), Fourier inversion formula for the function
$$
F: (G^n/C, +)  \longrightarrow \CC [x,y] ^{(n)},
$$
associating to a coset $a  + C \in (G^n/C,+)$ its homogeneous weight enumerator $F(a + C)$ provides Mac Williams identities
$$
\cW_{C^{\perp}} (x,y) = \frac{1}{|C|}  \cW_C (x + (|G|-1)y, x-y)
$$
for $\cW_C(x,y)$, $\cW_{C^{\perp}} (x,y)$.

Mac Williams initiates the study of the duality of linear codes by their weight distributions with respect to the Hamming weight in \cite{OriginalMW}.
Delsarte generalizes Mac Williams  results in \cite{Delsarte6} by the means of association schemes.
Zinoviev and Ericson's  \cite{ZinovievEricson} describes Mac Williams duality for additive codes $(C, +) \leq (G^n, +)$ and their duals $(C^{\perp}, .) \leq (\widehat{G}^n, .)$ with respect to isomorphic partitions of $G^n$ and $\widehat{G}^n$.
In \cite{Heide}  Gluering-Luerssen  proves Mac Williams identities for $(C^{\perp}, .) \leq (\widehat{G}^n, .)$ and $(C, +) \leq (G^n, +)$ with respect to an arbitrary partition with $M$ blocks  on $\widehat{G}^n$ and its Fourier transform, which is a partition on $G^n$.
In the case of $M=2$, her set up reduces to the Hamming weights on $G, \widehat{G}$ and $G^n, \widehat{G}^n$.
Mac Williams identities over finite Frobenius rings are studies by Greferath-Schmidt's \cite{GreferathSchmidt}, Honold-Landjev's \cite{HonoldLandjev}, Wood's \cite{Wood}, etc.

Let $C \subset \FF_q ^n$ be an $\FF_q$-linear $[n,k,d]$-code with dual
$$
C^{\perp} := \{ a = (a_1, \ldots , a_n) \in \FF_q ^n \, \vert \, \langle a, c \rangle = \sum\limits _{i=1} ^n a_i c_i =0, \ \ \forall c \in C \}
$$
of minimum distance $d^{\perp}$.
The deviation $g := n+1 -d-k \in \ZZ^{\geq 0}$ of the parameters of $C$ from the equality in the Singleton bound is called the genus of $C$.
In \cite{D1}, \cite{D2} Duursma introduces the  $\zeta$-polynomials $P_C(t), P_{C^{\perp}}(t) \in \QQ[t]$  of degree $\deg P_C(t) = \deg P_{C^{\perp}} (t) = g + g^{\perp} = n+2 - d - d^{\perp}$ and shows that Mac Williams identities  for $C, C^{\perp}$ are equivalent  to the functional equation
\begin{equation}     \label{MWDzetaPolynomials}
P_{C^{\perp}} (t) = P_C \left( \frac{1}{qt} \right) q^g t^{g + g^{\perp}}
\end{equation}
for their $\zeta$-polynomials.
Note that (\ref{MWDzetaPolynomials}) is a polarized form of the functional equation of the Hasse-Weil polynomial of a smooth irreducible projective curve of genus $g$, defined over $\FF_q$.
 The article \cite{PShW} of Pellikaan, Shen and van Wee sheds a light on this phenomenon.
 More precisely, \cite{PShW} shows that for an arbitrary $\FF_q$-linear code $C \subset \FF_q ^n$ there is a smooth irreducible projective curve $X / \FF_q \subset \PP^N ( \overline{\FF_q})$, defined over $\FF_q$, distinct rational points $P_1, \ldots , P_n \in X( \FF_q) := X \cap \PP^N ( \FF_q)$ and a divisor $G$ of $\FF_q(X)$, whose support is disjoint from the support of $D = P_1 + \ldots + P_n$, such that $X = \cE_D H^0 (X, \cO _X ( [G]))$ coincides with the image of the evaluation map
 $$
 \cE_D : H^0 (X, \cO _X ([G])) = \cL _X (G) \longrightarrow \FF_q ^n, \ \ \cE_D (f) = (f(P_1), \ldots , f(P_n)) \ \ \mbox{  for  } \ \ \forall f \in \cL _X(G).
 $$
 The kernel of $\cE_D$ coincides with $\cL_X (G-D)$ (cf. Proposition \ref{ProjInterpretationOrbitsOfEffectiveDivisors} (i) from section 5) and $C$ is isomorphic to the quotient space $\cL _X (G) / \cL _X (G-D)$ as a linear space over $\FF_q$.
 The dual code $C^{\perp}$ is isomorphic to the quotient space $\cL _X (K_X - G +D) / \cL _X(K_X-G)$, where $K_X$ stands for a canonical divisor of $X$.
 Under the Serre duality on $X$, the first cohomology group $H^1 (X, \cO _X ([E]))$ is isomorphic to the sections $\cL _X (K_X - E) = H^0 (X, \cO _X ( [K_X - E]))$ of the line bundle $\cO _X([E])$, corresponding to the divisor $K_X - E$.
 If we view the divisor $K_X - E$ as a Serre dual of $E$ then the presentations $C \simeq \cL _X(G) / \cL _X (G-D)$ and $C^{\perp} \simeq \cL _X (K_X - G +d) / \cL _X (K_X - G)$ of mutually dual linear codes are compatible with the Serre duality on $X$.
 From now on, let $l(E) : \dim _{\FF_q} \cL _X (E)$ be the dimension of the space $\cL _X(E) = H^0 (X, \cO _X([E]))$ of the global sections of $\cO _X ([E])$.
 The Riemann-Roch Theorem on $X$ is a numerical expression of the difference $l(G) - l (K_X - G)$ by topological invariants of $X, G$, i.e., by the genus $g$ of $X$ and the degree $m$ of $G$.
 Thus, it is reasonable the numerical relation between the weight distributions of $C$, $C^{\perp}$, provided by Mac Williams identities to be compatible with the Serre duality on $X$ and to play the role of the Riemann-Roch Theorem for $C$, $C^{\perp}$.

 The relation between the local Weil $\zeta$-function $\zeta _X(t)$ of $X$ and  Duursma's  $\zeta$-functions $\zeta _{C_i} (t) := \frac{P_{C_i} (t)}{(1-t)(1-qt)}$ of the linear codes $C_i = \cE _D \cL _X (G_i)$, associated with a complete set of representatives $G_i$, $1 \leq i \leq h$  of the linear equivalence classes of the divisors of $\FF_q (X)$ of degree $2g \leq m < n$ is noticed by Duursma in \cite{D1}, \cite{D2}.
 However, the algebraic-geometric representations $C = \cL _X (G)$ of arbitrary linear codes $C \subset \FF_q ^n$, constructed by Pellikaan, Shen and van Wee in \cite{LShW} tend to have $g > m >n$.
 As a result, if there exist $G_i$ with ${\rm Supp} (G_i) \cap {\rm Supp} (D) = \emptyset$ for $\forall 1 \leq i \leq h$ then the $\zeta$0functions of $C_i = \cE _D \cL _X (G_i)$ are related with the truncated local Weil $\zeta$-function $\zeta _X^{(m)} (t) = \frac{ P_X^{(m)} (t)}{(1-t)(1-qt)}$ of $X$.
 (If $P_X(t) \in \ZZ[t]$ is the Hasse-Weil polynomial of $X$ of degree $2g$ then $P_X^{(m)} (t)$ is the sum of the terms of $P_X(t)$ of degree $\leq m$.)  
Besides, if a linear equivalence class $G_i + {\rm div} \FF_q (X)$ of divisors of $\FF_Q (X)$ of degree $m$ has no representative $G_i$ with ${\rm Supp} (G_i) \cap {\rm Supp} (D) = \emptyset$ then the  evaluation map $\cE_D$ at $D$ does not act on $G_i + {\rm div} \FF_q (X)$ and the available $\zeta$-functions of algebraic-geometric codes do not reflect the information for the effective divisors from $G_i + {\rm div} \FF_q  (X)$.
For a detailed discussion of this kind of problems see \cite{K4}.

The aim of the present note is to understand Mac Williams duality of additive codes in terms of algebraic geometry.
It is completely independent and of a different, more formal nature than   the recent work \cite{Hugues} of  Randriambololona.
Note that the   Riemann-Roch Theorem 44  for linear codes $C, C^{\perp} \subset \FF_q^n$  from \cite{Hugues}   is stronger than  our Polarized Riemann-Roch Conditions ${\rm PRRC} _q (g,g^{\perp})$  on $\zeta _C(t)$, $\zeta _{C^{\perp}}(t)$, as far as it   implies the functional equation on $\zeta _C(t), \zeta _{C^{\perp}}(t)$,    which we show to be equivalent to ${\rm PRRC}_q (g, g^{\perp})$.
Mac Williams identities are used in Delsarte's \cite{Delsarte6}, Byrne-Greferath-Sullivan's \cite{ByrneGreferathSullivan} and other works for obtaining linear programming bounds on codes.
For applications in the engineering one can see ElKhamy-McEliece's  \cite{KhamyMcEliece} or Lu-Kumar-Yang's \cite{LuKumarYang}.

The main result of the present article is the equivalence  of Mac Williams identities for additive codes $(C,+) \leq (G^n, +)$, $(C^{\perp}, .) \leq ( \widehat{G}^n, .)$  to Polarized Riemann-Roch Conditions for their $\zeta$-functions
$$
\zeta _C(t) := \frac{P_C(t)}{(1-t)(1-|G|t)}, \ \ \zeta _{C^{\perp}} (t) := \frac{P_{C^{\perp}} (t)}{(1-t)(1-|G|t)}
$$
In such a way, Mac Williams duality of additive codes turns to be a polarized form of Serre duality from algebraic geometry.
A crucial step from the proof of the aforementioned equivalence is the study of the additive MDS-codes $({\rm MDS}(n,d), +) \lneq (G^n, +)$ of length $n$ and minimum distance $d$, defined as the ones of genus $g= n+1 -d - \log _{|G|} (|{\rm MDS} (n,d)|)=0$.
After showing that for any $(n-k)$-tuple of indices $\beta = \{ \beta _1, \ldots , \beta _{n-k} \} \varsubsetneq [n] :=  \{ 1, \ldots , n \}$ the puncturing (or erasing) $\Pi _{\beta} : ({\rm MDS} (n,d), +) \rightarrow (G^k, +)$ of the components, labeled by $\beta$ is an isomorphism, we compute explicitly the homogeneous weight enumerator $\cM _{n,d} (x,y)$ of ${\rm MDS} (n,d)$ and observe that it  depends only on $n$ and $d$.

Here is a synopsis of the article.
In section 2 we study the additive codes of genus $0$, called the additive MDS-codes.
After showing that the dual of an additive MDS-code ${\rm MDS} (n,d)$ of length $n$ and minimum distance $d>1$ is an additive MDS-code ${\rm MDS} (n, n+2-d)$ of length $n$ and minimum distance $n+2-d$, we establish that for any unordered $d$-tuple $\gamma \in \binom{[n]}{d}$ with entries from $[n]$ there are exactly $|G|-1$ words of ${\rm MDS} (n,d)$ with support $\gamma$.
Then we show that the shortening of the dual $(C^{\perp},.) \leq (\widehat{G}^n, .)$ of an arbitrary additive code $(C,+) \leq (G^n, +)$ at some component coincides  with the puncturing of $C$ at that component.
That enables to obtain explicitly the number $\cM _{n,d} ^{(s)}$ of the words of weight $d \leq s \leq n$ in an additive MDS-code ${\rm MDS} (n,d)$.
The third section introduces the $\zeta$-polynomial $P_C(t) \in \QQ[t]$ and Duursma's reduced polynomial $D_C(t) \in \QQ[t]$ of an additive code $(C, +) < (G^n, +)$ and expresses Mac Williams identities for $C, C^{\perp}$ as functional equations on $P_C (t), P_{C^{\perp}} (t)$ or, respectively, on $D_C(t), D_{C^{\perp}} (t)$.
The fourth section expresses the Riemann-Roch Theorem on a smooth irreducible projective curve $X$ of genus $g \geq 0$, defined over a finite field $\FF_q$ as (non-polarized) Riemann-Roch Conditions with base $q \in \NN$ and genus $g$ on the local Weil $\zeta$-function $\zeta _X(t)$ of $X$.
That motivates the notion of Polarized Riemann-Roch Conditions ${\rm PRRC} _q (g, g^{\perp})$ with base $q \in \NN$ and genera $g, g^{\perp} \in \ZZ ^{\geq 0}$ on a pair $\zeta (t), \zeta ^{\perp} (t) \in \CC [[t]]$ of formal power series in one variable $t$.
The functional equation for $D_C(t), D_{C^{\perp}} (t) \in \QQ[t]$, expressing Mac Williams identities for the weight distribution of $C, C^{\perp}$ of genera $g, g^{\perp}$ is shown to be equivalent to the Polarized Riemann-Roch Conditions ${\rm PRRC} _{|G|} (g, g^{\perp})$ on the $\zeta$-functions $\zeta _C(t)$, $\zeta _{C^{\perp}} (t)$.
As a consequence, the lower parts $\varphi _C (t) = \sum\limits _{i=0} ^{g-2} c_i t^{i} \in \QQ[t]$, $\varphi _{C^{\perp}} (t) = \sum\limits _{i=0} ^{g^{\perp} -2} c_i ^{\perp} t^{i} \in \QQ[t]$ of Duursma's reduced polynomials $D_C (t) = \sum\limits _{i=0} ^{g + g^{\perp} -2} c_i t^{i}$, $D_{C^{\perp}} (t) = \sum\limits _{i=0} ^{g + g^{\perp} -2} c_i ^{\perp} t^{i} \in \QQ[t]$ and the number $c_{g-1} = c^{\perp}_{g^{\perp}-1} \in \QQ$ turn to determine completely $D_C(t)$, $D_{C^{\perp}}(t)$.
The final fifth section discusses some averaging, algebraic-geometric  and probabilistic interpretations of the coefficients $c_i \in \QQ$ of Duursma's reduced polynomial $D_C (t) = \sum\limits _{i=0} ^{g + g^{\perp} -2} c_i t^{i} \in \QQ[t]$ of an additive code $(C,+) < (G^n, +)$, with a specific emphasis on the case of an $\FF_q$-linear code $C \subset \FF_q ^n$, when $C$ is not only a subgroup of $(\FF_q ^n, +)$ but also a subset with an $\FF_q^*$-action $\FF_q ^* \times C \rightarrow C$, $(\lambda ,a) \mapsto (\lambda a_1, \ldots , \lambda a_n)$, preserving the Hamming weight.
In general, $(|G|-1) c_i$ with $0 \leq i \leq g-1$ is shown to be the  average coordinates of an intersection of $C \setminus \{ 0_G^n \}$ with $n-d-i$ coordinate hyperplanes in $(G^n, +)$.
In the case of an $\FF_q$-linear code $C \subset \FF_q ^n$, the presence of an algebro-geometric realization $C = \cE _D \cL _X (G)$, established by Pellikaan, Shen and van Wee in \cite{PShW} allows to inter[ret the projectivization $\PP(C)$  as an $\cL _X(G-D)$-orbit space of an explicit finite set of effective divisors of $\FF_q (X)$.
As a result, the coefficients of $\zeta _C(t) = \sum\limits _{i=0} ^{\infty} \cA _i (C) t^{i}$ for $0 \leq i \leq g-1$ are shown  to be average cardinalities of appropriate $\cL _X (G-D)$-orbit spaces of effective divisors of $\FF_q (X)$.
In particular, $\binom{n}{d+i} \cA _i (C) \in \ZZ ^{\geq 0}$ for $\forall 0 \leq i \leq g + g^{\perp} -2$ and $\cA _i (C) \in \ZZ ^{\geq 0}$ for $\forall i > g + g^{\perp} -2$.
As a result,  Tsfasman-Vl${\rm \breve{a}}$dut-Nogin's coefficients $B_{d+i}$, $0 \leq i \leq g-1$  from
$\cW_C (x,y) = x^n + \sum\limits _{i=0} ^{n-d} B_i (x-y)^{i} y^{n-i}$, given in \cite{TVN},  turn to give the same information as the coefficients $c_i$ of Duursma's reduced polynomial $D_C(t)$, due to  $B_{d+i} = \binom{n}{d+i} (q-1)c_i$ for  $0 \leq i \leq g-1$.
We express $c_i$ with $0 \leq i \leq g-1$ by the probabilities of $a \in G^n$ of weight $d \leq {\rm wt} (a) = s \leq d+i$ to belong to $C$.
Similarly, $c_i$ with $g \leq i \leq g + g^{\perp} -2$ are related to the probabilities of $\pi \in \widehat{G}^n$ of weight $d^{\perp} \leq {\rm wt} ( \pi) = s \leq n-d-i$ to belong to $C^{\perp}$.
Finally, the sum of the probabilities $\overline{p}_a ^{(d+i)}$ of a $(d+i)$-tuple of indices to contain the support of some $a \in C \setminus \{ 0_G ^n \}$ is shown to be $(|G|-1) c_i$ for $0 \leq i \leq g-1$, while the sum of the probabilities $\overline{p} _{\pi} ^{(n-d-i)}$ of an $(n-d-i)$-tuple of indices to contain $\pi \in C^{\perp} \setminus \{ \varepsilon \}$ turns to be $(|G|-1) c_i |G| ^{g-i+1}$ for $g \leq i \leq g^{\perp} -2$.
In the case of $\FF_q$-linear codes, the factor $|G|-1$ disappears by replacing $a \in C \setminus \{ 0_{\FF_q}^n \}$ with $[a] \in \PP(C) \subset \PP ( \FF_q ^n)$ and $\pi \in C^{\perp} \setminus \{ 0_{\FF_q^n} \}$ with $[\pi] \in \PP (C^{\perp}) \subset \PP ( \FF_q ^n)$.

\section{ Additive MDS weight enumerators }

If $(G,+) \neq \{ 0_G \}$ is an additively  written   non-zero finite abelian group then the homomorphisms $\pi : (G, +) \rightarrow ( \CC^*, .)$ are called multiplicative characters of $G$.
If $G$ is of order $m$ then $\pi (G)$ consists of   $m$-th roots of unity  and, in particular, $\pi$ maps to the subgroup $(S^1,.)$ of $(\CC^*, .)$, supported by the unit circle $S^1 := \{ z \in \CC \, \vert \, z \overline{z} =1 \}$.
The set $\widehat{G}$ of the multiplicative characters of $G$ is a group with respect to the pointwise multiplication
$$
\chi \pi : G \longrightarrow S^1, \quad (\chi \pi) (g) := \chi (g) \pi (g) \ \ \mbox{  for } \ \ \forall g \in G, \ \ \forall \chi, \pi \in \widehat{G}.
$$
The neutral element of this group is the trivial character
$$
\varepsilon : G \longrightarrow \{ 1 \},  \ \ \varepsilon (g) =1 \ \ \mbox{  for } \ \ \forall g \in G.
$$
For an arbitrary $n \in \NN$, the subgroups $(C,+)$ of $(G^n, +)$ are called additive codes of length $n$.
With respect to the Hamming weight on $G^n$, defined by (\ref{HammingWeightG}) and (\ref{HammingWeightGn}), there is a unique word $0_G ^n \in G^n$ of weight $0$, which belongs to any additive code $(C,+) < (G^n, +)$.
The minimal non-zero weight
$$
d  := \min \{ {\rm wt} (c) \in \NN \, \vert \, c \in C \setminus \{ 0_G ^n \} \}
$$
of a word of $C \neq \{ 0_G ^n \}$ is called the minimum weight of $C$ or the minimum distance of $C$.
If $C = \{ 0_G ^n \}$ is the zero code, we assume that $d = 0$.
As far as the Hamming distance
$$
d: C \times C \longrightarrow \ZZ ^{\geq 0}, \quad d(a,b) := {\rm wt} (a-b)
$$
is a metric, the decoding of an additive code of minimum weight $d \in \NN$ with at most $\left[  \frac{d-1}{2} \right]$ perturbed symbols is unique.

Here is a simple lemma on the puncturing and shortening of additive codes and their duals.
The puncturing $\Pi _i  : (C, +) \rightarrow (G^{n-1}, +)$ of an additive code $(C, +) \leq (G^n, +)$ at the $i$-th component is the group homomorphism, deleting the $i$-th component of each word $c \in C$, $\Pi _i (c_1, \ldots , c_n) = (c_1, \ldots , c_{i-1}, c_{i+1}, \ldots , c_n)$.
 By its very definition, the shortening $S_i : (C, +) \rightarrow (G^{n-1}, +)$ at the $i$-th component does not act on $c \in C$ with $c_i \neq 0$ and reduces to the puncturing $\Pi _i$ on the words $c \in C$ with $c_i =0$.
The statement and the proof are the same as for $\FF_q$-linear codes, as exposed in \cite{HuffmanPless}.
We give the proof for completeness.

\begin{lemma}   \label{PuncturingAnd Shortening}
Let $(C, +) \leq (G^n, +)$ be an additive code with dual $(C^{\perp}, .) \leq ( \widehat{G}^n, .)$.
 Denote by  $S_i : C \rightarrow G^{n-1}$, respectively, $S_i : C^{\perp} \rightarrow \widehat{G}^{n-1}$   the shortenings at the   component, labeled by some
  $1 \leq i \leq n$ and put  $\Pi _i : (C, +) \rightarrow (G^{n-1}, +)$, respectively, $\Pi _i : (C^{\perp}, .) \rightarrow ( \widehat{G}^{n-1}, .)$  for  the puncturings at $i$.
Then
\begin{equation}   \label{PuncturingAndShortening}
S_i (C^{\perp}) = \Pi _i (C) ^{\perp} \ \ \mbox{\rm   and  } \ \ \Pi _i (C^{\perp}) = S_i (C) ^{\perp}.
\end{equation}
\end{lemma}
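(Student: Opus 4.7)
The plan is to unpack definitions on both sides of each identity and to use the single key observation that the trivial character $\varepsilon$ satisfies $\varepsilon(g)=1$ for all $g\in G$, so that inserting or deleting $\varepsilon$ in the $i$-th slot does not affect the pairing $\pi(a)=\prod_{j=1}^{n}\pi_j(a_j)$.

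First I would verify $S_i(C^{\perp})\subseteq \Pi_i(C)^{\perp}$. A character $\pi'=(\pi_1,\ldots,\pi_{i-1},\pi_{i+1},\ldots,\pi_n)\in S_i(C^{\perp})$ lifts, by definition of shortening, to $\pi=(\pi_1,\ldots,\pi_{i-1},\varepsilon,\pi_{i+1},\ldots,\pi_n)\in C^{\perp}$. For any $c=(c_1,\ldots,c_n)\in C$ one has $\prod_{j=1}^{n}\pi_j(c_j)=1$, and since $\pi_i(c_i)=\varepsilon(c_i)=1$, the product over $j\neq i$ is $1$; this is exactly the annihilation condition against $\Pi_i(c)\in \Pi_i(C)$. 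Conversely, if $\pi'\in\Pi_i(C)^{\perp}$, I insert $\varepsilon$ in position $i$ to get $\pi\in\widehat{G}^{n}$; for every $c\in C$ the factor $\pi_i(c_i)=1$ again makes $\pi(c)=\prod_{j\neq i}\pi_j(c_j)=1$, so $\pi\in C^{\perp}$ and $\pi'=S_i(\pi)\in S_i(C^{\perp})$.

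For the second identity $\Pi_i(C^{\perp})=S_i(C)^{\perp}$, I would argue symmetrically. A typical $\pi'=\Pi_i(\pi)$ with $\pi\in C^{\perp}$ must satisfy, for every $c\in C$ with $c_i=0_G$, the relation $\prod_{j\neq i}\pi_j(c_j)=\prod_{j=1}^{n}\pi_j(c_j)=1$ (since $\pi_i(0_G)=1$), which says precisely that $\pi'$ annihilates $S_i(C)$. For the reverse inclusion one picks $\pi'\in S_i(C)^{\perp}$ and lifts it to $\pi\in\widehat{G}^{n}$ by inserting an arbitrary character in position $i$; one then adjusts that $i$-th component so that $\pi(c)=1$ for all $c\in C$, using that $\pi_i(c_i)$ depends only on $c_i\in G$ and that $\pi'$ already handles the contribution of the remaining coordinates on the shortened code $S_i(C)$. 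Alternatively, and more cleanly, one invokes the biduality $(C^{\perp})^{\perp}=C$ for additive codes (under the canonical identification $\widehat{\widehat{G}}\simeq G$) and applies the first identity with $C$ replaced by $C^{\perp}$, obtaining $S_i(C)=\Pi_i(C^{\perp})^{\perp}$; one more dualization yields $S_i(C)^{\perp}=\Pi_i(C^{\perp})$.

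There is no genuine obstacle here — the content is purely bookkeeping — the only subtle point is the direct-lift argument for the non-trivial inclusion of the second identity, which is cleanest via biduality. I would therefore state both inclusions of the first identity by hand and derive the second from the first together with $(C^{\perp})^{\perp}=C$, noting that this reciprocity of additive codes follows from Pontryagin duality applied to $G^n/C$ and $\widehat{G^n/C}\simeq C^{\perp}$.
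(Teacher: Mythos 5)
Your proposal is correct and follows essentially the same route as the paper: both inclusions of $S_i(C^{\perp})=\Pi_i(C)^{\perp}$ are checked directly by inserting or deleting the trivial character $\varepsilon$ in the $i$-th slot, the inclusion $\Pi_i(C^{\perp})\subseteq S_i(C)^{\perp}$ is verified using $\pi_i(0_G)=1$, and the reverse inclusion is deduced by applying the first identity to $C^{\perp}$ together with the biduality $(C^{\perp})^{\perp}\simeq C$ (the paper then compares cardinalities to turn the resulting isomorphism into the stated equality). The direct-lift alternative you mention would also work, but only after extending a character from the subgroup $\{c_i\,\vert\,c\in C\}\leq G$ to all of $G$, so your decision to fall back on biduality is exactly the paper's choice.
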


\begin{proof}

Towards the inclusion $S_i (C^{\perp}) \subseteq \Pi _i (C)^{\perp}$, it suffices to note that for an arbitrary $\pi \in C^{\perp}$ with $\pi _i = \varepsilon$ and an arbitrary $c \in C$, one has
$$
S_i ( \pi) ( \Pi _i (c)) = \prod\limits _{j \neq i} \pi _j (c_j) = \left[ \prod\limits _{j \neq i} \pi _j ( c_j) \right] \pi _i ( c_i) = \prod\limits _{j=1} ^n \pi _j ( c_j) = \pi (c) =1.
$$
For the opposite inclusion $\Pi _i (C) ^{\perp} \subseteq S_i ( C^{\perp})$, let us choose $\chi \in (\Pi _i (C)^{\perp},.) \leq ( \widehat{G} ^{n-1}, .)$ and extend it to $\pi \in \widehat{G}^n$ with $\pi _j := \chi _j$ for $1 \leq j \leq i-1$, $\pi _i := \varepsilon$ and $\pi _j := \chi _{j-1}$
 for $i +1 \leq j \leq n$.
Then $\pi \in C^{\perp}$, according to
$$
1 = \chi ( \Pi _i (c)) = \left[ \prod\limits _{j=1} ^{i-1} \chi _j ( c_j) \right] \left[ \prod\limits _{j = i+1} ^n \chi _{j-1} ( c_j) \right] =
\left[ \prod\limits _{j \neq i} \pi _j ( c_j) \right] \varepsilon ( c_i) = \pi (c),
$$
for $\forall c \in C$.
Thus,  $\chi = S_i ( \pi ) \in S_i (C^{\perp})$ and $\Pi _i  (C^{\perp}) \subseteq S_i (C^{\perp})$.
That justifies the coincidence $S_i (C^{\perp}) = \Pi _i (C)^{\perp}$.

In order to check that $\Pi _i (C^{\perp})   \subseteq S_i (C)^{\perp}$, let $\pi \in C^{\perp}$ and $c \in C$ with $c_i = 0_G$.
Then
$$
\Pi _i ( \pi) (S_i (c)) =
 \prod\limits _{j \neq i} \pi _j ( c_j) =
  \left[ \prod\limits _{j \neq i} \pi _j ( c_j) \right]  \pi _i ( 0_G ) =
\prod\limits _{j=1} ^n \pi _j ( c_j) = \pi (c) =1
$$
 reveals that $\Pi _i ( \pi) \in S_i (C)^{\perp}$.
Towards the coincidence $\Pi _i (C^{\perp}) = S_i (C)^{\perp}$, let us note that the application of $S_i (C^{\perp}) = \Pi _i (C)^{\perp}$ to $C^{\perp}$ provides $S_i (C) \simeq S_i ( (C^{\perp}) ^{\perp}) = \Pi _i (C^{\perp}) ^{\perp}$, after combining with the natural group isomorphism $C \simeq (C^{\perp}) ^{\perp}$.
Taking the duals of both sides, one concludes that $S_i (C)^{\perp} \simeq \left[ \Pi _i (C^{\perp}) ^{\perp} \right] ^{\perp} \simeq \Pi _i (C^{\perp})$, whereas $\left| S_i (C) ^{\perp} \right| = \left| \Pi _i (C^{\perp}) \right|$ and $\Pi _i (C^{\perp}) = S_i (C)^{\perp}$.

\end{proof}

The next elementary lemma defines the genus of an additive code and reminds the Singleton Bound for additive codes.
The proof is elementary and coincides with the one for the Singleton Bound of linear codes over finite fields.
We provide it for completeness, as far as we have not found an available reference on it.

\begin{lemmadefinition}    \label{NonNegativeGermsGenus}
Any additive code $(C, +) \leq (G^n, +)$ of minimum distance $d \in \ZZ^{\geq 0}$ over a non-trivial finite abelian  group $G  \neq 0_G$ has non-negative genus
 $$
 g := n+1 - d - \log _{|G|} (|C|).
 $$
\end{lemmadefinition}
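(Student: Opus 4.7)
The plan is to establish the non-negativity of $g$ as the standard Singleton bound for additive codes, proved by iterated puncturing. The statement $g \geq 0$ is equivalent to the inequality $|C| \leq |G|^{n+1-d}$, so the whole task is to produce this upper bound.

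First I would dispose of the degenerate case. If $C = \{0_G^n\}$, then by the convention stated just above the lemma one has $d = 0$ and $|C| = 1$, so $\log_{|G|}|C| = 0$ and $g = n+1 \geq 0$. Assume from now on that $C \neq \{0_G^n\}$, so $d \geq 1$. Since the Hamming weight on $G^n$ takes values in $\{0,1,\ldots,n\}$, any non-zero word has weight at most $n$, hence $1 \leq d \leq n$ and the integer $d-1$ lies in $\{0,1,\ldots,n-1\}$.

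Next I would choose any subset $\alpha = \{\alpha_1, \ldots, \alpha_{d-1}\} \subset [n]$ of cardinality $d-1$ and consider the iterated puncturing
$$\Pi_\alpha : (C,+) \longrightarrow (G^{n-d+1},+),$$
obtained as the composition of the single-component puncturings $\Pi_{\alpha_1}, \ldots, \Pi_{\alpha_{d-1}}$ introduced before Lemma \ref{PuncturingAnd Shortening}. This is a group homomorphism. The key observation is that $\Pi_\alpha$ is injective: if $c, c' \in C$ satisfy $\Pi_\alpha(c) = \Pi_\alpha(c')$, then the difference $c - c' \in C$ vanishes on every index outside $\alpha$, so its support is contained in $\alpha$ and
$$\mathrm{wt}(c-c') \leq |\alpha| = d-1 < d.$$
By the definition of the minimum distance $d$ of $C$, this forces $c - c' = 0_G^n$, i.e., $c = c'$.

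Finally, since $\Pi_\alpha(C) \leq G^{n-d+1}$ and $\Pi_\alpha$ is injective, I obtain
$$|C| = |\Pi_\alpha(C)| \leq |G^{n-d+1}| = |G|^{n-d+1}.$$
Taking $\log_{|G|}$ of both sides yields $\log_{|G|}|C| \leq n+1-d$, equivalently $g = n+1 - d - \log_{|G|}|C| \geq 0$, which is the claim. There is no real obstacle in this proof — the only point requiring care is ensuring the puncturing set $\alpha$ has the right size and that the inequality $d \leq n$ is justified before puncturing; both follow from the setup above.
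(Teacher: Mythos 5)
Your proof is correct and is essentially the paper's own argument in different language: the paper quotients $C$ by the subgroup $G' = G^{d-1}\times 0_G^{n-d+1}$ and observes that $C\cap G'=\{0_G^n\}$ by minimality of $d$, which is exactly your statement that the puncturing $\Pi_\alpha$ at a $(d-1)$-set of coordinates is injective on $C$; both then conclude $|C|\leq |G|^{n+1-d}$. Your explicit treatment of the zero code (where $d=0$ and the construction with $d-1$ coordinates degenerates) is a small point of care that the paper omits, but it does not change the approach.
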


\begin{proof}

Let us put $q := |G|$ and note that the subgroup $(G':= G^{d-1} \times 0_G ^{n-d+1}, +) < (G^n, +)$ has trivial intersection $C \cap G'= \{ 0_G ^n \}$ with $C$.
If $G'' := G'+ C$ is the subgroup  of $(G^n, +)$, generated by $G'$ and $C$ then any element of the quotient group $G'' /G'$ admits a representative from $0_G ^{d-1} \times G^{n-d+1}$.
In particular, $G'' / G'$ is of order $\left| G'' / G'\right| \leq \left| 0_G ^{d-1} \times G^{n-d+1} \right| = q^{n-d+1}$.
The natural projection
$$
\varphi : (C, +) \longrightarrow (G'' / G', +), \quad \varphi (c) := c + G'\ \ \mbox{  for }  \ \ \forall c \in C
$$
is a group  homomorphism with kernel $\ker \varphi = C \cap G'= \{ 0_G ^n \}$.
Therefore $\varphi$ is injective and
$$
\left|C \right| = \left| \varphi (C) \right| \leq \left| G'' / G' \right|  \leq q^{n-d+1}.
$$
The real logarithmic function with base $q = |G| >1$ is increasing, so that $\log _q (|C|) \leq n-d+1$ and $g \geq 0$.

\end{proof}

Among the additive codes $(C, +) \leq (G^n, +)$ over $G$ of length  $n$ and cardinality $k = \log _{|G|} (|C|)$, the code $C_o$ of genus $g=0$ has unique decoding up to maximal possible number  $\left[ \frac{n+1-k}{2} \right]$ of perturbed symbols,  we say that $C_o$ is maximum distance separable or an additive MDS-code and denote it by ${\rm MDS} (n, n+1-k)$.
Here is another trivial result, which will be used in the sequel.

\begin{lemma}    \label{MDSDualIdMDS}
If $(C, +) \lneq (G^n, +)$ is a non-trivial additive MDS-code of cardinality $|C| = |G|^k$  for some  $k \in \RR$, $k < n$ then the dual code
 $C^{\perp} := \{ \pi \in \widehat{G} ^n \, \vert \, \pi (c) = 1, \forall c \in C   \} \simeq \widehat{G^n /C}$ of  cardinality
  $\left| C^{\perp} \right| = \frac{|G|^n}{|C|} = |G|^{n-k}$ is ${\rm MDS} (n, k+1)$.
\end{lemma}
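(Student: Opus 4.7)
The plan is to show $d^{\perp} = k+1$, since the cardinality $|C^{\perp}| = |G|^{n-k}$ already matches the genus-zero condition $n+1-(k+1)-(n-k)=0$ required of ${\rm MDS}(n,k+1)$. By Lemma-Definition \ref{NonNegativeGermsGenus} applied to $C^{\perp}$, we automatically have the upper bound $d^{\perp} \leq n+1-\log_{|G|}|C^{\perp}| = k+1$, so only the lower bound $d^{\perp} \geq k+1$ requires work.

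For the lower bound, first I would exploit the MDS property of $C$ to show that puncturing $C$ at any $(n-k)$-tuple of coordinates produces all of $G^k$. Concretely, fix any subset $I \subset [n]$ with $|I| = n-k$, and consider $\Pi_I : C \to G^{[n]\setminus I}$ (the iterated puncturing at the coordinates in $I$). Its kernel consists of codewords of $C$ whose support is contained in $I$; since $|I|=n-k < n+1-k = d$, the only such codeword is $0_G^n$. Hence $\Pi_I$ is injective, and since $|C|=|G|^k = |G^{[n]\setminus I}|$, it is surjective, so $\Pi_I(C) = G^{[n]\setminus I}$. The dual of this full ambient group is trivial: $\Pi_I(C)^{\perp} = \{\varepsilon^{[n]\setminus I}\}$.

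Next I would iterate Lemma \ref{PuncturingAnd Shortening} over the $n-k$ indices of $I$ to obtain the multi-index identity $S_I(C^{\perp}) = \Pi_I(C)^{\perp}$, where $S_I$ denotes the successive shortening of $C^{\perp}$ at the coordinates in $I$. (Each single-index step is justified by the lemma, and the ordering of the coordinates does not matter since puncturings and shortenings at distinct indices commute.) Combined with the previous step, $S_I(C^{\perp})$ is the trivial code. Unwinding the definition of shortening: every $\pi \in C^{\perp}$ with $\pi_j = \varepsilon$ for all $j \in I$ must equal $\varepsilon^n$. Equivalently, every $\pi \in C^{\perp}$ whose support lies inside the complement $[n]\setminus I$ (a set of size $k$) is trivial. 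Since $I$ was arbitrary of size $n-k$, any non-trivial $\pi \in C^{\perp}$ has support of size strictly greater than $k$, i.e.\ ${\rm wt}(\pi) \geq k+1$. This yields $d^{\perp} \geq k+1$ and completes the proof.

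The only delicate point is the multi-index version of Lemma \ref{PuncturingAnd Shortening}, but this follows by a direct induction on $|I|$ once one observes that puncturing (respectively shortening) $C$ at one coordinate $i$ and then at a second coordinate $j \neq i$ gives the same code as puncturing (respectively shortening) in the reverse order, and that the single-coordinate identity $S_i(\,\cdot\,)^{\perp} = \Pi_i(\,(\cdot)^{\perp}\,)$ allows one to trade a shortening on $C^{\perp}$ for a puncturing on $C$ at each step.
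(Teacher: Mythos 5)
Your proof is correct and rests on the same key mechanism as the paper's: the puncturing of $C$ at any $n-k$ coordinates is injective (a nonzero word in its kernel would have weight $\leq n-k < d = n-k+1$), hence surjective onto $G^k$, so no non-trivial $\pi \in C^{\perp}$ can be supported on only $k$ coordinates; the paper merely packages this as a contradiction starting from a dual word of minimum weight $d^{\perp}\leq k$ and verifies the triviality of $\pi$ by a direct character computation rather than by iterating Lemma \ref{PuncturingAnd Shortening}. The one ingredient you take for granted that the paper actually proves is the cardinality $\left| C^{\perp}\right| = |G|^{n-k}$ (needed for both your upper bound $d^{\perp}\leq k+1$ and the genus count), which the paper derives from the isomorphism $C^{\perp}\simeq \widehat{G^n/C}$ --- a standard duality fact, but worth a line since it is part of the stated conclusion.
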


\begin{proof}

Note that an arbitrary character $\pi \in C^{\perp} \subset \widehat{G}^n$ provides a correctly defined homomorphism
$$
\overline{\pi} : (G^n/C, +) \longrightarrow  (S^1, .), \quad \overline{\pi} ( a+ C) := \pi (a) \ \ \mbox{  for  } \ \ \forall a \in G^n,
$$
according to $\pi (a +c) = \pi (a) \pi (c) = \pi (a)$ for $\forall c \in C \leq \ker \pi$.
Conversely,  any character $\overline{\pi} : (G^n / C, +) \rightarrow (S^1, .)$ of the quotient group $(G^n/ C, +) = (G^n, +) / (C, +)$ lifts to a character $\pi : (G^n, +) \rightarrow (S^1, .)$, $\pi (a) := \overline{\pi} (a + C)$ for $\forall a \in G^n$ with $C \leq \ker \pi$.
Therefore $\pi \in C^{\perp}$ and there is a group isomorphism $(C^{\perp}, .) \simeq ( \widehat{G^n /C}, .)$.
In particular, the cardinality $|C^{\perp}| = \left| G^n / C \right| = [ G^n : C] = \frac{\left| G^n \right|}{\left| C \right|} = |G|^{n-k}$.

The assumption $C \neq G^n$ implies that $C^{\perp} \neq \{ \varepsilon \}$ has minimum distance $d^{\perp} \in \NN$.
Note that  $k := \log _{|G|} (|C|) \in \RR$ is a real number,  $k < n$ and assume that the genus
$g^{\perp} := n+1 - d^{\perp} - \log _{|G|} (|C^{\perp}|) = n+1 - d^{\perp} - (n-k) = k+1 - d^{\perp} >0$ of $C^{\perp}$ is strictly  positive,
Then $d^{\perp} \leq k$ and for any $\pi \in C^{\perp}$ of weight  ${\rm wt} ( \pi) = d^{\perp}$ there exists a $k$-tuple of indices $\alpha  \in \binom{[n]}{k}$ with ${\rm Supp} ( \pi) \subseteq \alpha$.
Let $\beta = \neg \alpha := \{ 1, \ldots , n \} \setminus \alpha$ be the complement of $\alpha$ and
$$
\Pi _{\beta} : C \longrightarrow \Pi _{\beta} (C) \subseteq G^k
$$
 be the puncturing at $\beta$.
Note that $\Pi _{\beta}$ is a homomorphism of additive groups with $\ker ( \Pi _{\beta}) \cap C \neq \{ 0_G ^n \}$.
Otherwise, the restriction of $\Pi _{\beta}$ on $C$ is injective and $|G|^k = |C| = |\Pi _{\beta} (C)| \leq |G|^k$ implies that $\Pi _{\beta} (C) = G^k$.
However, $\pi$ has trivial components $\varepsilon$, labeled by $\beta$ and for $\forall c \in C$ there holds
$$
1 = \pi (c) = \prod\limits _{i=1} ^n \pi _i ( c_i) = \prod\limits _{i \in \alpha} \pi _i (c_i) = \Pi _{\beta} ( \pi) ( \Pi _{\beta} (c)).
$$
Thus, $\Pi _{\beta} ( \pi) \in \Pi _{\beta} (C)^{\perp} = (G^k) ^{\perp} = \{ \varepsilon ^k \}$, whereas $\pi = \varepsilon ^n$, contrary to the choice of $\pi \in C^{\perp}$ with ${\rm wt} ( \pi) = d^{\perp} \in \NN$.
That justifies $\ker \Pi _{\beta} \cap C \neq 0_G ^n$.
However, any word  $c \in (\ker \Pi _{\beta} \cap C) \setminus \{ 0_G ^n \}$ has support ${\rm Supp} (c) \subseteq \beta$ and, therefore, is of weight
 $1 \leq {\rm wt} (c) \leq  n-k$.
 By assumption, $C$ is of genus $g = n+1 -d-k =0$ or of minimum weight $d = n-k+1$ and does not contain non-zero words of weight $\leq n-k$.
 The contradiction justifies that $C^{\perp}$ is of genus $0$ or a  Maximum Distance Separable code ${\rm MDS} (n, k+1)$.

\end{proof}

In order to compute explicitly the weight distribution of an additive MDS-code $({\rm MDS} (n,d), +) < (G^n, +)$ of minimum distance $d$, one need one more lemma.

\begin{lemma}    \label{MDSWordMinWeight}
Let $(C, +) := {\rm MDS} (n,d) \neq \{ 0_G ^n \}$   be a non-zero additive MDS-code  of length $n$ and minimum distance $d \in \NN$, over a finite abelian group $G$ .
Then:

(i) in the case of $C \neq G^n$, for any $(d-1)$-tuple of indices $\beta \in \binom{[n]}{d-1}$ the puncturing $\Pi _{\beta} : (C, +) \rightarrow (G^{n+1-d}, +)$ is a group isomorphism onto $(G^{n+1-d}, +)$;

(ii) for any $\gamma \in \binom{[n]}{d}$ there are exactly $|G|-1$ words of $C$ with support $\gamma$;

(iii) $C$ has $\cM _{n,d} ^{(d)} = \binom{n}{d} (|G|-1)$ words of weight $d$.
\end{lemma}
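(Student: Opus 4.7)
The plan is to treat the three parts sequentially, deriving (ii) from (i) via the duality results of Lemmas~\ref{PuncturingAnd Shortening} and~\ref{MDSDualIdMDS}, and obtaining (iii) as an immediate summation.

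For (i), I would argue directly from the minimum-distance property. Any $c \in C$ with $\Pi_\beta(c) = 0_G^{n+1-d}$ has ${\rm Supp}(c) \subseteq \beta$, hence ${\rm wt}(c) \leq |\beta| = d-1$; by the minimum-distance hypothesis this forces $c = 0_G^n$, so $\Pi_\beta$ is an injective group homomorphism. The genus-zero condition $n+1-d-\log_{|G|}|C| = 0$ gives $|C| = |G|^{n+1-d} = |G^{n+1-d}|$, so the injection is necessarily onto.

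For (ii), fix $\gamma \in \binom{[n]}{d}$ and set $\beta := [n] \setminus \gamma$, of cardinality $n-d$. Let $N(\gamma)$ count the $c \in C$ with ${\rm Supp}(c) \subseteq \gamma$. Since $C$ admits no non-zero word of weight strictly less than $d$,
$$
N(\gamma) = 1 + \#\{c \in C : {\rm Supp}(c) = \gamma\},
$$
so it suffices to prove $N(\gamma) = |G|$. The $c \in C$ with ${\rm Supp}(c) \subseteq \gamma$ are in bijection with their restrictions to the coordinates in $\gamma$, i.e.\ with the iterated shortening $S_\beta(C) \subseteq G^d$. Iterating the one-index identity $\Pi_i(C^\perp) = S_i(C)^\perp$ of Lemma~\ref{PuncturingAnd Shortening} (puncturings and shortenings at distinct coordinates commute) yields $\Pi_\beta(C^\perp) = S_\beta(C)^\perp$, hence $S_\beta(C) = \Pi_\beta(C^\perp)^\perp$ upon taking duals. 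Now Lemma~\ref{MDSDualIdMDS} identifies $C^\perp$ with ${\rm MDS}(n, n+2-d)$, whose minimum distance $n+2-d$ strictly exceeds $|\beta| = n-d$, so the same minimum-distance argument as in (i) shows that $\Pi_\beta$ is injective on $C^\perp$. Consequently $|\Pi_\beta(C^\perp)| = |C^\perp| = |G|^{d-1}$, and therefore $|S_\beta(C)| = |G|^d / |G|^{d-1} = |G|$, as required.

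For (iii), part (ii) says that each of the $\binom{n}{d}$ subsets $\gamma \subseteq [n]$ of cardinality $d$ is realized as the support of exactly $|G|-1$ codewords; summation over $\gamma$ gives $\cM_{n,d}^{(d)} = \binom{n}{d}(|G|-1)$. The only mildly delicate point of the whole argument is the passage from the one-index duality of Lemma~\ref{PuncturingAnd Shortening} to its multi-index version, but this is a routine iteration once one observes that shortening and puncturing at distinct coordinates commute.
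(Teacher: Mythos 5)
Your parts (i) and (iii) coincide with the paper's argument: injectivity of $\Pi_\beta$ from the minimum distance, surjectivity from $|C|=|G|^{n+1-d}$, and summation over $\gamma$. For part (ii), however, you take a genuinely different route. The paper stays entirely inside $C$: for $i \in \gamma$ it punctures at the $(d-1)$-set $\beta = \gamma \setminus \{i\}$ and uses the isomorphism $\Pi_\beta : C \to G^{n+1-d}$ from (i) to pull back each tuple $a$ with $a_{\neg \gamma} = 0_G^{n-d}$ and $a_i \neq 0_G$ to a unique codeword, whose support is then forced to equal $\gamma$ by the minimum distance; this exhibits the $|G|-1$ words directly and needs nothing about $C^{\perp}$. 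You instead count the subgroup $\{c \in C \, \vert \, {\rm Supp}(c) \subseteq \gamma\} \simeq S_{\neg\gamma}(C)$ by dualizing: iterating $\Pi_i(C^{\perp}) = S_i(C)^{\perp}$ from Lemma \ref{PuncturingAnd Shortening} and using Lemma \ref{MDSDualIdMDS} to see that $\Pi_{\neg\gamma}$ is injective on $C^{\perp} = {\rm MDS}(n,n+2-d)$, you obtain $|S_{\neg\gamma}(C)| = |G|^{d}/|G|^{d-1} = |G|$, hence $|G|-1$ words of support exactly $\gamma$. Both arguments are sound; yours buys an early illustration of the puncturing--shortening duality at the cost of the (routine, but unproved in the paper) multi-index extension of Lemma \ref{PuncturingAnd Shortening}, while the paper's is more self-contained and does not presuppose Lemma \ref{MDSDualIdMDS}. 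One small caveat: Lemma \ref{MDSDualIdMDS} as stated requires $C \lneq G^n$, so your appeal to it silently excludes the case $C = G^n$ (i.e.\ $d=1$); the paper disposes of that case by a separate one-line count, and you should do the same, or note that for $C = G^n$ the dual is trivial, $\Pi_{\neg\gamma}$ is vacuously injective on it, and the cardinality count $|S_{\neg\gamma}(C)| = |G|$ still goes through.
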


\begin{proof}

(i) If $|C| < |G|^n$ then the additive MDS-code $C$ is of minimum distance $d = n+1 - \log _{|G|} (|C|) >1$ and the puncturing $\Pi _{\beta} : (G^n, +) \rightarrow (G^{n+1-d}, +)$ is non-trivial, i.e., non-identical.
Since $C$ is of minimum distance $d$, the kernel $\ker \Pi _{\beta} = \{ c \in G^n \, \vert \, {\rm Supp} (c) \subseteq \beta \}$ of $\Pi _{\beta}$ intersects $C$ at the origin $0_G^n$ alone  and $\Pi _{\beta} : C \rightarrow \Pi _{\beta} (C)$ is bijective.
Therefore $|\Pi _{\beta} (C)| = |C| = |G |^{n+1-d}= |G ^{n+1-d}|$ and $\Pi _{\beta} (C) = G^{n+1-d}$.
If $c, c'\in \Pi _{\beta} ^{-1} (a)$ for some $a \in G^{n+1 -d}$ then ${\rm Supp} (c - c') \subseteq \beta \in \binom{[n]}{d-1}$ and $c = c'$.
In such a way, the puncturing  $\Pi _{\beta} : (C, +) \rightarrow ( G^{n+1-d}, +)$ is shown to be  a group isomorphism.

(ii)  The additive code $C = G^n$ is of minimum distance $d=1$ and for any $\gamma \in \binom{[n]}{1}$ there are exactly $|G|-1$ words $(0_G^{\gamma -1}, g, 0_G^{n-\gamma}) \in  G^n$,  $g \in G \setminus \{ 0_G \}$ with support $\gamma$.
From now on, we assume that $C \lneq G^n$ is a proper subgroup of $(G^n, +)$.
If  $\gamma \in \binom{[n]}{d}$ and $i \in \gamma$ let $\beta := \gamma \setminus \{ i \} \in \binom{[n]}{d-1}$,
 $\delta := \neg \gamma = \{ 1, \ldots , n \} \setminus \gamma$ and recall from (i)  that the puncturing
 $\Pi _{\beta} : (C, +) \rightarrow (G^{n+1-d}, +)$ is a group isomorphism.
 In particular, for any $a \in G^{n+1-d}$ with $a_{\delta} = 0_G ^{n-d}$ and an arbitrary $a_i \in G \setminus \{ 0_G \}$ there is a unique $c \in C$ with $\Pi _{\beta} (c) =a$.
 Therefore, the support ${\rm Supp} (c) \subseteq \beta \cup \{ i \} = \gamma$ of $c$ is contained in $\gamma$ and since $C$ is of minimum distance $d$, there follows ${\rm Supp} (c) = \gamma$.
 In such a way, we have shown the existence of at least $|G|-1$ words of $C$ with support $\gamma$.
 Since any $c \in C$ with ${\rm Supp} (c) = \gamma$ is among the constructed ones, there are exactly $|G|-1$ words $c \in C$ with ${\rm Supp} (c) = \gamma$.

 (iii) is an immediate consequence of (ii) and the fact that  the number of the $d$-tuples $\gamma \in \binom{[n]}{d}$ is  $\binom{n}{d}$.

\end{proof}

The next proposition computes   the homogeneous weight enumerator $\cM_{n,d} (x,y)$ of an additive MDS-code ${\rm MDS} (n,d)$ of length $n$ and minimum distance $d$ over an arbitrary finite abelian group $G$.
That allows to express the homogeneous weight enumerator of an arbitrary additive code $(C, +) \leq (G^n, +)$ of minimum distance $d$ with dual $(C^{\perp}, .) \leq ( \widehat{G}^n, .)$ of  minimum distance $d^{\perp}$ as a $\QQ$-linear combination of the polynomials $\cM _{n,d} (x,y)$, $\cM _{n,d+1} (x,y)$, $\ldots$,
$\cM _{n, n+2- d^{\perp}}(x,y)$ (cf.Proposition-Definition \ref{ExistsUniqueDzetaPolynomial} from the next section).

\begin{proposition}   \label{MDSEnumerator}
An arbitrary additive MDS-code $({\rm MDS} (n,d), +) \leq (G^n, +)$ of minimum distance $d \in \NN$ has
\begin{equation}   \label{MDSWeightDistribution}
\cM _{n,d} ^{(s)} = \binom{n}{s} (|G|-1) \left[ \sum\limits _{i=0} ^{s-d} (-1)^{i} \binom{s-1}{i}  |G| ^{s-d-i} \right]
\end{equation}
words of weight $s$ for all  $  d \leq s \leq n$.
\end{proposition}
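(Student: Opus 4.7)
The plan is to reduce the count of codewords of weight exactly $s$ to the count of codewords whose support lies in a fixed $s$-subset, and then apply Möbius inversion on the subset lattice. For the first count, I would invoke the duality between shortening and puncturing (Lemma \ref{PuncturingAnd Shortening}) combined with the MDS property of the dual (Lemma \ref{MDSDualIdMDS}).

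\textbf{Step 1 (main content).} Fix $\gamma \in \binom{[n]}{s}$ with $d \le s \le n$, set $\beta = [n] \setminus \gamma$, and let $C_\gamma := \{ c \in C \,|\, \operatorname{Supp}(c) \subseteq \gamma \}$, which is canonically isomorphic to the shortening $S_\beta(C) \leq (G^s,+)$. The claim is that $|C_\gamma| = |G|^{s-d+1}$. By Lemma \ref{MDSDualIdMDS}, $C^\perp$ is an additive MDS-code with $|C^\perp| = |G|^{d-1}$ and minimum distance $d^\perp = n+2-d$. Since $|\beta| = n-s \le n-d = d^\perp - 2 < d^\perp$, no nonzero character in $C^\perp$ can have support inside $\beta$, so $\Pi_\beta|_{C^\perp}$ is injective and $|\Pi_\beta(C^\perp)| = |G|^{d-1}$. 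By Lemma \ref{PuncturingAnd Shortening} one has $S_\beta(C)^\perp = \Pi_\beta(C^\perp)$, and the standard identity $|A|\cdot|A^\perp| = |G|^s$ for subgroups $A \le (G^s,+)$ yields $|S_\beta(C)| = |G|^{s-d+1}$, as needed. (The degenerate case $C = G^n$, i.e. $d=1$, gives $|C^\perp|=1$ and recovers $|C_\gamma| = |G|^s$.)

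\textbf{Step 2 (Möbius inversion).} Writing $B^{=}(\gamma')$ for the number of codewords with support exactly $\gamma'$, one has $|C_\gamma| = \sum_{\gamma' \subseteq \gamma} B^{=}(\gamma')$. Möbius inversion on the Boolean lattice gives
$$
B^{=}(\gamma) \;=\; \sum_{\gamma' \subseteq \gamma} (-1)^{|\gamma|-|\gamma'|}\, |C_{\gamma'}| \;=\; \sum_{j=0}^{s} (-1)^{s-j} \binom{s}{j} N_j,
$$
where $N_j := |G|^{\max(j-d+1,\,0)}$ depends only on $j$. In particular $B^{=}(\gamma)$ depends only on $s$, and summing over the $\binom{n}{s}$ subsets gives $\cM_{n,d}^{(s)} = \binom{n}{s}\, B^{=}(\gamma)$.

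\textbf{Step 3 (combinatorial simplification).} Using $\sum_{j=0}^{s}(-1)^{s-j}\binom{s}{j} = 0$ (for $s \ge 1$), the contribution of the constant values $N_j = 1$ ($0 \le j \le d-1$) can be absorbed, yielding
$$
\cM_{n,d}^{(s)} \;=\; \binom{n}{s} \sum_{j=d}^{s} (-1)^{s-j} \binom{s}{j}\bigl(|G|^{j-d+1} - 1\bigr).
$$
Factoring $|G|^{j-d+1}-1 = (|G|-1)\sum_{l=0}^{j-d}|G|^l$, substituting $i=s-j$, and reading off the coefficient of $|G|^{s-d-i}$ reduces the identity to the well-known partial-sum identity $\sum_{k=0}^{i}(-1)^k\binom{s}{k} = (-1)^i \binom{s-1}{i}$ (telescoping from Pascal's rule), which gives exactly (\ref{MDSWeightDistribution}).

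\textbf{Main obstacle.} The conceptual core is Step 1: one has to realize the codewords-with-support-in-$\gamma$ count as a shortening of $C$, convert it via the shortening–puncturing duality to a puncturing of $C^\perp$, and exploit that the minimum distance of $C^\perp$ is large enough to force injectivity of that puncturing. Once $|C_\gamma| = |G|^{s-d+1}$ is in hand, the rest is bookkeeping with Möbius inversion and binomial identities.
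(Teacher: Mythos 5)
Your proof is correct, but it follows a genuinely different route from the paper's. The paper first computes $\cM_{n,1}^{(s)}$ for $G^n$ directly and then runs an induction on the length $n$: it observes that the sum of all $n$ shortenings acts on the homogeneous weight enumerator as the partial derivative $\frac{\partial}{\partial x}\cM_{n,d}(x,y)$, that each $S_i(C)$ is again MDS of the same minimum distance (via $\left|\Pi_i(C^{\perp})\right|=\left|C^{\perp}\right|$ and Lemma \ref{PuncturingAnd Shortening}), and then extracts $\cM_{n,d}^{(s)}$ from the identity $(n-s)\cM_{n,d}^{(s)}=n\binom{n-1}{s}\mu_d^{(s)}$. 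You instead prove the non-inductive structural fact that the subgroup $C_{\gamma}$ of words supported inside any $\gamma\in\binom{[n]}{s}$ has order exactly $|G|^{s-d+1}$ --- obtained, as in the paper's own use of duality but pushed further, from $S_{\beta}(C)^{\perp}=\Pi_{\beta}(C^{\perp})$ together with the injectivity of $\Pi_{\beta}$ on the dual MDS code because $|\beta|=n-s<d^{\perp}=n+2-d$ --- and then recover the exact-support counts by M\"obius inversion on the Boolean lattice and the telescoping identity $\sum_{k=0}^{i}(-1)^{k}\binom{s}{k}=(-1)^{i}\binom{s-1}{i}$. The only points worth flagging, neither of which is a gap, are that you implicitly iterate Lemma \ref{PuncturingAnd Shortening} from a single coordinate to a set $\beta$ of coordinates (routine), and that the degenerate case $C=G^n$ falls outside Lemma \ref{MDSDualIdMDS} but is disposed of directly, as you note. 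Your argument is essentially the classical MDS weight-distribution computation and has the advantage of yielding the stronger local statement $|C_{\gamma}|=|G|^{s-d+1}$ (a refinement of Lemma \ref{MDSWordMinWeight}(ii)) with no induction; the paper's derivative/shortening argument is less standard but stays entirely within the generating-function formalism it uses throughout, which sets up the later manipulations with $\cM_{n,d}(x,y)$.
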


\begin{proof}

Let $q := |G|$ be the order of $G$.
If ${\rm MDS} (n,d) = G^n$ then $d=1$ and for any $1 \leq s \leq n$ there are $\binom{n}{s}$ subsets $\gamma \subseteq \{ 1, \ldots , n \}$ of cardinality $|\gamma| =s$.
  For   any such $\gamma$ there are $(q-1)^s$ words $( a_{\gamma}, a_{\neg \gamma} = 0_G ^{n-s}) \in G^n$, $a_{\gamma _1}, \ldots , a_{\gamma _s} \in G \setminus \{ 0_G \}$ with support $\gamma$.
Therefore $\cM _{n,1} ^{(s)} = \binom{n}{s} (q-1)^s$ for all $1 \leq s \leq n$.
Bearing in mind that
\begin{align*}
\binom{n}{s} (q-1) \left[ \sum\limits _{i=0} ^{s-d} (-1)^{i} \binom{s-1}{i} q^{s-d-i} \right] =
\binom{n}{s} (q-1) \left[ \sum\limits _{i=0} ^{s-1} (-1)^{i} \binom{s-1}{i} q^{s-1-i} \right] = \\
\binom{n}{s} (q-1) (q-1)^{s-1} = \binom{n}{s} (q-1)^s,
\end{align*}
 one proves (\ref{MDSWeightDistribution}) for ${\rm MDS} (n,d) = {\rm MDS} (n,1) = G^n$.

From now on, assume that the additive MDS-code  $({\rm MDS} (n,d), +) \lneq (G^n, +)$ is a  proper  subgroup of $(G^n, +)$  and note that the dual
 ${\rm MDS} (n,d)^{\perp}, .) \leq ( \widehat{G}^n, .)$ of cardinality $\left| {\rm MDS} (n,d)^{\perp} \right| = \frac{q^n}{\left| {\rm MDS} (n,d) \right|} = \frac{q^n}{q^{n+1-d}} = q^{d-1}$ is of genus $0$ by Lemma \ref{MDSDualIdMDS}.
Therefore $d^{\perp} = n+2 -d$ and
\begin{equation}    \label{DualOfMDS}
{\rm MDS} (n,d)^{\perp} = {\rm MDS} (n, n+2-d).
\end{equation}
In particular, $d^{\perp} \geq 2$ by $d \leq n$.
We note  that
$$
\mu _d ^{(s)} := (q-1) \left[ \sum\limits _{i=0} ^{s-d} (-1)^{i} \binom{s-1}{i} q^{s-d-i} \right] \ \ \mbox{  for  } \ \ \forall d \leq s \leq n
$$
is independent of $n$ and show that $\cM _{n,d} ^{(s)} = \binom{n}{s} \mu _d ^{(s)}$ by an induction on the length $n$.
If $n=1$ then the assumption $\left| {\rm MDS} (1,d) \right| = q^{2-d} < q = |G|$ requires $d>1$, which is an absurd.
Thus, the only additive MDS-code of length $1$ is ${\rm MDS} (1,1) = G$ and (\ref{MDSWeightDistribution}) is true for all additive MDS-codes of length $1$.
Assume that (\ref{MDSWeightDistribution})  holds  for all additive MDS-codes of length   $n-1$, put $C := {\rm MDS} (n,d)$  and consider the shortening $S_i : C \rightarrow S_i (C)$.
By its very definition, the shortening $S_i$ does not erase non-zero components and preserves the minimum distance  $d$.
On the other hand, the puncturing $\Pi _i : C^{\perp} \rightarrow \Pi _i (C^{\perp})$ of the  dual code $(C^{\perp}, .) \leq ( \widehat{G}^n, .)$ is bijective, as far as $\pi \in \ker \Pi _i \cap C^{\perp} $ exactly when ${\rm Supp} ( \pi ) \subseteq \{ i \}$ and  $d^{\perp} \geq 2$.
Making use of $\Pi _i (C^{\perp}) = S_i (C)^{\perp}$, established by (\ref{PuncturingAndShortening}) from  Lemma \ref{PuncturingAnd Shortening}, one concludes that
$$
\left|  S_i (C)^{\perp} \right| = \left| \Pi _i (C^{\perp}) \right| = \left| C^{\perp} \right| = q^{d-1}
$$
for $(S_i (C)^{\perp}, .) \leq (\widehat{G}^{n-1},.)$, whereas $\left| S_i (C) \right| = \frac{q^{n-1}}{\left| S_i (C)^{\perp} \right|} = q^{n-d}$.
Thus, $S_i (C)$ is of genus $g (S_i (C)) := (n-1)+1 - d( S_i (C)) - \log _q \left| S_i (C) \right|  =0$ and by the inductional hypothesis,  the homogeneous weight enumerator $\cW _{S_i (C)} (x,y) := x^{n-1} + \sum\limits  _{s=d} ^{n-1} \cW ^{(s)} _{S_i (C)} x^{n-1-s} y^s$ of $S_i (C)$ has coefficients $\cW_{S_i (C)} ^{(s)} = \binom{n-1}{s} \mu _d ^{(s)}$ for $\forall d \leq s \leq n-1$.
Let
$$
S := \sum\limits _{i=1} ^n S_i : \cM _{n,d} (x,y) \longrightarrow S \cM _{n,d} (x,y) = \sum\limits _{i=1} ^n S_i \cM _{n,d} (x,y) := \sum\limits _{i=1} ^n \cW _{S_i (C)} (x,y)
$$
be the map, which transforms  the homogeneous weight enumerator $\cM _{n,d} (x,y)$ of $C = {\rm MDS} (n,d)$ into the sum of the homogeneous weight enumerator of $S_i (C)$.
By the inductional hypothesis,
\begin{equation}   \label{WeightEnumeratorAverageShortening}
S \cM _{n,d} (x,y) = n x^{n-1} + \sum\limits _{s=d} ^{n-1} n \binom{n-1}{s} \mu _d ^{(s)} x^{n-1-s} y^s.
\end{equation}
On the other hand, an arbitrary monomial $x^{n-s} y^s$ of $\cM _{n,d} (x,y)$ counts a word $c \in C$ with support ${\rm Supp} (c) = \gamma = \{ \gamma _1, \ldots , \gamma _s \} \in \binom{[n]}{s}$.
The shortenings $S_{\gamma _j}$ with $1 \leq j \leq s$ do not produce a word from $\coprod_{i=1} ^n S_i (C)$, while the shortenings  $S_i$ at $i \in \{ 1, \ldots , n \} \setminus \gamma$  yield  a word of $S_i (C)$.
Thus, $x^{n-s}y^s$ is transformed into $(n-s) x^{n-1-s} y^s = \frac{\partial}{\partial x} (x^{n-s} y^s)$.
Since the partial derivative
$$
\frac{\partial}{\partial x} : \CC[x,y] ^{(n)} \longrightarrow \CC [x,y] ^{(n-1)}
$$
 is a $\CC$-linear map of the homogeneous polynomials $\CC [x,y] ^{(n)}$ of $x,y$ of degree $n$ in the homogeneous polynomials $\CC[x,y] ^{(n-1)}$ of degree $n-1$, the polynomial
 $$
 S \cM _{n,d} (x,y) = \frac{\partial}{\partial x} \cM _{n,d} (x,y).
 $$
Combining with (\ref{WeightEnumeratorAverageShortening}) and comparing the coefficients of $x^{n-1-s} y^s$ for $\forall d \leq s \leq n$, one concludes that
$(n-s) \cM _{n,d} ^{(s)} = n \binom{n-1}{s} \mu _d ^{(s)}$.
That completes the proof of
$$
\cM _{n,d} ^{(s)} = \frac{n \binom{n-1}{s}}{n-s} \mu _d ^{(s)} = \binom{n}{s} \mu _d ^{(s)} \ \ \mbox{  for  } \ \ \forall 1 \leq s \leq d.
$$

\end{proof}

The following corollary  will be useful for expressing Mac Williams identities for an arbitrary pair $C$, $C^{\perp}$ of mutually dual additive codes in terms of their $\zeta$-polynomials $P_C(t)$, $P_{C^{\perp}}(t) \in \QQ[t]$.
In order to formulate and prove it, we consider the set $\CC[x,y] ^{(n)} [[t]]$ of the formal power series of $t$, whose coefficients are homogeneous polynomials of $x,y$ of degree $n$.
For arbitrary $\eta (x,y,t) \in \CC[x,y] ^{(n)} [[t]]$ and $s \in \ZZ ^{\geq 0}$, let us  denote by ${\rm Coeff} _{t^s} ( \eta (x,y,t)) \in \CC [x,y] ^{(n)}$ the coefficient of $t^s$ from $\eta (x,y,t)$.
The proof of the proposition coincides with the one of Proposition 1 from Duursma's \cite{D2} on $\FF_q$-linear MDS-codes.

\begin{corollary}    \label{MDSEnumeratorAsCoefficients}
The homogeneous weight enumerator $\cM _{n,d} (x,y)$ of an additive MDS-code $({\rm MDS} (n,d), +) \leq (G^n, +)$ of minimum distance $d$ is uniquely determined by the equality of polynomials
\begin{equation}    \label{MDSEnumeratorExpressionAsCoefficient}
\frac{\cM _{n,d} (x,y) - x^n}{|G|-1} = {\rm Coeff} _{t^{n-d}} \left( \frac{[xt + y(1-t)] ^n}{(1-t)(1-|G|t)} \right).
\end{equation}

\end{corollary}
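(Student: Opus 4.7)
The plan is to prove (\ref{MDSEnumeratorExpressionAsCoefficient}) by expanding both sides as homogeneous polynomials of degree $n$ in $x,y$ and comparing the coefficients of $x^{n-s}y^{s}$ for each $d \leq s \leq n$. Throughout I write $q := |G|$.

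First, using Proposition \ref{MDSEnumerator}, the left-hand side takes the form
$$
\frac{\cM_{n,d}(x,y)-x^n}{q-1} = \sum_{s=d}^{n} \binom{n}{s}\left[\sum_{i=0}^{s-d}(-1)^{i}\binom{s-1}{i} q^{s-d-i}\right] x^{n-s}y^{s}.
$$
Second, for the right-hand side, I would rewrite the numerator as $xt+y(1-t)=y+(x-y)t$, giving $[xt+y(1-t)]^{n}=\sum_{k=0}^{n}\binom{n}{k}y^{n-k}(x-y)^{k}t^{k}$, and expand the denominator via partial fractions as the geometric-type series $\frac{1}{(1-t)(1-qt)} = \sum_{m=0}^{\infty}\frac{q^{m+1}-1}{q-1}\,t^{m}$. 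Extracting the coefficient of $t^{n-d}$ from the Cauchy product yields
$$
{\rm Coeff}_{t^{n-d}}\!\left(\frac{[xt+y(1-t)]^{n}}{(1-t)(1-qt)}\right) = \sum_{k=0}^{n-d}\binom{n}{k}y^{n-k}(x-y)^{k}\cdot\frac{q^{n-d-k+1}-1}{q-1}.
$$

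Third, I would expand $(x-y)^{k}=\sum_{j=0}^{k}\binom{k}{j}(-1)^{k-j}x^{j}y^{k-j}$, set $s := n-j$ and $i := k-(n-s)$, and use $\binom{n}{n-s+i}\binom{n-s+i}{n-s}=\binom{n}{s}\binom{s}{i}$ to rewrite the right-hand side as
$$
\sum_{s=d}^{n}\binom{n}{s}\,x^{n-s}y^{s}\sum_{i=0}^{s-d}(-1)^{i}\binom{s}{i}\frac{q^{s-d-i+1}-1}{q-1}.
$$

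Fourth, to finish, I would verify the scalar identity
$$
\sum_{i=0}^{s-d}(-1)^{i}\binom{s}{i}\frac{q^{s-d-i+1}-1}{q-1}=\sum_{i=0}^{s-d}(-1)^{i}\binom{s-1}{i}q^{s-d-i}
$$
by replacing $(q^{m+1}-1)/(q-1)$ with $\sum_{j=0}^{m}q^{j}$, swapping the order of summation so that the outer index is $j$ and the inner one is $i$, and applying the standard identity $\sum_{i=0}^{m}(-1)^{i}\binom{s}{i}=(-1)^{m}\binom{s-1}{m}$; the substitution $i \leftrightarrow s-d-j$ then matches both sides term by term.

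The argument is computational and the main obstacle is only the reindexing bookkeeping in the third step; the binomial identity in the fourth step is elementary. Uniqueness of $\cM_{n,d}(x,y)$ subject to the stated formula is automatic, since (\ref{MDSEnumeratorExpressionAsCoefficient}) prescribes every homogeneous component of degree $n$ of $\cM_{n,d}(x,y)-x^n$.
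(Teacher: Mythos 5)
Your proof is correct. Every step checks out: the partial-fraction expansion $\frac{1}{(1-t)(1-qt)}=\sum_{m\ge 0}\frac{q^{m+1}-1}{q-1}t^m$ is right, the reindexing identity $\binom{n}{n-s+i}\binom{n-s+i}{n-s}=\binom{n}{s}\binom{s}{i}$ is a standard trinomial revision, and the final scalar identity follows exactly as you say from $\sum_{i=0}^{M}(-1)^i\binom{s}{i}=(-1)^M\binom{s-1}{M}$ after writing $\frac{q^{m+1}-1}{q-1}=\sum_{j=0}^m q^j$ and swapping sums. The uniqueness remark is also fine, since the right-hand side prescribes all coefficients of $\cM_{n,d}(x,y)-x^n$.

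The overall strategy (extract the coefficient of $t^{n-d}$ and match against the explicit weight distribution of Proposition \ref{MDSEnumerator}) is the same as the paper's, but your intermediate expansion is genuinely different. The paper keeps the numerator in the form $[xt+y(1-t)]^n=\sum_s\binom{n}{s}t^{n-s}(1-t)^s x^{n-s}y^s$, so the factor $(1-t)^s$ cancels one copy of $(1-t)$ in the denominator and the whole computation reduces to reading off ${\rm Coeff}_{t^{s-d}}\bigl((1-t)^{s-1}/(1-qt)\bigr)=\sum_{i=0}^{s-d}(-1)^i\binom{s-1}{i}q^{s-d-i}$ directly from a Cauchy product — no alternating-sum identity and no change of monomial basis is needed. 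You instead rewrite the numerator as $[y+(x-y)t]^n$ and work in the basis $y^{n-k}(x-y)^k$, which costs you the extra reindexing in your third step and the binomial identity in your fourth, but has the side benefit of producing the expansion of $\cM_{n,d}(x,y)-x^n$ in the $(x-y)^{n-s}y^{s}$ basis along the way — precisely the basis the paper later uses for Duursma's reduced polynomial in Proposition \ref{WCByDCAnd CoeffDenominators}. Both routes are valid; the paper's is shorter, yours connects more visibly to the later material.
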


\begin{proof}

Let us denote $q := |G|$, $\eta (x,y,t) := \frac{[xt + y (1-t)]^n}{(1-t)(1-qt)}$ and note that
$$
\eta (x,y,t) = \sum\limits _{s=0} ^n \binom{n}{s} \frac{t^{n-s}(1-t)^s}{(1-t)(1-qt)} x^{n-s} y^s
$$
has coefficient
$$
{\rm Coeff} _{t^{n-d}} \eta (x,y,t) = \sum\limits _{s=0} ^n \binom{n}{s} {\rm Coeff} _{t^{s-d}} \left( \frac{(1-t)^{s-1}}{1-qt} \right) x^{n-s} y^s
$$
of $t^{n-d}$.
 Since  $\frac{(1-t)^{s-1}}{1-qt} \in \CC [[t]]$ has no pole at $t=0$,  one has  ${\rm Coeff} _{t^{s-d}} \left( \frac{(1-t)^{s-1}}{1-qt} \right) =0$ for $\forall 0 \leq s \leq d-1$ and
 $$
 {\rm Coeff} _{t^{n-d}} \eta (x,y,t) = \sum\limits _{s=d} ^n \binom{n}{s} {\rm Coeff} _{t^{s-d}} \left( \frac{(1-t)^{s-1}}{1-qt} \right) x^{n-s} y^s.
 $$
Making use of (\ref{MDSWeightDistribution}) from  Proposition \ref{MDSEnumerator}, one reduces the proof of (\ref{MDSEnumeratorExpressionAsCoefficient}) to
\begin{align*}
{\rm Coeff} _{t^{s-d}} \left( \frac{(1-t)^{s-1}}{1-qt} \right) =
{\rm Coeff} _{t^{s-d}} \left( \left[ \sum\limits _{i=0} ^{s-1} \binom{s-1}{i} (-1) ^{i} t^{i} \right] \left( \sum\limits _{j=0} ^{\infty} q^{j} t^{j} \right) \right) =  \\
\sum\limits _{i=0} ^{s-d} \binom{s-1}{i} (-1) ^{i} q^{s-d-i} \ \ \mbox{  for  }  \ \  \forall d \leq s \leq n.
\end{align*}

\end{proof}

\section{ Mac Williams  identities  as a functional equation  of \\ Duursma's reduced polynomials  }

The next proposition reminds Duursma's definition of a $\zeta$-polynomial $P_C(t)$ of an $\FF_q$linear code $C \subset \FF_q^n$  and expresses Mac Williams identities for the weight distribution of $C$, $C^{\perp} \subset \FF_q^n$ as a functional equation on $P_C(t)$, $P_{C^{\perp}}(t)$.
All properties of $C$, $C^{\perp}$, which are used by Duursma's construction hold for additive codes $(C,+) \leq (G^n, +)$ and their duals $(C^{\perp},.)\leq (\widehat{G}^n, .)$.
We formulate in terms of additive codes and provide the proofs for completeness.

\begin{propositiondefinition}    \label{ExistsUniqueDzetaPolynomial}
Let $(C, +) \leq (G^n, +)$ be an additive code of genus $g$ and minimum distance $d \geq 2$ with dual $(C^{\perp}, .) \leq ( \widehat{G}^n, .)$ of genus $g^{\perp}$ and minimum distance $d^{\perp} \geq 2$.
Then there exist unique polynomials
$$
P_C(t) = \sum\limits _{i=0} ^{g + g^{\perp}} a_i t^{i},  \ \ P_{C^{\perp}} (t) = \sum\limits _{i=0} ^{g + g^{\perp}} a_i ^{\perp} t^{i} \in \QQ[t]
 $$
 with $P_C(1) = P_{C^{\perp}} (1) =1$, whose coefficients express the homogeneous weight enumerators
\begin{equation}   \label{EnumeratorCByP}
\cW _C (x,y) = \sum\limits _{i=0} ^{g + g^{\perp}} a_i \cM _{n, d+i} (x,y),
\end{equation}
\begin{equation}    \label{EnumeratorCPerpByP}
\cW _{C^{\perp}} (x,y) = \sum\limits _{i=0} ^{g + g^{\perp}} a_i ^{\perp} \cM _{n, d^{\perp} +i} (x,y)
\end{equation}
by  appropriate  MDS weight enumerators.
Mac Williams identities for $C$, $C^{\perp}$ are equivalent to the functional equation
\begin{equation}   \label{MacWilliamsDzetaPolynomial}
P_{C^{\perp}} (t) = P_C \left( \frac{1}{|G|t} \right) |G|^g t^{g + g^{\perp}}
\end{equation}
for $P_C(t)$, $P_{C^{\perp}} (t)$.
Form now on, $P_C(t)$, $P_{C^{\perp}} (t)$ are referred to as the $\zeta$-polynomials of $C$, $C^{\perp}$ and the rational functions
\begin{equation}   \label{DzetaFunctionsOfCCPerp}
\zeta _C(t) := \frac{P_C(t)}{(1-t)(1 - |G|t)}, \ \  \zeta _{C^{\perp}} (t) := \frac{P_{C^{\perp}} (t)}{(1-t)(1 - |G|t)}
\end{equation}
are called  the $\zeta$-functions of $C$, $C^{\perp}$.
\end{propositiondefinition}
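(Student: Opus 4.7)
My plan is to construct $P_C(t)$ and $P_{C^{\perp}}(t)$ by expanding $\cW_C, \cW_{C^{\perp}}$ in the basis of MDS weight enumerators, and to derive the functional equation (\ref{MacWilliamsDzetaPolynomial}) by applying the classical Mac Williams transform term-by-term to this basis. I would first introduce the affine subspace $V_d \subset \CC[x,y]^{(n)}$ of homogeneous polynomials of degree $n$ whose coefficient of $x^n$ equals $1$ and whose coefficient of $x^{n-s} y^s$ vanishes for $1 \leq s \leq d-1$. By Proposition \ref{MDSEnumerator}, the polynomials $\{\cM_{n,m}(x,y) : d \leq m \leq n+1\}$, with the convention $\cM_{n,n+1} := x^n$, all lie in $V_d$ and exhibit a triangular pattern: the coefficient of $x^{n-s}y^s$ in $\cM_{n,m}$ vanishes for $1 \leq s < m$ and equals $\binom{n}{m}(|G|-1) \neq 0$ at $s=m \leq n$. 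A routine triangular solve therefore produces a unique expansion $\cW_C = \sum_{m=d}^{n+1} b_m \cM_{n,m}$, with $\sum_m b_m = 1$ enforced by equating $x^n$ coefficients.

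The crucial input is the identity $\cM_{n,m}(x+(|G|-1)y, x-y) = |G|^{n+1-m} \cM_{n,n+2-m}(x,y)$, which follows by combining Lemma \ref{MDSDualIdMDS} with Mac Williams applied to $\mathrm{MDS}(n,m)$ of cardinality $|G|^{n+1-m}$. Plugging this into $\cW_{C^{\perp}} = \frac{1}{|C|}\cW_C(x+(|G|-1)y, x-y)$ and reindexing $m' := n+2-m$ delivers a second expansion $\cW_{C^{\perp}} = \sum_{m'=1}^{n+2-d} \frac{|G|^{m'-1}}{|C|} b_{n+2-m'} \cM_{n,m'}$. On the other hand, applying the previous triangular argument to $\cW_{C^{\perp}} \in V_{d^\perp}$ yields a unique expansion in $\{\cM_{n,m'} : d^\perp \leq m' \leq n+1\}$. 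Matching the two expansions forces the $\cM_{n,m'}$-coefficient with $m' < d^\perp$ in the first expression to vanish, which translates to $b_m = 0$ for $n+3-d^\perp \leq m \leq n+1$. In particular $b_{n+1}=0$ (using $d^\perp \geq 2$), so the expansion truncates to $\cW_C = \sum_{i=0}^{g+g^\perp} a_i \cM_{n,d+i}$ with $a_i := b_{d+i}$ and $\sum a_i = 1$, defining $P_C(t) := \sum_{i=0}^{g+g^\perp} a_i t^i \in \QQ[t]$ with $P_C(1)=1$ and verifying (\ref{EnumeratorCByP}); the symmetric argument produces $P_{C^{\perp}}(t)$.

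Finally, matching the two MDS expansions of $\cW_{C^{\perp}}$ coefficient by coefficient yields $a^\perp_j = a_{g+g^\perp - j}\, |G|^{j - g^\perp}$, where the $|G|$-exponent emerges by substituting $|C| = |G|^{n+1-d-g}$ and $d^\perp + j - 1 = (n+1-d-g) + (j - g^\perp)$. Rearranging this coefficient identity gives precisely the functional equation $P_{C^{\perp}}(t) = |G|^g t^{g+g^\perp} P_C(1/(|G|t))$. Since each step above is a reversible equivalence --- in particular, the MDS identity relating the Mac Williams transform and the index flip $m \leftrightarrow n+2-m$ is an equality of polynomials --- the converse implication (\ref{MacWilliamsDzetaPolynomial}) $\Rightarrow$ Mac Williams also holds: starting from the $a_i$ subject to the functional equation, one recovers $\cW_{C^{\perp}}$ in two ways that automatically coincide, which is the Mac Williams identity.

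I expect the main obstacle to be the bookkeeping in propagating the index substitution $m \leftrightarrow n+2-m$ through the factors $|G|^{n+1-m}$ and $|C|^{-1}$ to land on the clean exponent $j - g^\perp$ that makes the functional equation fall out, together with confirming that the hypothesis $d^\perp \geq 2$ precisely yields the truncation to degree $g+g^\perp$ (neither more nor less) in the MDS basis expansion of $\cW_C$.
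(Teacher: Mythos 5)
Your proposal is correct and follows essentially the same route as the paper: expand $\cW_C - x^n$ (equivalently, your affine normalization with $\cM_{n,n+1}:=x^n$) in the triangular MDS basis, transport the Mac Williams substitution through the identity $\cM_{n,m}(x+(|G|-1)y,\,x-y)=|G|^{n+1-m}\cM_{n,n+2-m}(x,y)$, and match coefficients to force the truncation at degree $g+g^{\perp}$ and the relation $a_j^{\perp}=|G|^{j-g^{\perp}}a_{g+g^{\perp}-j}$. The only cosmetic difference is that the paper works in the linear space spanned by $\cM_{n,d+i}-x^n$ and extracts $P_C(1)=1$ from the $x^{n-1}y$ coefficient, whereas you encode it as the affine constraint $\sum_m b_m=1$.
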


\begin{proof}

For arbitrary $n, d \in \NN$, $d \leq n$ let $\QQ [x,y] ^{(n)} _{\geq d}$ be the $\QQ$-linear space of the homogeneous polynomials of $x,y$ of degree $n$, whose monomials with non-zero coefficients are of degree $\geq d$ with respect to $y$.
Then $\{ x^{n-i} y^{i} \, \vert \, d \leq i \leq n \}$ is a $\QQ$-basis of $\QQ[x,y] ^{(n)} _{\geq d}$, as well as any set of polynomials $f_i (x,y) \in \QQ[x,y] ^{(n)} _{\geq i}$, $d \leq i \leq n$ with ${\rm Coeff} _{x^{n-i} y^{i}} ( f_i (x,y)) \neq 0$ for all $d \leq i \leq n$.
In particular, for $\forall 0 \leq i \leq n-d$ the polynomials
$$
\cM _{n, d+i} (x,y) - x^n = \sum\limits _{s = d+i} ^n \cM _{n, d+i} ^{(s)} x^{n-s} y^s
$$
with $\cM _{n, d+i} ^{(d+i)} = \binom{n}{d+i} (|G|-1) >0$ constitute a $\QQ$-basis of $\QQ[x,y] ^{(n)} _{\geq d}$.
The coordinates $a_0, \ldots , a_{n-d} \in \QQ$ of $\cW _C (x,y) - x^n \in \QQ [x,y] ^{(n)} _{\geq d}$ with respect to $\cM _{n, d} (x,y) - x^n$, $\ldots$, $\cM _{n,n} (x,y) - x^n$ provide a uniquely determined $\zeta$-polynomials $P_C(t) = \sum\limits _{i=0} ^{n-d} a_i t^{i} \in \QQ[t]$ with
\begin{equation}    \label{RoughtExpressionC}
\cW _C(x,y) - x^n = \sum\limits _{i=0} ^{n-d} a_i \left[ \cM _{n, d+i} (x,y) - x^n \right].
\end{equation}
Similar considerations provide a unique $\zeta$-polynomial $P_{C^{\perp}} (t) = \sum\limits _{i=0} ^{n - d^{\perp}} a_i ^{\perp} t^{i} \in \QQ[t]$ of $(C^{\perp}, .) \leq ( \widehat{G}^n, .)$ with
\begin{equation}    \label{RoughExpressionCPerp}
\cW _{C^{\perp}} (x,y) - x^n = \sum\limits _{i=0} ^{n - d^{\perp}} a_i ^{\perp} \left[ \cM _{n, d^{\perp} +i} (x,y) - x^n \right].
\end{equation}
According to Delsarte's \cite{Delsarte} (cf.also Wood's \cite{Wood}), Mac Williams identities for the weight distribution of an additive code $(C, +) \leq (G^n, +)$ and its dual $(C^{\perp}, .) \leq ( \widehat{G}^n, .)$ reads as
\begin{equation}   \label{GeneralMacWilliams}
|C| \cW _{C^{\perp}} (x,y) = \cW _C (x + (|G|-1) y, x-y).
\end{equation}
By assumption $d \geq 2$, so that $({\rm MDS} (n,d), +) \lneq (G^n, +)$ is a non-trivial MDS-code and its dual ${\rm MDS} (n,d)^{\perp} = {\rm MDS} (n, n+2-d)$ is also MDS by  (\ref{DualOfMDS}).
Denoting $q := |G|$, one expresses  Mac Williams identities (\ref{GeneralMacWilliams}) for ${\rm MDS} (n,d)$ and ${\rm MDS} (n,d)^{\perp}$   as
\begin{equation}   \label{MDSMacWilliams}
q^{n+1-d} \cM _{n, n+2-d} (x,y) = \cM _{n,d} ( x + (q-1)y, x-y).
\end{equation}
If $k := \log _q (|C|)$ then  substituting   (\ref{RoughtExpressionC}), (\ref{RoughExpressionCPerp}) in (\ref{GeneralMacWilliams}) and using (\ref{MDSMacWilliams}), one rewrites  (\ref{GeneralMacWilliams}) in the form
\begin{align*}
q^k \left \{ \sum\limits _{i=0} ^{ n - d^{\perp}} a_i ^{\perp} \left[ \cM _{n, d^{\perp} +i} (x,y) - x^n \right]  + x^n\right \}  =  \\
q^{k} [ \cW _{C^{\perp}} (x,y) - x^n] + q^{k} x^n = \\
\left \{ \cW _C (x + (q-1)y, x-y) - [x + (q-1) y] ^n \right \} + [x + (q-1)y] ^n =  \\
\sum\limits _{i=0} ^{n-d} a_i \left \{ \cM _{n, d+i} (x + (q-1) y, x-y) - [x + (q-1)y]^n \right \} + [x + (q-1)y] ^n =  \\
\sum\limits _{i=0} ^{n-d} a_i \left \{ q^{n+1 -d-i} \cM _{n, n+2 -d-i} (x,y) - [x + (q-1)y] ^n \right \} + [x + (q-1)y] ^n =  \\
\sum\limits _{i=0} ^{n-d} a_i q^{n+1-d-i} [ \cM _{n, n+2-d-i} (x,y) - x^n] +  \\
 + \left( \sum\limits _{i=0} ^{n-d} a_i q^{n+1 -d-i} \right) x^n +
\left(1 - \sum\limits _{i=0} ^{n-d} a_i \right) [x + (q-1) y] ^n =  \\
\sum\limits _{i=0} ^{n-d} a_i q^{n+1-d-i} [ \cM _{n, n+2-d-i} (x,y) - x^n] +  \\
+ q^{n+1-d} P_C \left( \frac{1}{q} \right) x^n +
\left[1 - P_C(1) \right] [x + (q-1)y] ^n.
\end{align*}
In terms of the summation indices, which equal the minimum distances of the corresponding MDS  weight enumerators, the above reads as
\begin{equation}   \label{IntermediateGeneralMacWilliams}
\begin{split}
q^k \left \{ \sum\limits _{j = d^{\perp}} ^n a^{\perp} _{j - d^{\perp}} [ \cM _{n,j} (x,y) - x^n ] \right \} + q^k x^n = \\
\sum\limits _{j=2} ^{n+2-d} a_{n+2-d-j} q^{j-1} [ \cM _{n, j} (x,y) - x^n ] + q^{n+1-d} P_C \left( \frac{1}{q} \right) x^n +
\left[ 1 - P_C(1) \right] [x + (q-1)y] ^n.
\end{split}
\end{equation}
Bearing in mind that $d^{\perp} \geq 2$, one compares the coefficients of $x^{n-1}y$ from (\ref{IntermediateGeneralMacWilliams}) and concludes that $P_C(1) =1$.
As a result, (\ref{RoughtExpressionC}) reduces to (\ref{EnumeratorCByP}) and (\ref{RoughExpressionCPerp}) takes the form (\ref{EnumeratorCPerpByP}), due to the presence of a canonical isomorphism $(C^{\perp}) ^{\perp} \simeq C$.
Then the comparison of the coefficients of $x^n$ yields $q^k = q^{n+1-d} P_C \left( \frac{1}{q} \right)$, whereas $P_C \left( \frac{1}{q} \right) = q^{ -g} = \left( \frac{1}{q} \right) ^g$ for the genus $g:= n+1 -d-k$ of $C$.
The comparison of the coefficients of $\cM _{n, j} (x,y) - x^n$ with $j < d^{\perp}$ or $j > n+2-d$ from (\ref{IntermediateGeneralMacWilliams}) implies $a_i = a_i ^{\perp} =0$ for all $i > n+2-d-d^{\perp} = g + g^{\perp}$ and allows to write Mac Williams identities (\ref{IntermediateGeneralMacWilliams}) for $C$, $C^{\perp}$ in the form
$$
q^k \sum\limits _{j = d^{\perp}} ^{n+2-d} a^{\perp} _{j - d^{\perp}} [ \cM _{n, j} (x,y) - x^n ] =
\sum\limits _{j = d^{\perp}} ^{n+2-d} q^{j-1} a_{n+2-d-j} [ \cM _{n, j} (x,y) - x^n ],
$$
which is equivalent to
$$
q^k a^{\perp} _{j - d^{\perp}} = q^{j-1} a_{n+2-d-j} \ \ \mbox{  for } \ \ \forall d^{\perp} \leq j \leq n+2-d.
$$
Under the substitution $i := j - d^{\perp}$, these amount to
\begin{equation}   \label{MacWilliamsForDzetaCoefficients}
a_i ^{\perp} = q^{i - g^{\perp}} a_{g + g^{\perp} -i} \ \ \mbox{  for  } \ \ \forall 0 \leq i \leq g + g^{\perp}.
\end{equation}
 If $P_C(t) := \sum\limits _{i=0} ^{g + g^{\perp}} a_i t^{i}, \ \ P_{C^{\perp}} (t) := \sum\limits _{i=0} ^{g + g^{\perp}} a_i ^{\perp} t^{i} \in \QQ[t]$
 then one easily checks that the equalities of the coefficients of $t^{i}$, $0 \leq i \leq g + g^{\perp}$ from both sides of (\ref{MacWilliamsDzetaPolynomial}) can be expressed as (\ref{MacWilliamsForDzetaCoefficients}).

 \end{proof}

 \begin{corollary}   \label{CoefficientDefDzetaPolynomial}
 If $(C, +) \leq (G^n, +)$ is an additive code of minimum distance $d \geq 2$ with dual $(C^{\perp}, .) \leq ( \widehat{G}^n, .)$ of minimum distance $d^{\perp} \geq 2$  then the $\zeta$-polynomials of $C$, $C^{\perp}$ are the  uniquely determined polynomials $P_C(t) = \sum\limits _{i=0} ^{g + g^{\perp}} a_i t^{i} \in \QQ[t]$, respectively, $P_{C^{\perp}} (t) = \sum\limits _{i=0} ^{g + g^{\perp}} a_i ^{\perp} t^{i} \in \QQ[t]$ of degree $\deg P_C(t) \leq g + g^{\perp}$,
 $\deg P_{C^{\perp}} (t) \leq g + g^{\perp}$ with $P_C(1) = P_{C^{\perp}} (1) =1$, which satisfy the equalities
 \begin{equation}   \label{CoefficientsDefinitionOfPC}
 \frac{\cW _C(x,y) - x^n}{|G|-1} = {\rm Coeff} _{t^{n-d}} \left( P_C(t) \frac{ [xt + y(1-t)]^n}{(1-t)(1-|G|t)} \right),
 \end{equation}
 respectively,
 \begin{equation}   \label{CoefficientsDefinitionOfPCPerp}
 \frac{\cW_{C^{\perp}} (x,y) - x^n}{|G|-1} = {\rm Coeff} _{t^{n - d^{\perp}}} \left( P_{C^{\perp}} (t) \frac{ [xt + y (1-t)]^n}{(1-t)(1-|G|t)} \right)
 \end{equation}
 of homogeneous polynomials of $x,y$ of degree $n$.
\end{corollary}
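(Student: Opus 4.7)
The corollary is essentially a repackaging of Proposition~\ref{ExistsUniqueDzetaPolynomial} together with Corollary~\ref{MDSEnumeratorAsCoefficients}. The plan is to verify the coefficient identity (\ref{CoefficientsDefinitionOfPC}) for the $\zeta$-polynomial $P_C(t) = \sum_{i=0}^{g+g^{\perp}} a_i t^{i}$ already produced in the proposition, and then to invoke the linear independence of shifted MDS weight enumerators for uniqueness. The analogous claim (\ref{CoefficientsDefinitionOfPCPerp}) for $C^{\perp}$ follows by the symmetric computation with $d, g$ replaced by $d^{\perp}, g^{\perp}$ throughout.

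For existence I would start from the expansion $\cW_C(x,y) = \sum_{i=0}^{g+g^{\perp}} a_i \cM_{n,d+i}(x,y)$ of (\ref{EnumeratorCByP}) and use the normalization $P_C(1) = \sum_i a_i = 1$ from the preceding proposition to rewrite it as
$$ \cW_C(x,y) - x^n = \sum_{i=0}^{g+g^{\perp}} a_i \bigl[ \cM_{n,d+i}(x,y) - x^n \bigr]. $$
Next I would apply Corollary~\ref{MDSEnumeratorAsCoefficients} to each summand, obtaining $\frac{1}{|G|-1}[\cM_{n,d+i}(x,y) - x^n] = {\rm Coeff}_{t^{n-d-i}}(F(t))$ with $F(t) := \frac{[xt+y(1-t)]^n}{(1-t)(1-|G|t)}$; note that $i \leq g+g^{\perp} = n+2-d-d^{\perp} \leq n-d$ since $d^{\perp} \geq 2$, so the exponent $n-d-i$ is non-negative throughout. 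The trivial shift identity ${\rm Coeff}_{t^{n-d-i}}(F(t)) = {\rm Coeff}_{t^{n-d}}(t^{i} F(t))$ then consolidates the sum into
$$ \frac{\cW_C(x,y) - x^n}{|G|-1} = {\rm Coeff}_{t^{n-d}}\!\left( \sum_{i=0}^{g+g^{\perp}} a_i t^{i} F(t) \right) = {\rm Coeff}_{t^{n-d}}\bigl(P_C(t) F(t)\bigr), $$
which is exactly (\ref{CoefficientsDefinitionOfPC}).

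For uniqueness, suppose $Q(t) = \sum_{i=0}^{g+g^{\perp}} b_i t^{i} \in \QQ[t]$ also satisfies (\ref{CoefficientsDefinitionOfPC}) with $Q$ in place of $P_C$ together with $Q(1) = 1$. Running the above identity in reverse, I would obtain $\cW_C(x,y) - x^n = \sum_{i} b_i [\cM_{n,d+i}(x,y) - x^n]$. As recorded in the proof of Proposition~\ref{ExistsUniqueDzetaPolynomial}, the polynomials $\cM_{n,d+i}(x,y) - x^n$ with $0 \leq i \leq g+g^{\perp}$ are linearly independent in $\QQ[x,y]^{(n)}_{\geq d}$ because each has a non-zero leading coefficient $\binom{n}{d+i}(|G|-1)$ at the distinct monomial $x^{n-d-i}y^{d+i}$. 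Hence $b_i = a_i$ for every $i$, whence $Q(t) = P_C(t)$.

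No substantive obstacle arises; the whole argument is bookkeeping of the coefficient-extraction operator together with the already-established linear independence of the shifted MDS weight enumerators. The only point requiring attention is checking that the summation range is contained in the regime where Corollary~\ref{MDSEnumeratorAsCoefficients} applies, which is guaranteed by the hypothesis $d^{\perp} \geq 2$.
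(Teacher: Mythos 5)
Your argument is correct and follows essentially the same route as the paper: both proofs reduce (\ref{CoefficientsDefinitionOfPC}) to the expansion (\ref{RoughtExpressionC}) via Corollary~\ref{MDSEnumeratorAsCoefficients} and the shift identity ${\rm Coeff}_{t^{n-d-i}}(F(t)) = {\rm Coeff}_{t^{n-d}}(t^{i}F(t))$, and both delegate the degree bound, the normalization $P_C(1)=1$, and the uniqueness to the linear independence of the $\cM_{n,d+i}(x,y)-x^n$ established in Proposition-Definition~\ref{ExistsUniqueDzetaPolynomial}. The only cosmetic difference is direction: the paper starts from the coefficient-extraction expression and identifies it with the MDS expansion, whereas you run the same chain of equalities from the MDS expansion toward the coefficient formula.
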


 \begin{proof}

Let $q := |G|$ be the order of $G$.
For an arbitrary polynomial $P_C(t) = \sum\limits _{i \geq 0} a_i t^{i} \in \QQ[t]$, whose degree will be bounded in the sequel, note that
\begin{align*}
{\rm Coeff} _{t^{n-d}} \left( P_C(t) \frac{[xt + y (1-t)]^n}{(1-t)(1-qt)} \right) =
{\rm Coeff} _{t^{n-d}} \left( \left[ \sum\limits _{i \geq 0} a_i t^{i} \right] \frac{[xt + y (1-t)]^n}{(1-t)(1-qt)} \right) =  \\
\sum\limits _{i \geq 0} a_i {\rm Coeff} _{t^{n-d-i}} \left( \frac{[xt + y (1-t)] ^n}{(1-t)(1-qt)} \right).
\end{align*}
The rational function $\frac{[xt + y (1-t)]^n}{(1-t)(1-qt)}$ without a pole at $t=0$ has vanishing coefficients of $t^{n-d-i}$ for $\forall i > n-d$, so that
the polynomial   $P_C(t) = \sum\limits _{i=0} ^{n-d} a_i t^{i} \in \QQ[t]$ is  of degree $\deg P_C(t) \leq n-d$.
Now, (\ref{MDSEnumeratorExpressionAsCoefficient}) and (\ref{RoughtExpressionC}) specify that

\begin{align*}
{\rm Coeff} _{t^{n-d}} \left( P_C(t) \frac{[xt + y (1-t)]^n}{(1-1)(1-qt)}  \right) =
\sum\limits _{i=0} ^{n-d} a_i \left( \frac{\cM _{n, d+i} (x,y) - x^n}{q-1}  \right)   =
\frac{\cW_C (x,y) - x^n}{q-1}.
\end{align*}
Similar considerations provide (\ref{CoefficientsDefinitionOfPCPerp}) for $P_{C^{\perp}} (t) = \sum\limits _{i=0}  ^{n - d^{\perp}} a_i ^{\perp} t^{i} \in \QQ[t]$.
The remaining part of the proof of Proposition-Definition \ref{ExistsUniqueDzetaPolynomial}  establishes that $\deg P_C(t) \leq g + g^{\perp}$, $\deg P_{C^{\perp}} (t) \leq g + g^{\perp}$ and $P_C(1) = P_{C^{\perp}} (1) =1$.

 \end{proof}

In order to get an impression of the (integral) denominators of the coefficients $a_i \in \QQ$ of $P_C(t)$, let us plug in $\cM _{n, d+i} (x,y) = x^n + \sum\limits _{s = d+i} ^n \cM _{n, d+i} ^{(s)} x^{n-s} y^s$ in (\ref{EnumeratorCByP}) and exchange the summation order.
That provides
\begin{align*}
x^n + \sum\limits _{s=d} ^n \cW _C ^{(s)} x^{n-s} y^s =  \\
x^n + \sum\limits _{s=d} ^{n +2 - d^{\perp}} \left( \sum\limits _{i=0} ^{s-d} a_i \cM _{n, d+i} ^{(s)} \right) x^{n-s} y^s +
\sum\limits _{s = n +3 - d^{\perp}} ^n \left( \sum\limits _{i=0} ^{g + g^{\perp}} a_i \cM _{n, d+i} ^{(s)} \right) x^{n-s} y^s,
\end{align*}
due to $P_C(1) =1$.
Comparing the coefficients of $x^{n-s}y^s$ for $\forall d \leq s \leq n+2-d^{\perp}$, one concludes that  $a_i$,
 $\forall 0 \leq i \leq n+2 - d^{\perp} -d = g + g^{\perp}$  constitute  a solution of the linear system of equations
$$
\left(   \begin{array}{ccccc}
\cM _{n,d} ^{(d)}      &  \ldots  &  0  &  \ldots  &  0  \\
\cM _{n,d}  ^{(d+1)}     &  \ldots  &  0  &  \ldots  &  0  \\
\ldots     & \ldots  & \ldots  & \ldots  & \ldots  \\
\cM_{n,d} ^{(s)}     &  \ldots  &  \cM _{n,s} ^{(s)}  &  \ldots  &  0  \\
\ldots     & \ldots  & \ldots  & \ldots  & \ldots  \\
\cM _{n,d} ^{(n+2-d^{\perp})}    &  \ldots  &  \cM _{n,s} ^{( n+2-d^{\perp})}  &  \ldots   &  \cM _{n, n+2-d^{\perp}} ^{(n+2-d^{\perp})}
\end{array}   \right)
\left(   \begin{array}{c}
a_0  \\
a_1  \\
\ldots  \\
a_{s-d}  \\
\ldots  \\
a_{g + g^{\perp}}
\end{array}  \right) =
\left(  \begin{array}{c}
\cW_C ^{(d)}  \\
\cW_C^{(d+1)}  \\
\ldots  \\
\cW _C ^{(s)}  \\
\ldots  \\
\cW _C ^{(g + g^{\perp})}
\end{array}  \right)
$$
with lower triangular coefficient matrix and  determinant
$$
\Delta = \prod\limits _{s=d} ^{n+2-d^{\perp}} \cM _{n,s} ^{(s)} = (q-1) ^{n+3-d^{\perp} -d} \prod\limits _{s=d} ^{n+2-d^{\perp}} \binom{n}{s} =
(q-1) ^{g + g^{\perp} + 1} \prod\limits _{j=0} ^{g + g^{\perp}} \binom{n}{d+j} \neq 0
$$
by (\ref{MDSWeightDistribution}).
  Cramer's rule applies to provide  $a_i  = \frac{\Delta _i}{\Delta}$  with  $\Delta _i \in \ZZ$  for  $\forall 0 \leq i \leq g + g^{\perp}$, so that
$$
\Delta a_i = (q-1) ^{g + g^{\perp} + 1} \prod\limits _{j=0} ^{g + g^{\perp}} \binom{n}{d+j} a_i \in \ZZ.
$$
In other words, the   denominators of $a_i$ are integral divisors of   $(q-1) ^{g + g^{\perp} + 1} \prod\limits _{j=0} ^{g + g^{\perp}} \binom{n}{d+j}$ for $\forall 0 \leq i \leq g + g^{\perp}$.

In the proof of Proposition-Definition \ref{ExistsUniqueDzetaPolynomial} we have established that the $\zeta$-polynomial
$P_C(t) = \sum\limits _{i=0} ^{g + g^{\perp}} a_i t^{i} \in \QQ[t]$ of an additive code $(C, +) \leq (G^n, +)$ of genus $g$ with dual
 $(C^{\perp}, .) \leq ( \widehat{G}^n, .)$ of genus $g^{\perp}$ has values $P_C(1) =1$ and $P_C \left( \frac{1}{|G|} \right) = \left( \frac{1}{|G|} \right) ^g$ at the simple poles of the $\zeta$-function $\zeta _C (t) = \frac{P_C(t)}{(1-t)(1-|G|t)}$, whenever $d \geq 2$ and $d^{\perp} \geq 2$.
Therefore $P_C(t) - t^g  \in \QQ[t]$ vanishes at $t=1$, $t = \frac{1}{|G|}$ and
$$
D_C(t) := \frac{P_C(t) - t^g}{(1-t)(1-qt)} \in \QQ [t]
$$
is a polynomial of degree $\deg D_C (t) \leq g + g^{\perp} -2$.
If  $g = g^{\perp} =0$, then $P_C(t) \equiv P_{C^{\perp}}(t) \equiv 1$ are constant and the  polynomials $D_C(t) \equiv D_{C^{\perp}}(t) \equiv  0$ vanish  identically.
In the case of an $\FF_q$-linear code $C$, \cite{KM1} refers to $D_C(t)$ as to Duursma's reduced polynomial of $C$ and making use of (\ref{EnumeratorCByP}) expresses the weight distribution of $C$ by the means of the coefficients $c_i$  of  $D_C(t) = \sum\limits _{i=0} ^{g + g^{\perp}-2} c_i t^{i} \in \QQ[t]$.
The present note adopts another definition of $D_C(t)$ and by the means of (\ref{CoefficientsDefinitionOfPC}) establishes its equivalence to the aforementioned one.
Duursma's reduced polynomials $D_C(t)$, $D_{C^{\perp}}(t)  \in \QQ[t]$ are used for expressing Mac Williams identities for $(C, +) < (G^n, +)$, $(C^{\perp}, .) < (\widehat{G}^n, .)$ as Polarized Riemann-Roch Conditions.
Besides, (\ref{DuursmasCoefficientsC})  from the next Proposition \ref{WCByDCAnd CoeffDenominators} reveals that
 the denominators of the  coefficients $c_i$  of   $D_C(t)$ are integral divisors of
  $(|G|-1) \binom{n}{d+i}$ for $\forall 0 \leq i \leq g + g^{\perp}-2$.
   In the case of $\FF_q$-linear codes $C, C^{\perp} \subset \FF_q ^n$,  Corollary \ref{DenominatorsCoeffDCLinear} specifies that  the  denominators  of $c_i$ divide $\binom{n}{d+i}$ for $\forall 0 \leq i \leq g + g^{\perp} -2$.

\begin{proposition}    \label{WCByDCAnd CoeffDenominators}
{\rm (Compare with Proposition 1 from \cite{KM1})} Let $(C, +) \leq (G^n, +)$ be an additive code of cardinality $|G|^k$ and minimum distance $d \geq 2$ with dual $(C^{\perp},.) \leq (\widehat{G}^n, .)$ of cardinality $|G|^{n-k}$ and minimum distance $d^{\perp} \geq 2$.
Then there exist unique Duursma's reduced polynomials $D_C(t) = \sum\limits _{i=0} ^{g + g^{\perp} -2} c_i t^{i} \in \QQ[t]$, respectively, $D_{C^{\perp}} (t) = \sum\limits _{i=0} ^{g + g^{\perp} -2} c_i ^{\perp} t^{i} \in \QQ[t]$ of $\deg D_C(t) \leq g + g^{\perp} -2$, $\deg D_{C^{\perp}} (t) \leq g + g^{\perp} -2$, such that
\begin{equation}    \label{EnumeraotrExpressionCByDRP}
\cW_C (x,y) = \cM _{n, n+1-k} (x,y) + \sum\limits _{i=0} ^{g + g^{\perp} -2} \left[ \binom{n}{d+i} (|G|-1) c_i \right] (x-y) ^{n-d-i} y^{d+i},
\end{equation}
respectively,
\begin{equation}    \label{EnumeratorExpressionCPerpByDRP}
\cW_{C^{\perp}} (x,y) = \cM _{n, k+1} (x,y) + \sum\limits _{i=0} ^{g + g^{\perp} -2} \left[ \binom{n}{d^{\perp}+i} (|G|-1) c_i ^{\perp} \right] (x-y) ^{n - d^{\perp} -i} y ^{d^{\perp} +i}
\end{equation}
for the genus $g = n+1 -d-k$ of $C$ and the genus $g^{\perp} = k+1 -d^{\perp}$ of $C^{\perp}$.

The polynomials $D_C(t)$, $D_{C^{\perp}} (t)$ can be determined by the equalities
\begin{equation}   \label{DuursmaByDzetaPolynomials}
D_C(t) = \frac{P_C(t) - t^g}{(1-t)(1-|G|t)}, \ \ \mbox{\rm  respectively,  } \ \ D_{C^{\perp}} (t) = \frac{P_{C^{\perp}} (t) - t^{g^{\perp}}}{(1-t)(1-|G|t)}.
\end{equation}

 Mac Williams identities for the weight distributions of $C$, $C^{\perp}$ are equivalent to the functional equation
\begin{equation}   \label{MacWilliamsForDRP}
D_{C^{\perp}} (t) = D_C \left( \frac{1}{|G|t} \right) |G|^{g-1} t^{g + g^{\perp} -2}
\end{equation}
 of  the corresponding Duursma's reduced polynomials and
\begin{equation}  \label{DuursmasCoefficientsC}
(|G|-1) \binom{n}{d+i} c_i = \sum\limits _{s=0} ^{d+i} \left(  \cW_C ^{(s)} - \cM _{n, n+1-k} ^{(s)} \right) \binom{n-s}{n-d-i} \in \ZZ,
\end{equation}
\begin{equation}   \label{DuursmasCoefficientsCPerp}
(|G|-1) \binom{n}{d^{\perp} +i} c_i ^{\perp} = \sum\limits _{s=0} ^{ d^{\perp} +i} \left(  \cW_{C^{\perp}} ^{(s)} - \cM _{n, k+1} ^{(s)} \right)  \binom{n-s}{n-d^{\perp}-i} \in \ZZ
\end{equation}
for $\forall 0 \leq i \leq g + g^{\perp} -2.$.
\end{proposition}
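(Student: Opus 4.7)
The plan is to construct $D_C(t)$ via formula (\ref{DuursmaByDzetaPolynomials}) and then verify every assertion in the proposition. In the proof of Proposition-Definition \ref{ExistsUniqueDzetaPolynomial} it is established that $P_C(1)=1$ and $P_C(1/|G|)=|G|^{-g}$, so the numerator $P_C(t)-t^g$ vanishes at the two simple poles $t=1$ and $t=1/|G|$ of the $\zeta$-function $\zeta_C(t)$. Consequently $D_C(t):=(P_C(t)-t^g)/((1-t)(1-|G|t))$ lies in $\QQ[t]$ and has degree at most $\deg P_C(t)-2\leq g+g^\perp-2$, with $D_C\equiv 0$ when $g=g^\perp=0$; the same reasoning produces $D_{C^\perp}(t)$, proving (\ref{DuursmaByDzetaPolynomials}).

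To derive (\ref{EnumeraotrExpressionCByDRP}), I will substitute the tautological identity $P_C(t)=t^g+(1-t)(1-|G|t)D_C(t)$ into formula (\ref{CoefficientsDefinitionOfPC}) from Corollary \ref{CoefficientDefDzetaPolynomial}. This splits the right-hand side into two pieces. The first piece is ${\rm Coeff}_{t^{n-d-g}}([xt+y(1-t)]^n/((1-t)(1-|G|t)))$, and since $n-d-g=n-(n+1-k)$, Corollary \ref{MDSEnumeratorAsCoefficients} applied to ${\rm MDS}(n,n+1-k)$ identifies it with $(\cM_{n,n+1-k}(x,y)-x^n)/(|G|-1)$. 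The second piece hinges on the crucial rewriting $xt+y(1-t)=y+t(x-y)$, which gives the clean expansion $[xt+y(1-t)]^n=\sum_{j=0}^n\binom{n}{j}t^j(x-y)^jy^{n-j}$; multiplying by $D_C(t)=\sum_ic_it^i$ and extracting the coefficient of $t^{n-d}$ forces $j=n-d-i$ and yields $\sum_{i=0}^{g+g^\perp-2}c_i\binom{n}{d+i}(x-y)^{n-d-i}y^{d+i}$. Multiplying through by $|G|-1$ produces (\ref{EnumeraotrExpressionCByDRP}), and the same argument gives (\ref{EnumeratorExpressionCPerpByDRP}). Uniqueness of the $c_i$ follows from the linear independence of the forms $(x-y)^{n-d-i}y^{d+i}$ in $\QQ[x,y]^{(n)}$.

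The functional equation (\ref{MacWilliamsForDRP}) is obtained by a direct algebraic manipulation of (\ref{DuursmaByDzetaPolynomials}): substituting $t\mapsto 1/(|G|t)$, the denominator $(1-1/(|G|t))(1-1/t)$ equals $(1-|G|t)(1-t)/(|G|t^2)$, so multiplication by $|G|^{g-1}t^{g+g^\perp-2}$ restores $(1-t)(1-|G|t)$ in the denominator while transforming the numerator $P_C(1/(|G|t))-(1/(|G|t))^g$ into $P_{C^\perp}(t)-t^{g^\perp}$ after invoking (\ref{MacWilliamsDzetaPolynomial}). Hence (\ref{MacWilliamsForDRP}) is equivalent to (\ref{MacWilliamsDzetaPolynomial}), which by Proposition-Definition \ref{ExistsUniqueDzetaPolynomial} is equivalent to Mac Williams identities for $C$ and $C^\perp$.

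To obtain (\ref{DuursmasCoefficientsC}), I will compare the coefficients of $x^{n-s}y^s$ on both sides of (\ref{EnumeraotrExpressionCByDRP}). Expanding $(x-y)^{n-d-i}y^{d+i}=\sum_j(-1)^j\binom{n-d-i}{j}x^{n-d-i-j}y^{d+i+j}$, the unknowns $v_i:=(|G|-1)\binom{n}{d+i}c_i$ satisfy the lower-triangular system $u_s:=\cW_C^{(s)}-\cM_{n,n+1-k}^{(s)}=\sum_{i=0}^{s-d}(-1)^{s-d-i}\binom{n-d-i}{s-d-i}v_i$ for $s\geq d$. The main technical point is to invert this system; the candidate inverse $v_i=\sum_{s=d}^{d+i}u_s\binom{n-s}{n-d-i}$ is verified via the classical identity $\sum_r(-1)^r\binom{m}{r}\binom{m-r}{\ell}=\delta_{m,\ell}$, which follows by reading off the coefficient of $x^\ell$ in $\sum_r(-1)^r\binom{m}{r}(1+x)^{m-r}=(1+x)^m(1-(1+x)^{-1})^m=x^m$. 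Since $u_s=0$ for $s<d$, the lower index of summation can be relaxed to $s=0$, which is (\ref{DuursmasCoefficientsC}); the integrality of $(|G|-1)\binom{n}{d+i}c_i$ then follows from the integrality of the $u_s$ and of the binomial coefficients. The parallel argument for $D_{C^\perp}$ yields (\ref{DuursmasCoefficientsCPerp}). Overall, the only step demanding care is this binomial inversion, while the remaining steps are routine substitutions into the results of Section 2 and Proposition-Definition \ref{ExistsUniqueDzetaPolynomial}.
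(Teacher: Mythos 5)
Your proof is correct, and its overall skeleton matches the paper's: both substitute $P_C(t)=t^g+(1-t)(1-|G|t)D_C(t)$ into (\ref{CoefficientsDefinitionOfPC}), identify the $t^g$-piece with $\bigl(\cM_{n,n+1-k}(x,y)-x^n\bigr)/(|G|-1)$ via Corollary \ref{MDSEnumeratorAsCoefficients}, and obtain the functional equation by the same direct manipulation of (\ref{DuursmaByDzetaPolynomials}) combined with (\ref{MacWilliamsDzetaPolynomial}). Two of your steps diverge usefully from the paper. First, your rewriting $xt+y(1-t)=y+t(x-y)$, giving $[xt+y(1-t)]^n=\sum_j\binom{n}{j}t^j(x-y)^jy^{n-j}$, lands directly on the basis $(x-y)^{n-d-i}y^{d+i}$ and bypasses the paper's expansion in $x^{n-s}y^s t^{n-s}(1-t)^s$ followed by the regrouping identity $\binom{n}{s}\binom{s}{s-d-i}=\binom{n}{d+i}\binom{n-d-i}{s-d-i}$; this is a genuine simplification. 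Second, for (\ref{DuursmasCoefficientsC}) the paper substitutes $z:=x-y$ and reads the coefficients $\rho_{n-d-j}$ off directly in the basis $\{y^{n-j}z^j\}$ after expanding $(y+z)^{n-s}$ — no inversion is needed — whereas you set up the lower-triangular system in the monomial basis and invert it with $\sum_r(-1)^r\binom{m}{r}\binom{m-r}{\ell}=\delta_{m,\ell}$. Your inversion is valid (the truncation of the sum is harmless since $\binom{m-r}{\ell}$ vanishes for $r>m-\ell$, and $u_s=0$ for $s<d$ because $d\leq n+1-k$ forces $\cM_{n,n+1-k}^{(s)}=0=\cW_C^{(s)}$ there), but it costs you an extra combinatorial identity that the paper's change of basis renders unnecessary. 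Everything else — the degree bound, uniqueness via linear independence of the $(x-y)^{n-d-i}y^{d+i}$, and the integrality conclusion — agrees with the paper.
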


\begin{proof}

Towards the existence of $D_C(t) \in \QQ[t]$, let $q := |G|$ be the order of $G$ and
$$
\widetilde{D_C} (t) := \frac{P_C(t) - t^{g}}{(1-t)(1-qt)} = \sum\limits _{i=0} ^{g + g^{\perp} -2} c_i t^{i} \in \QQ[t].
$$
Then $P_C(t) = (1-t)(1-qt) \left( \sum\limits _{i=0} ^{g + g^{\perp} -2} c_i t^{i} \right) + t^g$ and (\ref{CoefficientsDefinitionOfPC}) can be written in the form
\begin{align*}
\frac{\cW_C (x,y) - x^n}{q-1} =  \\
 \sum\limits _{i=0} ^{g + g^{\perp} -2} c_i {\rm Coeff} _{t^{n-d-i}} ( [xt + y (1-t)] ^n) +
{\rm Coeff} _{t^{n-d-g}} \left( \frac{[xt + y (1-t)] ^n}{(1-t)(1-qt)} \right) =  \\
\sum\limits _{i=0} ^{g + g^{\perp} -2} c_i \left[ \sum\limits _{s=0} ^n \binom{n}{s} x^{n-s} y^s {\rm Coeff} _{t^{s-d-i}} (1-t)^s \right] +
\frac{\cM _{n, n-k+1} (x,y) - x^n}{q-1},
\end{align*}
 making use of  (\ref{MDSEnumeratorExpressionAsCoefficient}).
Note that ${\rm Coeff} _{t^{s-d-i}} (1-t)^s=0$ for $\forall s < d+i$, because $(1-t)^s$ has no pole at $t=0$ and $\binom{n}{s} \binom{s}{s-d-i} =
\binom{n}{d+i} \binom{n-d-i}{s-d-i}$.
As a result,
\begin{align*}
\cW_C(x,y) - \cM_{n,n+1-k} (x,y) =  \\
  \sum\limits _{i=0} ^{g + g^{\perp}-2} \sum\limits _{s=d+i} ^n (q-1) \binom{n}{s} \binom{s}{s-d-i} (-1) ^{s-d-i} c_i x^{n-s} y^s = \\
\sum\limits _{i=0} ^{g + g^{\perp} -2} \left[ \sum\limits _{s = d+i} ^n  \binom{n-d-i}{s-d-i} (-1)^{s-d-i} x^{n-s} y^{s-d-i} \right] (q-1) \binom{n}{d+i} c_i y^{d+i} =  \\
\sum\limits _{i=0} ^{g + g^{\perp} -2} \left[ \sum\limits _{j=0} ^{n-d-i} \binom{n-d-i}{j} (-1) ^{j} x^{n-d-i-j} y^{j} \right] \left[ (q-1) \binom{n}{d+i} c_i y^{d+i} \right] = \\
\sum\limits _{i=0} ^{g + g^{\perp} -2} \left[ (q-1) \binom{n}{d+i} c_i \right] (x-y) ^{n-d-i} y^{d+i}.
\end{align*}
Similar considerations for
$$
\widetilde{D_{C^{\perp}}} (t) := \frac{P_{C^{\perp}} (t) - t^{g^{\perp}}}{(1-t)(1-qt)} = \sum\limits _{i=0} ^{g + g^{\perp}-2} c_i ^{\perp} t^{i}
$$
provide (\ref{EnumeratorExpressionCPerpByDRP}).
This shows the existence of polynomials  $D_C (t) = \sum\limits _{i=0} ^{g + g^{\perp} -2} c_i t^{i}$, $D_{C^{\perp}} (t) = \sum\limits _{i=0} ^{ g + g^{\perp} -2} c_i ^{\perp} t^{i} \in \QQ[t]$, subject to (\ref{EnumeraotrExpressionCByDRP}), (\ref{EnumeratorExpressionCPerpByDRP}).

Towards the uniqueness of the polynomials $D_C(t)$, $D_{C^{\perp}}$, whose coefficients satisfy (\ref{EnumeraotrExpressionCByDRP}), (\ref{EnumeratorExpressionCPerpByDRP}), let us introduce $z := x-y$ and note that (\ref{EnumeraotrExpressionCByDRP}) is equivalent to
\begin{equation}   \label{EnumeratorCByYZ}
\sum\limits _{i=0} ^{g + g^{\perp} -2} \left[ \binom{n}{d+i} (q-1) c_i \right] y^{d+i} z^{n-d-i} = \cW_C (y+z, y) - \cM _{n, n+1-k} (y+z, y)
\end{equation}
and (\ref{EnumeratorExpressionCPerpByDRP}) amounts to
\begin{equation}   \label{EnumeratorCPerpByYZ}
\sum\limits _{i=0} ^{g + g^{\perp} -2} \left[ \binom{n}{d^{\perp} +i} (q-1) c_i ^{\perp} \right] y^{d^{\perp} +i} z^{n-d^{\perp} -i} =
\cW _{C^{\perp}} (y+z, y) - \cM _{n, k+1} (y+z, y).
\end{equation}
Therefore (\ref{EnumeraotrExpressionCByDRP}), respectively, (\ref{EnumeratorExpressionCPerpByDRP}) determine uniquely $c_i \in \QQ$, respectively,
$c_i ^{\perp} \in \QQ$ for $\forall 0 \leq i \leq g + g^{\perp} -2$ and Duursma's reduced polynomials $D_C(t) = \sum\limits _{i=0} ^{g + g^{\perp} -2} c_i t^{i}$, respectively, $D_{C^{\perp}} (t) = \sum\limits _{i=0} ^{g + g^{\perp} -2} c_i ^{\perp} t^{i}$ are related to the $\zeta$-polynomials $P_C(t) = \sum\limits _{i=0} ^{g + g^{\perp}} a_i t^{i}$, respectively, $P_{C^{\perp}} (t) = \sum\limits _{i=0} ^{ g + g^{\perp}} a_i ^{\perp} t^{i}$ by the equalities
(\ref{DuursmaByDzetaPolynomials}).

According to Proposition-Definition \ref{ExistsUniqueDzetaPolynomial}, Mac Williams  identities for $C$, $C^{\perp}$  are equivalent to the  functional equation (\ref{MacWilliamsDzetaPolynomial}) for $P_C(t)$, $P_{C^{\perp}} (t)$.
Making use of   (\ref{DuursmaByDzetaPolynomials}), one derives
\begin{align*}
D_C \left( \frac{1}{qt} \right) q^{g-1} t^{g + g^{\perp} -2} =
\left[ \frac{P_C \left( \frac{1}{qt} \right) - \frac{1}{q^g t^g}}{ \left(1 - \frac{1}{qt} \right) \left( 1 - \frac{1}{t} \right) } \right] q^{g-1}
t^{g + g^{\perp} -2} = \\
\left[ \frac{P_C \left( \frac{1}{qt} \right) - \frac{1}{q^g t^g} }{(qt-1)(t-1)} \right] q^g t^{g + g^{\perp}} =
\frac{P_C \left( \frac{1}{qt} \right) q^g t^{g + g^{\perp}} - t^{g^{\perp}}}{(1-t)(1-qt)} = \\
 \frac{P_{C^{\perp}}(t) - t^{g^{\perp}}}{(1-t)(1-qt)} = D_{C^{\perp}}(t),
\end{align*}
which reveals that (\ref{MacWilliamsDzetaPolynomial}) is equivalent to (\ref{MacWilliamsForDRP}).

Towards (\ref{DuursmasCoefficientsC}), let us denote $\rho _i := \binom{n}{d+i} (q-1) c_i$ for $0 \leq i \leq g + g^{\perp} -2$ and express (\ref{EnumeratorCByYZ}) in the form
\begin{align*}
\sum\limits _{i=0} ^{n - d - d^{\perp}} \rho _i y^{d+i} z^{n-d-i} =
\sum\limits _{s=0} ^n \left( \cW _C ^{(s)} - \cM _{n, n+1-k} ^{(s)} \right) (y+z) ^{n-s} y^s = \\
\sum\limits _{s=0} ^n \left( \cW _C ^{(s)} - \cM _{n, n+1-k} ^{(s)} \right) \left( \sum\limits _{i=0} ^{n-s} \binom{n-s}{i} y^{n-i} z^{i} \right).
\end{align*}
After changing the summation index of the left hand side to $j := n-d-i$ and exchanging the summation order of the right hand side, one obtains
\begin{align*}
\sum\limits _{j = d^{\perp}} ^{n-d} \rho _{n-d-j} y^{n-j} z^{j} =
\sum\limits _{i=0} ^n \left[ \sum\limits _{s=0} ^{n-i} \binom{n-s}{i} \left( \cW _C ^{(s)} - \cM _{n, n+1-k} ^{(s)} \right) \right] y^{n-i} z^{i}.
\end{align*}
The comparison of the coefficients of $y^{n-j} z^{j}$  in the above equality for $\forall d^{\perp} \leq j \leq n-d$ provides
$$
\rho _{n-d-j} = \sum\limits _{s=0} ^{n-j} \binom{n-s}{j} \left( \cW _C ^{(s)} - \cM _{n, n+1-k} ^{(s)} \right) \ \ \mbox{  for } \ \
\forall d^{\perp} \leq j \leq n-d.
$$
Changing the index to $i = n-d-j$, one derives (\ref{DuursmasCoefficientsC}).
Similar considerations on (\ref{EnumeratorCPerpByYZ}) yields (\ref{DuursmasCoefficientsCPerp}).

\end{proof}

In the case of $\FF_q$-linear codes, (\ref{DuursmasCoefficientsC}) and (\ref{DuursmasCoefficientsCPerp}) can be specified as follows:

\begin{corollary}   \label{DenominatorsCoeffDCLinear}
Let $C \subset \FF_q ^n$ be an $\FF_q$-linear code of minimum distance $d \geq 2$  with dual $C^{\perp} \subset \FF_q ^n$ of minimum distance $d^{\perp} \geq 2$ and
$$
D_C (t) = \sum\limits _{i=0} ^{g + g^{\perp}-2} c_i t^{i},  \ \ D_{C^{\perp}} (t) = \sum\limits _{i=0} ^{g + g^{\perp}-2} c_i ^{\perp} t^{i} \in \QQ[t]
$$
 be Duursma's reduced polynomials of $C$, $C^{\perp}$.
Then
\begin{equation}    \label{LinearDuursmasDenominators}
\binom{n}{d+i} c_i \in \ZZ \ \ \mbox{\rm  and } \ \ \binom{n}{d^{\perp}+i} c_i ^{\perp} \in \ZZ
\end{equation}
 are integers for all  $ 0 \leq i \leq g + g^{\perp} -2$.
\end{corollary}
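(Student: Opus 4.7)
The plan is to exploit the $\FF_q^*$-action on $\FF_q$-linear codes to show that every term on the right-hand side of equation (\ref{DuursmasCoefficientsC}) is divisible by $q-1$, whence dividing by $q-1$ yields the desired integrality.

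First, I would recall from Proposition \ref{WCByDCAnd CoeffDenominators} that for an arbitrary additive code
\[
(q-1)\binom{n}{d+i} c_i = \sum_{s=0}^{d+i} \bigl(\cW_C^{(s)} - \cM_{n,n+1-k}^{(s)}\bigr)\binom{n-s}{n-d-i} \in \ZZ.
\]
It suffices to show that the right-hand side, call it $\Sigma$, is divisible by $q-1$ as an integer, whence $\binom{n}{d+i} c_i = \Sigma/(q-1) \in \ZZ$. The analogous argument for $c_i^\perp$ then follows from the $\FF_q$-linearity of $C^\perp \subset \FF_q^n$.

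The key observation is a termwise divisibility. For $s = 0$ the term vanishes, since $\cW_C^{(0)} = \cM_{n,n+1-k}^{(0)} = 1$ (the unique weight-zero word in each case). For $s \geq 1$, I would argue separately that $(q-1)$ divides both $\cW_C^{(s)}$ and $\cM_{n,n+1-k}^{(s)}$. For $\cW_C^{(s)}$ this uses that $C \subset \FF_q^n$ is an $\FF_q$-linear subspace: the multiplicative group $\FF_q^*$ acts freely on the set of codewords of weight $s \geq 1$ via $\lambda \cdot c = (\lambda c_1, \ldots, \lambda c_n)$ (preserving the Hamming weight and fixing no nonzero codeword), so the set of weight-$s$ codewords partitions into orbits of size $q-1$. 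For $\cM_{n,n+1-k}^{(s)}$, the explicit formula (\ref{MDSWeightDistribution}) from Proposition \ref{MDSEnumerator} exhibits the factor $(|G|-1) = (q-1)$ directly for $s \geq n+1-k$, while for $1 \leq s < n+1-k$ the coefficient vanishes. In either case $(q-1) \mid \cM_{n,n+1-k}^{(s)}$ for all $s \geq 1$.

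Combining these, every summand $(\cW_C^{(s)} - \cM_{n,n+1-k}^{(s)})\binom{n-s}{n-d-i}$ is an integer multiple of $q-1$, whence $(q-1) \mid \Sigma$ in $\ZZ$ and $\binom{n}{d+i}c_i \in \ZZ$. Applying the same reasoning to (\ref{DuursmasCoefficientsCPerp}) with $C^\perp$ in place of $C$ gives $\binom{n}{d^\perp + i}c_i^\perp \in \ZZ$. No step looks delicate here; the only thing to be careful about is the boundary case $s = 0$ (handled by the explicit matching of constant terms) and noting that $\FF_q$-linearity — not merely additivity — is what permits the $\FF_q^*$-orbit argument, which is precisely what distinguishes this corollary from the additive version in the preceding proposition.
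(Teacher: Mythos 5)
Your proposal is correct and follows essentially the same route as the paper: the paper likewise uses the weight-preserving $\FF_q^*$-action (via the projectivizations $\PP(C)^{(s)}$, $\PP(C^{\perp})^{(s)}$) to get $(q-1)\mid \cW_C^{(s)}$ and $(q-1)\mid \cW_{C^{\perp}}^{(s)}$ for $s\geq 1$, reads off $(q-1)\mid \cM_{n,n+1-k}^{(s)}$ and $(q-1)\mid \cM_{n,k+1}^{(s)}$ from the explicit formula (\ref{MDSWeightDistribution}), and then divides (\ref{DuursmasCoefficientsC}), (\ref{DuursmasCoefficientsCPerp}) by $q-1$. Your explicit handling of the $s=0$ term is a small point the paper leaves implicit, but the argument is the same.
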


\begin{proof}

Note that the $\FF_q$-linear structure of $C$, $C^{\perp}$ induces actions
$$
\FF_q ^* \times C \longrightarrow C, \ \ (\lambda, a) \mapsto (\lambda a_1, \ldots , \lambda a_n),
$$
respectively,
$$
\FF_q ^* \times C^{\perp} \longrightarrow C^{\perp}, \ \ (\lambda, b) \mapsto ( \lambda b_1, \ldots , \lambda b_n),
$$
preserving the Hamming weights.
The orbit spaces $\PP(C) :=  C \setminus \{ 0_{\FF_q} ^n \} / \FF_q ^*$, respectively, $\PP(C^{\perp}) := C^{\perp} \setminus \{ 0_{\FF_q} ^n \} / \FF_q ^*$ are the corresponding projectivizations, which inherit the Hamming weight
$$
{\rm wt} : \PP( \FF_q^n) = \PP ^{n-1}( \FF_q) \longrightarrow \{   1, \ldots , n \}, \ \
{\rm wt} ([a]) :=  \left| \{ 1 \leq i \leq n \, \vert \, a_i \neq 0 \} \right|
$$
  from the projectivization $\PP( \FF_q^n) = \PP ^{n-1}( \FF_q)$ of the ambient vector space $\FF_q ^n \supset C, C^{\perp}$.
The subsets $\PP(C)^{(s)} := \{ [a] \in \PP(C) \, \vert \, {\rm wt} ([a]) =s \}$,  $\PP(C^{\perp}) ^{(s)} := \{ [b] \in \PP(C^{\perp}) \, \vert \, {\rm wt} ([b]) =s \}$ are of cardinality
$$
\left| \PP(C) ^{(s)} \right| = \frac{\cW _{C} ^{(s)}}{q-1} \in \ZZ^{\geq 0}, \ \ \mbox{  respectively, }
 \left| \PP(C^{\perp}) ^{(s)} \right| = \frac{\cW _{C^{\perp}} ^{(s)}}{q-1} \in \ZZ^{\geq 0} \ \ \mbox{  for } \ \ \forall 1 \leq s \leq n.
$$
Baring in mind that the integers
$$
\cM _{n, n+1-k} ^{(s)} = \binom{n}{s} (q-1) \left[ \sum\limits _{i=0} ^{s-n-1+k} (-1) ^{i} \binom{s-1}{i} q^{s-d-i} \right],
$$
respectively,
$$
\cM _{n, k+1} ^{(s)} = \binom{n}{s} (q-1) \left[ \sum\limits _{i=0} ^{s-k-1} (-1) ^{i} \binom{s-1}{i} q^{s-d-i} \right]
$$
from (\ref{MDSWeightDistribution}) are divisible by $q-1$, one  makes use of  (\ref{DuursmasCoefficientsC}), (\ref{DuursmasCoefficientsCPerp})
in order to derive (\ref{LinearDuursmasDenominators}).

\end{proof}

\section{Mac Williams identities for  additive  codes are equivalent to Polarized Riemann-Roch Conditions}

 In \cite{KM2} we have shown that a formal power series $\zeta (t) = \sum\limits _{m=0} ^{\infty} \cA_m t^m \in \ZZ[[t]]$ is a quotient $\zeta (t) = \frac{P(t)}{(1-t)(1-qt)}$ of a polynomial $P(t) \in \ZZ[t]$ exactly when $\zeta (t)$ is subject  to the generic Riemann-Roch Conditions
 $$
 \cA_m = - q^{m+1} {\rm Res}_{\frac{1}{q}}  ( \zeta (t)) - {\rm Res} _1 ( \zeta (t)) \ \ \mbox{  for  } \ \ \forall m \geq \deg P(t) -1
 $$
 and the residuums ${\rm Res} _{\frac{1}{q}} ( \zeta (t))$, ${\rm Res} _1 ( \zeta (t))$ of $\zeta (t)$ at its simple poles $\frac{1}{q}$, $1$.
 By its very definition, the $\zeta$-function $\zeta _C(t) = \frac{P_C(t)}{(1-t)(1-qt)} = \sum\limits _{m=0} ^{\infty} \cA _m (C) t^m$ of an additive code
  $(C, +) < (G^n, +)$ is a quotient of a polynomial $P_C(t) \in \QQ[t]$, so that satisfies the Generic Riemann-Roch Conditions
  $$
  \cA_m (C) = - q^{m+1} {\rm ReS} _{\frac{1}{|G|}} ( \zeta _C(t)) - {\rm Res} _1 ( \zeta _C(t)) =
  \frac{q^{m+1} P_C \left( \frac{1}{|G|} \right) - P_C(1)}{|G|-1}
  $$
 for $\forall m \geq g + g^{\perp} -1,$  where $g$ is the genus of $C$ and $g^{\perp}$ is the genus of $C^{\perp}$.
 The present section defines a more refined, polarized form of Riemann-Roch conditions and establishes the equivalence of the Mac Williams identities for $C$, $C^{\perp}$ to the polarized Riemann-Roch conditions on their $\zeta$-functions.

The following lemma   motivates the notion of Riemann-Roch Conditions for a formal power series of one variable.

\begin{lemma}   \label{RRCForCurves}
Let $X / \FF_q \subset \PP^N ( \overline{\FF_q})$ be a smooth irreducible   curve of genus $g$, defined over a finite field
 $\FF_q$ and $\zeta _X (t) = \sum\limits _{m=0} ^{\infty} \cA _m (X) t^m$ be the local  Weil    $\zeta$-function of $X$.
Then the Riemann-Roch Theorem  on $X$ implies the Riemann-Roch Conditions
$$
\cA_m (X) = q^{m-g+1} \cA _{2g-2-m} (X) +    (q^{m-g+1}-1)  {\rm Res} _1 ( \zeta _X(t)) \ \ \mbox{  for  } \ \ \forall m \geq g,
$$
 where $\cA_m (X)$ is   the number  of the effective divisors  of degree $m$ of the function field  $\FF_q (X)$  of $X$ over $\FF_q$   and  ${\rm Res} _1 ( \zeta _X(t))$ is the residuum
  of $\zeta _X(t)$ at $t=1$.
\end{lemma}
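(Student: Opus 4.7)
The plan is to exploit the standard counting of effective divisors via linear systems and then invoke Riemann-Roch to pair degrees $m$ and $2g-2-m$. Recall that the number $\cA_m(X)$ of effective divisors of degree $m$ of $\FF_q(X)$ decomposes as a sum over linear equivalence classes: each class $[D] \in \mathrm{Pic}^m(X)$ contributes $|\PP(\cL_X(D))| = \tfrac{q^{l(D)}-1}{q-1}$ effective representatives (with the convention that $l(D)=0$ and the contribution is $0$ when the class has no effective representative). Hence
\begin{equation*}
(q-1)\,\cA_m(X) \;=\; \sum_{[D] \in \mathrm{Pic}^m(X)} \bigl(q^{l(D)}-1\bigr) \;=\; \sum_{[D] \in \mathrm{Pic}^m(X)} q^{l(D)} \;-\; h,
\end{equation*}
where $h = |\mathrm{Pic}^m(X)|$ is the class number (independent of $m$, once one fixes a rational point of $X$).

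Next, I would plug in the Riemann-Roch Theorem $l(D) - l(K_X - D) = m - g + 1$ to rewrite $q^{l(D)} = q^{m-g+1}\,q^{l(K_X - D)}$, giving
\begin{equation*}
(q-1)\,\cA_m(X) \;=\; q^{m-g+1} \!\!\sum_{[D] \in \mathrm{Pic}^m(X)}\!\! q^{l(K_X - D)} \;-\; h.
\end{equation*}
The map $[D] \mapsto [K_X - D]$ is a bijection $\mathrm{Pic}^m(X) \to \mathrm{Pic}^{2g-2-m}(X)$, so after reindexing with $E := K_X - D$ and applying the same counting identity (valid also when the degree $2g-2-m$ is negative, since then $l(E) = 0$ and $\cA_{2g-2-m}(X)=0$) one obtains
\begin{equation*}
\sum_{[E] \in \mathrm{Pic}^{2g-2-m}(X)} q^{l(E)} \;=\; (q-1)\,\cA_{2g-2-m}(X) + h.
\end{equation*}
Substituting back yields
\begin{equation*}
(q-1)\,\cA_m(X) \;=\; q^{m-g+1}(q-1)\,\cA_{2g-2-m}(X) + h\bigl(q^{m-g+1}-1\bigr).
\end{equation*}

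Finally, I would identify the residue: since $\zeta_X(t) = \tfrac{P_X(t)}{(1-t)(1-qt)}$ has a simple pole at $t=1$ with
\begin{equation*}
{\rm Res}_1(\zeta_X(t)) \;=\; \lim_{t \to 1} (t-1)\,\zeta_X(t) \;=\; \frac{P_X(1)}{q-1} \;=\; \frac{h}{q-1},
\end{equation*}
dividing the previous identity by $(q-1)$ delivers the claimed Riemann-Roch Condition. I do not anticipate a genuine obstacle here: the only subtle point is justifying the class-number identification $P_X(1) = h$ and being careful to allow $2g-2-m$ to be negative, in which case the reindexed sum collapses because $l(E)=0$ for divisors of negative degree — both are standard consequences of the standard presentation of $\zeta_X(t)$ and of the definition of the linear series $\cL_X(E)$.
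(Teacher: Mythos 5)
Your proposal is correct and follows essentially the same route as the paper: decompose $\cA_m(X)$ as a sum of $\frac{q^{l(G_i)}-1}{q-1}$ over the $h$ linear equivalence classes of degree $m$, substitute the Riemann--Roch formula $l(G_i)=l(K_X-G_i)+m-g+1$, use that $K_X-G_i$ ranges over the classes of degree $2g-2-m$, and identify ${\rm Res}_1(\zeta_X(t))=\frac{h}{q-1}$. Your explicit remark about the degenerate case $2g-2-m<0$ is a welcome extra precision that the paper leaves implicit.
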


\begin{proof}

For an arbitrary divisor $G$ of the function field $\FF_q (X)$, let $H^0 (X, \cO _X ( [G]))$ be the space of the global sections of the line bundle, associated with $G$ and $l(G) := \dim _{\FF_q} H^0 (X, \cO _X ([G]))$.
Riemann-Roch Theorem  asserts  the existence of a  canonical divisor $K_X$ of degree $\deg K_X = 2g-2$ with
\begin{equation}     \label{RRTh}
l(G) = l(K_X -G) + \deg G -g+1
\end{equation}
for all divisors $G$ of $\FF_q(X)$.
In particular, if $\deg G > 2g-2$ then
$$
l(G) = \deg G -g +1.
$$

For any $k \in \NN$ let $X( \FF_{q^k}) := X \cap \PP^N ( \FF_{q^k})$ be the set of the $\FF_{q^k}$-rational points on $X$.
The formal power series
$$
\zeta _X(t) := \exp \left( \sum\limits _{k=1} ^{\infty} \left| X( \FF_{q^k} ) \right| \frac{t^k}{k} \right) \in \CC [[t]]
$$
is called the local Weil $\zeta$-function of $X$.
It is well known (cf.Theorem 4.1.11 from \cite{NX}) that there is a $\zeta$-polynomial $P_X(t) \in \ZZ[t]$ of $\deg P_X(t) = 2g$, such that
$$
\zeta _X(t) = \frac{P_X(t)}{(1-t)(1-qt)}
$$
  and the residuum
$$
{\rm Res} _1 ( \zeta _X(t)) = \frac{P_X(1)}{q-1} = \frac{h}{q-1}
$$
 for the class number  $h$  of $\FF_q (X)$.
If $G_1, \ldots , G_h$ is a complete set of representatives of the linear equivalence classes of the divisors of $\FF_q (X)$ of degree $m \in \ZZ^{\geq 0}$ then
the divisors  $K_X - G_1, \ldots , K_X - G_h$ form  a complete set of representatives of the linear equivalence classes   of $\FF_q (X)$ of degree
 $\deg (K_X - G_i) = 2g-2-m$.
The effective divisors of $\FF_q (X)$, which are linearly equivalent to $G_i$ constitute  the projective space
 $\PP ( H^0 (X, \cO _X ( [G_i]))) = \PP ^{l(G_i) -1} ( \FF_q)$.
Thus, the number of the effective divisors of $\FF_q (X)$ of degree $m$ is
\begin{equation}   \label{AX}
\cA _m (X) = \sum\limits _{i=1} ^h \left| \PP^{l (G_i) -1} ( \FF_q) \right| = \sum\limits _{i=1} ^h \frac{q^{l(G_i)} -1}{q-1}.
\end{equation}
Substituting (\ref{RRTh}) in (\ref{AX}), one obtains
$$
\cA _m (X) = q^{m-g+1} \sum\limits _{i=1} ^h \left( \frac{q^{l (K_X - G_i)} -1}{q-1} \right) + h \left( \frac{q^{m-g+1} -1}{q-1} \right).
$$
Bearing in mind that
$$
\sum\limits _{i=1} ^h \left( \frac{q^{l(K_X - G_i)} -1}{q-1} \right) = \cA _{2g-2-m} (X),
$$
one concludes that
\begin{equation}   \label{RRC_X}
\cA _m (X) = q^{m-g+1} \cA _{2g-2-m} (X) + h \left( \frac{ q^{m-g+1}-1}{q-1} \right) \ \ \mbox{  for } \ \ \forall m \geq 0.
\end{equation}
Note that in the case of $g \geq 2$ the relations (\ref{RRC_X}) with  $0 \leq m \leq g-2$   are equivalent to
 are equivalent to the ones with index $g \leq 2g-2-m \leq 2g-2$   and (\ref{RRC_X}) is trivial for $m = g-1$.
If $g=0$ then $X = \PP^1 ( \overline{\FF_q})$ is the projective line and the equalities  (\ref{RRC_X}) reduce to
$$
\cA_m (X) = \frac{q^{m+1}-1}{q-1} \ \ \mbox{ for } \ \ \forall m \geq 0.
$$
When  $g=1$, the curve $X$ is elliptic and
$$
\cA_m (X) = h \left( \frac{q^m-1}{q-1} \right)
$$
 for the class number $h$ and all $m \geq 1$.

\end{proof}

\begin{definition}    \label{RRC}
A formal power series $\zeta (t) = \sum\limits _{m=0} ^{\infty} \cA _m t^{m} \in {\mathbb C}[[t]]$ satisfies the Riemann-Roch Conditions ${\rm RRC}_q (g)$ with base $q \in \NN$ of genus $g \in {\mathbb  Z}^{\geq 0}$ if
$$
\cA _m = q^{m-g+1} \cA_{2g-2-m} +     (q^{m-g+1}-1)  {\rm Res} _1 ( \zeta (t))   \ \ \mbox{\rm  for  } \ \ \forall m \geq g
$$
and the residuum ${\rm Res} _1 ( \zeta (t))$ of $\zeta (t)$ at $t=1$.
\end{definition}

Here is a polarized version of the Riemann-Roch Conditions.

\begin{definition}   \label{PRRC}
Formal power series $\zeta (t) = \sum\limits _{m=0} ^{\infty} \cA _m t^m \in \CC[[t]]  $ and
$\zeta ^{\perp} (t) = \sum\limits _{m=0} ^{\infty}\cA ^{\perp} _m t^m  \in \CC[[t]] $ satisfy the Polarized Riemann-Roch Conditions ${\rm PRRC}_q (g, g^{\perp})$   of genera $g, g^{\perp} \in {\mathbb Z} ^{\geq 0}$ with base $q \in \NN$ if
$$
\cA_m = q^{m-g+1} \cA ^{\perp} _{g + g^{\perp} -2-m} +    (q^{m-g+1}-1) {\rm Res} _1 ( \zeta (t)) \ \ \mbox{\rm  for  }  \ \ \forall m \geq g,
$$
$$
\cA _{g-1} = \cA ^{\perp} _{g^{\perp} -1} \ \ \mbox{\rm  and  }
$$
$$
\cA ^{\perp} _m = q^{m-g^{\perp}+1}  \cA _{g + g^{\perp} -2-m} +   (q^{m - g^{\perp} +1}-1)  {\rm Res} _1 ( \zeta ^{\perp} (t)) \ \
\mbox{\rm  for } \ \ \forall m \geq g^{\perp},
$$
where ${\rm Res} _1 ( \zeta (t))$, $ {\rm Res} _1 ( \zeta ^{\perp} (t))$ stand for the corresponding  residuums at $t=1$.
\end{definition}

One can view the  Riemann-Roch Theorem  on a smooth irreducible projective curve  $X / \FF_q \subset \PP^N ( \overline{\FF_q})$ as a quantitative expression of the Serre duality on $X$.
Thus,  ${\rm RRC}_q (g)$ and, therefore, ${\rm PRRC}_q (g, g^{\perp})$   may be interpreted as a Serre duality between the formal power series
 $\zeta(t)$, $\zeta ^{\perp}(t)$.

 Observe also  that ${\rm PRRC}_q (g, g^{\perp})$ implies
$$
\cA _m = \kappa_1 q^m + \kappa_2, \ \ \cA _m ^{\perp} = \kappa_1 ^{\perp} q^m + \kappa_2 ^{\perp} \ \ \mbox{ for } \ \ \forall m \geq g + g^{\perp} -1
$$
 and some constants $\kappa_j, \kappa_j ^{\perp} \in \CC$.
These are  equivalent to the recurrence relations
$$
\cA _{m+2}  - (q+1) \cA _{m+1} + q \cA_m = \cA _{m+2} ^{\perp}  - (q+1) \cA _{m+1} ^{\perp} + q \cA_m ^{\perp} =  0 \ \ \mbox{  for } \ \
\forall m \geq g + g^{\perp} -1
$$
and hold exactly when
$$
\zeta (t) = \frac{P(t)}{(1-t)(1-qt)} \ \ \mbox{  and  } \ \ \zeta ^{\perp} (t) = \frac{P^{\perp}(t)}{(1-t)(1-qt)}
$$
 for polynomials $P(t)$, $P^{\perp}(t)$.
 Thus, the Polarized Riemann-Roch Conditions with base $q$    require $\zeta (t)$, $\zeta ^{\perp} (t)$ to be rational functions of $t$ with denominators $(1-t)(1-qt)$ and imply the Generic Riemann-Roch Conditions on $\zeta (t)$ and on $\zeta ^{\perp}(t)$.
 Note that the Generic Riemann-Roch Conditions for $\zeta (t) = \frac{P(t)}{(1-t)(1-qt)}$ coincide with the Riemann-Roch Conditions ${\rm RRC} _q (g)$ with $m \geq 2g-1$ if and only if 
 $$
 P \left( \frac{1}{q} \right) = \left( \frac{1}{q} \right) ^g P(1).
 $$



While preparing the present article, we came up with Randriambololona's article \cite{Hugues} on Harder-Narasimhan theory, Serre duality and Riemann-Roch Theorem for linear codes.
Our main Theorem \ref{MWIffPRRC} reveals that the Riemann-Roch Theorem 44 from \cite{Hugues} is stronger than our Polarized Riemann-Roch Conditions ${\rm PRRC} _q (g, g^{\perp})$ from Definition \ref{PRRC}.
More precisely, for an arbitrary $\FF_q$-linear $[n,k,d]$-code $C \subset \FF_q ^n$ of genus $g = n+1 -d-k$ and an arbitrary subset
$J \subseteq [n] := \{ 1, \ldots , n \}$,  Randriambololona defines the cohomology group $H^0 (C, J) := C \cap \FF_q ^J$ as the  largest linear subspace of $C$ with support $J$.
If $C^{\perp} \subset \FF_q ^n$ is the dual code of $C$ and $[n] \setminus J$ is the complement of $J$, then the Riemann-Roch Theorem 44 from \cite{Hugues} asserts that
\begin{equation}     \label{HuguesRRT}
\dim _{\FF_q} H^0 (C,J) - \dim _{\FF_q} H^0 (C^{\perp}, [n] \setminus J)= (|J| -d) -g +1.
\end{equation}
The author mentions that (\ref{HuguesRRT}) implies the functional equation
\begin{equation}   \label{EEDzetaFunctions}
\zeta _{C^{\perp}} (t) = \zeta _C \left( \frac{1}{qt} \right) q^{g-1} t^{g + g^{\perp} -2}
\end{equation}
of Duursma's $\zeta$-functions $\zeta _C(t)$, $\zeta _{C^{\perp}} (t)$ of $C, C^{\perp}$.
As far as (\ref{EEDzetaFunctions}) is equivalent to Mac Williams identities for the weight distribution of $C, C^{\perp}$, our Theorem \ref{MWIffPRRC} reveals that  Randriambololona's Riemann-Roch Theorem 44 from \cite{Hugues} implies our Polarized Riemann-Roch Conditions ${\rm PRRC} _{q} (g, g^{\perp})$
 on $\zeta _C(t)$, $\zeta _{C^{\perp}} (t)$.


Here is the main result of the present article, which    interprets  Mac Williams duality on additive codes as Polarized Riemann-Roch Conditions or as a polarized form of the Serre duality on a smooth irreducible projective curve, defined over a finite field.

\begin{theorem}   \label{MWIffPRRC}
Mac Williams identities for the weight distribution of an additive code $(C, +) < (G^n, +)$ of minimum distance $d \geq 2$ and genus $g$ with dual
$(C^{\perp}, .) < (\widehat{G}^n, .)$ of minimum distance $d^{\perp} \geq 2$ and genus $g^{\perp}$ are equivalent to  the Polarized Riemann-Roch Conditions
 ${\rm PRRC} _{|G|} (g, g^{\perp})$ on their  $\zeta$-functions
  $\zeta _C(t)$, $\zeta _{C^{\perp}}(t)$.
\end{theorem}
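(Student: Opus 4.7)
The plan is to reduce everything to a clean coefficient comparison that goes through Duursma's reduced polynomials. By Proposition \ref{WCByDCAnd CoeffDenominators}, Mac Williams identities for $(C, C^{\perp})$ are equivalent to the functional equation (\ref{MacWilliamsForDRP}) on $D_C(t) = \sum_{i=0}^{g+g^{\perp}-2} c_i t^i$ and $D_{C^{\perp}}(t) = \sum_{j=0}^{g+g^{\perp}-2} c_j^{\perp} t^j$, which in coefficients reads
\[
c_j^{\perp} = |G|^{j - g^{\perp} + 1}\, c_{g + g^{\perp} - 2 - j}, \qquad 0 \leq j \leq g + g^{\perp} - 2.
\]
My plan is to show that this same system is equivalent to ${\rm PRRC}_{|G|}(g, g^{\perp})$ applied to $(\zeta_C(t), \zeta_{C^{\perp}}(t))$. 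Set $q := |G|$ throughout.

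From (\ref{DuursmaByDzetaPolynomials}) one has $P_C(t) = (1-t)(1-qt)\, D_C(t) + t^{g}$, so the $\zeta$-function splits as $\zeta_C(t) = D_C(t) + t^{g}/[(1-t)(1-qt)]$. Expanding the geometric-series term gives the coefficient formulas
\[
\cA_m(C) = c_m \quad (0 \leq m < g), \qquad \cA_m(C) = c_m + \frac{q^{m - g + 1} - 1}{q - 1} \quad (m \geq g),
\]
with the convention $c_m := 0$ for $m > g + g^{\perp} - 2$. In particular ${\rm Res}_1(\zeta_C) = P_C(1)/(q-1) = 1/(q-1)$, and the analogous decomposition and residue formula hold for $\zeta_{C^{\perp}}$ with $g$ replaced by $g^{\perp}$.

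Substituting these into the first PRRC condition $\cA_m(C) = q^{m-g+1}\, \cA_{g+g^{\perp}-2-m}^{\perp}(C^{\perp}) + (q^{m-g+1}-1)\, {\rm Res}_1(\zeta_C)$ for $m \geq g$, the residue contributions cancel and the condition collapses to
\[
c_m = q^{m - g + 1}\, \cA_{g + g^{\perp} - 2 - m}^{\perp}(C^{\perp}).
\]
For $m \geq g + g^{\perp} - 1$ both sides vanish automatically (left side because $c_m = 0$, right side because the index on the right is negative). For $g \leq m \leq g + g^{\perp} - 2$ the index $k := g + g^{\perp} - 2 - m$ lies in $[0, g^{\perp} - 2]$, hence $\cA_k^{\perp}(C^{\perp}) = c_k^{\perp}$, and after the substitution $j := k$ the condition becomes exactly $c_j^{\perp} = q^{j - g^{\perp} + 1}\, c_{g + g^{\perp} - 2 - j}$ for $0 \leq j \leq g^{\perp} - 2$. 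The third PRRC condition, by the symmetric argument, yields the complementary relations for $g^{\perp} \leq j \leq g + g^{\perp} - 2$. The middle condition $\cA_{g-1}(C) = \cA_{g^{\perp}-1}^{\perp}(C^{\perp})$, whose indices both lie below their respective genera, reduces to $c_{g-1} = c_{g^{\perp}-1}^{\perp}$, which is precisely the single missing relation at $j = g^{\perp} - 1$.

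There is no deep obstacle; the argument is essentially a bookkeeping check. The one delicate point I will highlight is verifying that the three families of PRRC conditions assemble, without overlap or omission, into the complete family of $g + g^{\perp} - 1$ functional-equation relations. This is guaranteed by the partition $\{0, \ldots, g^{\perp}-2\} \sqcup \{g^{\perp}-1\} \sqcup \{g^{\perp}, \ldots, g + g^{\perp} - 2\}$ of the index set for $j$, with condition (i) supplying the first block, condition (iii) the third, and the middle equation the isolated index; combined with Proposition \ref{WCByDCAnd CoeffDenominators}, this completes the equivalence.
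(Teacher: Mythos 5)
Your proposal is correct and follows essentially the same route as the paper's proof: both reduce Mac Williams identities to the functional equation for $D_C(t)$, $D_{C^{\perp}}(t)$ via Proposition \ref{WCByDCAnd CoeffDenominators}, use the splitting $\zeta _C(t) = D_C(t) + t^{g}/[(1-t)(1-|G|t)]$ to identify $\cA _m (C)$ with $c_m$ for $m \leq g-1$ (and to absorb the residue terms for $m \geq g$), and then match the resulting three families of coefficient relations with the three conditions of ${\rm PRRC}_{|G|}(g, g^{\perp})$. The only cosmetic differences are the direction of the coefficient comparison and that the paper treats the degenerate case $g = g^{\perp} = 0$ separately, which your uniform conventions ($c_m = 0$ beyond $\deg D_C$, negative-index coefficients equal to zero) also cover.
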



\begin{proof}

Let us denote by $q := |G|$ the order of $G$.
First, we prove the theorem for  $g, g^{\perp} \in \NN$.
If $\zeta _C(t) := \frac{P_C(t)}{(1-t)(1-qt)} = \sum\limits _{m=0} ^{\infty} \cA_m (C) t^m$ for some $\cA_m (C) \in \QQ$ then
\begin{equation}     \label{D_And_A}
D_C(t) = \zeta _C(t) - \frac{t^g}{(1-t)(1-qt)} =
 \sum\limits _{m=0} ^{\infty} \cA _m (C) t^m - \sum\limits _{m=g} ^{\infty} \left( \frac{q^{m-g+1}-1}{q-1} \right) t^m
\end{equation}
by
\begin{align*}
\frac{t^g}{(1-t)(1-qt)} =  t^g \left( \sum\limits _{i=0} ^{\infty} t^{i} \right) \left( \sum\limits _{j=0} ^{\infty} q^{j} t^{j} \right) =  \\
\sum\limits _{m=g} ^{\infty} ( q^{m-g} + q^{m-g-1} + \ldots + q +1) t^m =
\sum\limits _{m=g} ^{\infty} \left( \frac{q^{m-g+1} -1}{q-1} \right) t^m \in \ZZ [[t]].
\end{align*}
Thus, $D_C(t) - \zeta _C(t)$ is a sum of terms of degree $\geq g$,
\begin{equation}   \label{FirstFormulaDC}
\begin{split}
D_C (t) = \sum\limits _{m=0} ^{g-2} \cA _m (C) t^m + \cA _{g-1} (C) t^{g-1} + \sum\limits _{i=g} ^{g + g^{\perp} -2} c_i t^{i} =  \\
= \sum\limits _{m=0} ^{g-2} \cA _m (C) t^m + \cA _{g-1} (C) t^{g-1} +
\left( \sum\limits _{m=0} ^{g^{\perp}-2} c_{g + g^{\perp} -2-m} t^{-m} \right) t^{g + g^{\perp} -2}
\end{split}
\end{equation}
and, respectively,
$$
D_{C^{\perp}} (t) = \sum\limits _{m=0} ^{g^{\perp} -2} \cA _m ( C^{\perp} ) t^m + \cA _{g^{\perp} -1} (C^{\perp}) t^{g^{\perp} -1} +
\left( \sum\limits _{m=0} ^{g-2} c^{\perp} _{g + g^{\perp} -2-m} t^{-m} \right) t^{g + g^{\perp} -2}
$$
for $\zeta _{C^{\perp}} (t) := \frac{P_{C^{\perp}} (t)}{(1-t)(1-qt)} = \sum\limits _{m=0} ^{\infty} \cA _m ( C^{\perp}) t^m$.
According to
\begin{align*}
D_C \left( \frac{1}{qt} \right) q^{g-1} t^{g + g^{\perp} -2} =  \\
= \sum\limits _{m=0} ^{g-2} \cA _m (C) q^{ g-1-m} t^{g + g^{\perp} -2-m} + \cA _{g-1} (C) t^{g^{\perp}-1} +
\sum\limits _{m=0} ^{ g^{\perp} -2} c_{g + g^{\perp} -2-m} q^{m - g^{\perp} +1} t^m = \\
= \sum\limits _{m=0} ^{ g^{\perp} -2} c_{g + g^{\perp} -2-m} q^{ m - g^{\perp} +1} t^m + \cA _{g-1} (C) t^{g^{\perp} -1} +
\left( \sum\limits _{m=0} ^{ g-2} q^{ -m+g-1} \cA _m (C) t^{-m} \right) t^{g + g^{\perp} -2},
\end{align*}
Mac Williams identities (\ref{MacWilliamsForDRP}) for Duursma's reduced polynomials of a pair $C, C^{\perp} \subset \FF_q ^n$ of mutually dual linear codes of genus $g \geq 1$, respectively,  $g^{\perp} \geq 1$ amount to
\begin{equation}    \label{II_Part_D}
c_{g + g^{\perp} -2-m} = q^{ -m + g^{\perp} -1} \cA _m (C^{\perp}) \ \ \mbox{  for  } \ \ \forall 0 \leq m \leq g^{\perp} -2,
\end{equation}
\begin{equation}   \label{MW_Middle}
\cA _{g^{\perp} -1} (C^{\perp}) = \cA _{g-1} (C) \ \ \mbox{  and }
\end{equation}
\begin{equation}   \label{II_Part_DBar}
c^{\perp} _{g + g^{\perp} -2-m} = q^{ -m + g -1} \cA _m (C) \ \ \mbox{  for } \ \ \forall 0 \leq m \leq g-2.
\end{equation}
Substituting $m = g + g^{\perp} -2-i$ and making use of (\ref{D_And_A}), one observes that (\ref{II_Part_D}) is equivalent to
$$
\cA _i (C) = q^{ i -g+1} \cA _{g + g^{\perp} -2-i} ( C^{\perp}) + \left( \frac{q^{i-g+1} -1}{q-1} \right) \ \ \mbox{  for  } \ \
\forall g \leq i \leq g + g^{\perp} -2.
$$
Exchanging $C$ with $C^{\perp}$, one expresses (\ref{II_Part_DBar}) in the form
$$
\cA _i (C^{\perp}) = q^{i - g^{\perp} +1} \cA _{g + g^{\perp} -2-i} (C) + \left( \frac{q^{i - g^{\perp} +1}-1}{q-1} \right) \ \ \mbox{  for } \ \
\forall g^{\perp} \leq i \leq g + g^{\perp} -2.
$$
According to (\ref{D_And_A}),
$$
\cA_i (C) = \frac{q^{i-g+1}-1}{q-1} \ \ \mbox{ for } \ \ \forall i \geq g + g^{\perp} -1.
$$
Similarly,
$$
\cA _i (C^{\perp}) = \frac{q^{i-g^{\perp}+1}-1}{q-1} \ \ \mbox{  for } \ \ \forall i \geq g + g^{\perp} -1.
$$
Bearing in mind that $\zeta _C(t)$ and $\zeta _{C^{\perp}}(t)$ have no pole at $t=0$, one introduces $\cA _{-j} (C) = \cA _{-j} (C^{\perp}) =0$
 for $\forall j \in \NN$ and  expresses Mac Williams identities in the form
\begin{equation}   \label{MWA}
\cA _i (C) = q^{i -g+1} \cA _{g + g^{\perp} -2-i} (C^{\perp}) + \left( \frac{q^{i-g+1}-1}{q-1} \right) \ \ \mbox{ for  } \ \
\forall i \geq g,
\end{equation}
\begin{equation}    \label{MWB}
\cA _{g^{\perp} -1} (C^{\perp}) = \cA _{g-1} (C) \ \ \mbox{  and  }
\end{equation}
\begin{equation}   \label{MWC}
\cA _i (C^{\perp}) = q^{i - g^{\perp} +1} \cA _{g + g^{\perp} -2-i} (C) + \left( \frac{q^{i - g^{\perp} +1}-1}{q-1} \right) \ \ \mbox{  for } \ \
\forall i \geq g^{\perp}.
\end{equation}
Note also that the rational function
$$
\zeta _C (t) = \frac{P_C(t)}{(1-t)(1-qt)}
$$
 has residuum
 $$
 {\rm Res} _1 ( \zeta _C(t)) = \frac{P_C(1)}{q-1} = \frac{1}{q-1}
 $$
  at $1$.
   Thus, for $g \geq 1$ and $g^{\perp} \geq 1$,  Mac Williams identities (\ref{MWA}), (\ref{MWB}), (\ref{MWC}) for   $C, C^{\perp} \subset \FF_q ^n$ are equivalent to the polarized Riemann-Roch conditions ${\rm PRRC} (g, g^{\perp})$.

   In the case of $g=0$, one has $|C| = |G| ^{n+1-d} < |G|^n$ by $d \geq 2$.
  Thus, Lemma \ref{MDSDualIdMDS} applies to provide $g^{\perp} =0$.
  The $\zeta$-functions
  $$
  \zeta _C(t) = \zeta _{C^{\perp}} (t) = \frac{1}{(1-t)(1-qt)} = \zeta _{\PP^1 ( \overline{\FF_q})} (t)
  $$
  coincide with the $\zeta$-function of the projective line $\PP^1 ( \overline{\FF_q})$ and satisfy the Riemann-Roch Conditions ${\rm RRC} _{|G|} (0)$ of genus $0$, which are equivalent to the Polarized Riemann-Roch Conditions ${\rm PRRC} _{|G|} (0,0)$.

   The $\zeta$-functions
 $$
   \zeta _C(t) = \zeta _{C^{\perp}} (t) = \frac{1}{(1-t)(1-qt)} = \zeta _{\PP^1 ( \overline{\FF_q})} (t)
   $$
   coincide with the $\zeta$-function of the projective line $\PP^1 ( \overline{\FF_q})$ and satisfy the Riemann-Roch Conditions ${\rm RRC} (0)$ of genus $g=0$,
   which are equivalent to the Polarized Riemann-Roch Conditions ${\rm PRRC} (0, 0)$.

\end{proof}

As a byproduct of the proof of Theorem \ref{MWIffPRRC} we obtain the following

\begin{corollary}    \label{LowerPartD}
The lower parts $\varphi _C (t) = \sum\limits _{i=0} ^{g-2} c_i t^{i} $,
$\varphi _{C^{\perp}} (t) = \sum\limits _{i=0} ^{g^{\perp}-2} c_i ^{\perp} t^{i} $
of Duursma's reduced polynomials $D_C(t) = \sum\limits _{i=0} ^{g + g^{\perp} -2} c_i t^{i}$,
$D_{C^{\perp}} (t) = \sum\limits _{i=0} ^{g + g^{\perp} -2} c_i ^{\perp} t^{i}$ of an  additive code $(C,+) < (G^n, +)$  of minimum distance $d \geq 2$ and genus $g \geq 1$ and its dual $(C^{\perp}, .) < (\widehat{G}^n, .)$ of minimum distance $d^{\perp} \geq 2$ and genus $g^{\perp} \geq 1$, together with the number
 $c_{g-1} = c^{\perp} _{g^{\perp}-1} \in \QQ$ determine uniquely
\begin{equation}   \label{DCFormula}
D_C(t) = \varphi _C(t) + c_{g-1} t^{g-1} + \varphi _{C^{\perp}} \left( \frac{1}{qt} \right) q^{g^{\perp}-1} t^{g + g^{\perp} -2},
\end{equation}
\begin{equation}    \label{DCBarFormula}
D_{C^{\perp}} (t) = \varphi _{C^{\perp}} (t) + c_{g-1} t^{g^{\perp}-1} + \varphi _C \left( \frac{1}{qt} \right) q^{g-1} t^{g + g^{\perp}-2}.
\end{equation}
\end{corollary}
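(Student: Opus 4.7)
The plan is to read off formulas (\ref{DCFormula}) and (\ref{DCBarFormula}) directly from the structural identities established in the proof of Theorem~\ref{MWIffPRRC}, so no new input is required beyond that proof. The first step is to observe that the expansion (\ref{FirstFormulaDC}), obtained from (\ref{D_And_A}) by separating out the terms of degree $<g$, asserts that the low-order coefficients $c_0,\ldots,c_{g-1}$ of $D_C(t)$ coincide with $\cA_0(C),\ldots,\cA_{g-1}(C)$; applying the same reasoning to $C^\perp$ shows $c^\perp_m = \cA_m(C^\perp)$ for $0 \leq m \leq g^\perp-1$. Thus $\varphi_C(t)$ and $\varphi_{C^\perp}(t)$ encode precisely the low-order coefficients of $\zeta_C(t)$ and $\zeta_{C^\perp}(t)$, and the middle coefficient $c_{g-1}=\cA_{g-1}(C)$ equals $\cA_{g^\perp-1}(C^\perp)=c^\perp_{g^\perp-1}$ by the Mac Williams identity (\ref{MW_Middle}).

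Next I would invoke Mac Williams duality in the form (\ref{II_Part_D}), which reads $c_{g+g^\perp-2-m} = q^{g^\perp-1-m}\cA_m(C^\perp) = q^{g^\perp-1-m} c^\perp_m$ for $0\leq m\leq g^\perp-2$, and thereby expresses the upper half of $D_C(t)$ entirely in terms of the coefficients of $\varphi_{C^\perp}(t)$. Substituting $i=g+g^\perp-2-m$ and summing gives
$$\sum_{i=g}^{g+g^\perp-2} c_i t^i \;=\; \sum_{m=0}^{g^\perp-2} q^{g^\perp-1-m} c^\perp_m\, t^{g+g^\perp-2-m} \;=\; \varphi_{C^\perp}\!\left(\frac{1}{qt}\right) q^{g^\perp-1} t^{g+g^\perp-2}.$$
Adding the lower part $\varphi_C(t)$ and the middle term $c_{g-1} t^{g-1}$ yields (\ref{DCFormula}); by the same token the symmetric formula (\ref{DCBarFormula}) is obtained by exchanging the roles of $C$ and $C^\perp$ and using (\ref{II_Part_DBar}) in place of (\ref{II_Part_D}), where the identity $c_{g-1}=c^\perp_{g^\perp-1}$ ensures the middle terms match.

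Uniqueness of the presentations (\ref{DCFormula}), (\ref{DCBarFormula}) is then automatic: the right-hand sides are explicit expressions built from the data $\varphi_C(t)$, $\varphi_{C^\perp}(t)$ and the scalar $c_{g-1}=c^\perp_{g^\perp-1}$, so specifying this data pins down $D_C(t)$ and $D_{C^\perp}(t)$ completely. No genuine obstacle is anticipated; the corollary is a bookkeeping consequence of the proof of Theorem~\ref{MWIffPRRC}, and the only delicate point is index-matching between the Laurent terms $t^{-m}$ appearing inside $\varphi_{C^\perp}(1/(qt))$ and the descending-degree monomials in the upper half of $D_C(t)$, which is handled by the substitution $i=g+g^\perp-2-m$.
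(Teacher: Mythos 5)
Your proposal is correct and follows essentially the same route as the paper: the paper likewise identifies $c_m=\cA_m(C)$ for $m\leq g-1$ (and dually) via (\ref{D_And_A}), then substitutes the Mac Williams relation (\ref{II_Part_D}) into the expansion (\ref{FirstFormulaDC}) to recognize the upper half of $D_C(t)$ as $\varphi_{C^{\perp}}\left(\frac{1}{qt}\right)q^{g^{\perp}-1}t^{g+g^{\perp}-2}$, and obtains (\ref{DCBarFormula}) by symmetry. Your index-matching via $i=g+g^{\perp}-2-m$ is exactly the paper's computation.
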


\begin{proof}

The substitution of (\ref{II_Part_D}) in (\ref{FirstFormulaDC}) yields
$$
D_C(t) = \varphi _C(t) + c_{g-1} t^{g-1} + \left( \sum\limits _{m=0} ^{g^{\perp}-2} c^{\perp} _m q^{-m} t^{-m} \right) q^{g^{\perp}-1} t^{ g + g^{\perp}-2},
$$
whereas (\ref{DCFormula}).
Replacing $C$ by $C^{\perp}$, $C^{\perp}$ by $C$ and $c^{\perp} _{g^{\perp}-1}$ by $c_{g-1}$, one obtains (\ref{DCBarFormula}).

\end{proof}

\section{ Averaging, algebraic-geometric and probabilistic  \\ interpretations of the coefficients of Duursma's  \\ reduced polynomial }

Let $G$ be a finite abelian group, $(C, +) \leq (G^n, +)$ be an additive code.
Abbreviate $[n] := \{ 1, \ldots , n \}$ and denote by $\binom{[n]}{s}$ the collection of the subsets $\alpha = \{ \alpha _1, \ldots , \alpha _s \} \subseteq [n]$ of cardinality $1 \leq s \leq n$.
 We proceed  by an averaging interpretation  of the lower parts of Duursma's reduced polynomials of $C$ and $C^{\perp}$.

\begin{proposition}    \label{CoeffDCAreAverages}
Let $(C, +) < (G^n, +)$ be an additive code of minimum distance $d \geq 2$ and genus $g \geq 1$  with dual $(C^{\perp}, .) < ( \widehat{G}^n, .)$ of minimum distance $d^{\perp} \geq 2$ and genus $g^{\perp} \geq 1$.
Denote by $D_C (t) = \sum\limits _{i=0} ^{g + g^{\perp}-2} c_i t^{i}$, $D_{C^{\perp}} (t) = \sum\limits _{i=0} ^{g + g^{\perp}-2} c_i ^{\perp} t^{i} \in \QQ[t]$
Duursma's reduced polynomials of these codes and put
$$
(C \setminus \{ 0_G ^n \} ) ^{( \subseteq \gamma)} := \{ a \in C \setminus \{ 0_G ^n \} \, \vert \, {\rm Supp} (a) \subseteq \gamma \},
$$
respectively,
$$
(C^{\perp} \setminus \{ \varepsilon ^n \} ) ^{( \subseteq \gamma)} := \{ \pi \in C^{\perp} \setminus \{ \varepsilon ^n \} \, \vert \, {\rm Supp} ( \pi) \subseteq \gamma \}
$$
for $\gamma \in \binom{[n]}{s}$.
Then
\begin{equation}    \label{AverageC}
(|G|-1) c_i = \binom{n}{d+i} ^{-1} \left( \sum\limits _{\gamma \in \binom{[n]}{d+i}} \left| \left( C \setminus \{ 0_G ^n \} \right) ^{(\subseteq \gamma)} \right| \right) \ \ \mbox{  for } \ \ \forall 0 \leq i \leq g-1
\end{equation}
is the average cardinality of an intersection of $C \setminus \{ 0_G ^n \}$ with $n-d-i$ coordinate hyperplanes in $(G^n, +)$ and
\begin{equation}   \label{AverageCPerp}
(|G|-1) c_i ^{\perp} = \binom{n}{d^{\perp}+i} ^{-1} \left( \sum\limits _{\gamma \in \binom{[n]}{d^{\perp}+i}}  \left| \left( C^{\perp} \setminus \{ \varepsilon ^n \} \right) ^{( \subseteq \gamma)} \right| \right) \ \ \mbox{\rm  for } \ \ \forall 0 \leq i \leq g^{\perp} -1
\end{equation}
is the average cardinality of an intersection of $C^{\perp}$ with $n-d^{\perp}-1$ coordinate hyperplanes  in $(\widehat{G}^n, .)$.
\end{proposition}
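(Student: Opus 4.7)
The plan is to obtain both identities from formula (\ref{DuursmasCoefficientsC}) (respectively (\ref{DuursmasCoefficientsCPerp})) of Proposition \ref{WCByDCAnd CoeffDenominators} by a double-counting step and a vanishing argument for the MDS enumerator on the relevant range.

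First I would rewrite the right-hand sides of (\ref{AverageC}) and (\ref{AverageCPerp}) by exchanging the order of summation. For a fixed word $a \in C \setminus \{0_G^n\}$ of weight $s = \mathrm{wt}(a)$, the number of subsets $\gamma \in \binom{[n]}{d+i}$ containing $\mathrm{Supp}(a)$ equals $\binom{n-s}{d+i-s} = \binom{n-s}{n-d-i}$ when $s \leq d+i$, and vanishes otherwise. Since $C$ has minimum distance $d$, only indices $d \leq s \leq d+i$ contribute, yielding
$$
\sum_{\gamma \in \binom{[n]}{d+i}} \bigl|(C \setminus \{0_G^n\})^{(\subseteq \gamma)}\bigr| = \sum_{s=d}^{d+i} \cW_C^{(s)}\, \binom{n-s}{n-d-i}.
$$
An identical calculation with $C^{\perp}$, $d^{\perp}$, $\varepsilon^n$ in place of $C$, $d$, $0_G^n$ handles the dual side.

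Next I would match this against (\ref{DuursmasCoefficientsC}), i.e.\
$$
(|G|-1)\binom{n}{d+i} c_i = \sum_{s=0}^{d+i} \bigl(\cW_C^{(s)} - \cM_{n,n+1-k}^{(s)}\bigr) \binom{n-s}{n-d-i}.
$$
The point is to show that the MDS correction term drops out in the specified range. Since $g = n+1-d-k \geq 1$, the reference MDS code $\mathrm{MDS}(n,n+1-k)$ has minimum distance $n+1-k \geq d+1$, and for $0 \leq i \leq g-1$ one has $d+i \leq d+g-1 = n-k < n+1-k$. Consequently $\cM_{n,n+1-k}^{(s)} = 0$ for every $1 \leq s \leq d+i$, while the $s=0$ contributions cancel because $\cW_C^{(0)} = \cM_{n,n+1-k}^{(0)} = 1$. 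Moreover $\cW_C^{(s)} = 0$ for $1 \leq s \leq d-1$, so the right-hand side of (\ref{DuursmasCoefficientsC}) collapses to exactly $\sum_{s=d}^{d+i} \cW_C^{(s)} \binom{n-s}{n-d-i}$, which coincides with the double-counted sum. Dividing by $\binom{n}{d+i}$ gives (\ref{AverageC}).

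The argument for (\ref{AverageCPerp}) is formally identical, using (\ref{DuursmasCoefficientsCPerp}) and the reference MDS code $\mathrm{MDS}(n,k+1)$ of minimum distance $k+1$: for $0 \leq i \leq g^{\perp}-1$ one has $d^{\perp}+i \leq d^{\perp}+g^{\perp}-1 = k < k+1$, so $\cM_{n,k+1}^{(s)} = 0$ for $1 \leq s \leq d^{\perp}+i$, and the same cancellation applies. The only delicate point — and hence the main (mild) obstacle — is keeping track of the inequalities so as to pinpoint precisely the interval $0 \leq i \leq g-1$ (respectively $0 \leq i \leq g^{\perp}-1$) on which the MDS correction vanishes; outside this range the correction is genuinely present, which is exactly why the clean averaging interpretation is restricted to the lower parts of $D_C(t)$ and $D_{C^{\perp}}(t)$.
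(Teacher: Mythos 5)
Your proposal is correct and follows essentially the same route as the paper: the paper counts the disjoint union $\coprod_{\gamma}(C\setminus\{0_G^n\})^{(\subseteq\gamma)}$ in two ways, which is precisely your exchange of summation order, and then invokes (\ref{DuursmasCoefficientsC}) together with the vanishing of $\cM_{n,n+1-k}^{(s)}$ for $1\leq s\leq d+i\leq n-k$. Your explicit tracking of the $s=0$ cancellation and of the inequality $d+i\leq n-k$ that delimits the range $0\leq i\leq g-1$ matches the paper's argument.
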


\begin{proof}

The equality (\ref{AverageC}) will be derived by counting the disjoint union
$$
U^{(d+i)} := \coprod\limits _{\gamma \in \binom{[n]}{d+i}} \left( C \setminus \{ 0_G ^n \} \right) ^{( \subseteq \gamma)}
$$
in two different ways.
Namely, a word $a \in C \setminus \{ 0_G ^n \}$ of weight ${\rm wt} (a) = s \in \NN$ has support $\sigma := {\rm Supp} (a) \subseteq \gamma$ for some
 $\gamma \in \binom{[n]}{d+i}$ if and only if the complements $\neg \gamma := [n] \setminus \gamma \subseteq [n] \setminus \sigma =: \neg \sigma \in \binom{[n]}{n-s}$ are subject to the opposite inclusion.
There are exactly $\binom{n-s}{n-d-i}$ such $\neg \gamma$ for $\left| \neg \gamma \right| = n-d-i \leq n-s \left| \neg \sigma \right|$ and none of such
 $\neg \gamma$ for $n-d-i > n-s$.
Thus, any $a \in C \setminus \{ 0_G ^n \}$ of ${\rm wt} (a) = s \leq d+i$ is counted $\binom{n-s}{n-d-i}$ times in $U^{(d+i)}$ and noone
 $a \in C \setminus \{ 0_G^n \}$ with ${\rm wt} (a) = s > d+i$ is counted in $U^{( d+i)}$.
That justifies the equality
$$
\left| U^{(d+i)} \right| = \sum\limits _{s=1} ^{d+i} \binom{n-s}{n-d-i} \cW _C ^{(s)} \ \ \mbox{  for } \ \ \forall 0 \leq i \leq n-d.
$$
According to (\ref{DuursmasCoefficientsC}) from Proposition \ref{WCByDCAnd CoeffDenominators},
$$
(|G|-1) \binom{n}{d+i} c_i = \sum\limits _{s=1} ^{d+i} \cW _C ^{(s)} \binom{n-s}{n-d-i} \ \ \mbox{  for } \ \ \forall d \leq d + i \leq n-k,
$$
due to the vanishing of $\cM _{n,n+1-k} ^{(s)} =0$ for $\forall 1 \leq s \leq d+i \leq n-k$.
As a result,
$$
\left| U^{(d+i)} \right| = (|G|-1) \binom{n}{d+i} c_i \ \ \mbox{ for } \ \ 0 \leq i \leq n-k-d = g-1.
$$
Combining with
$$
\left| U^{(d+i)} \right| = \sum\limits _{\gamma \in \binom{[n]}{d+i}} \left| \left( C \setminus \{ 0_G^n \} \right) ^{( \subseteq \gamma)} \right| \ \
\mbox{  for } \ \  0 \leq i \leq n-d,
$$
one justifies (\ref{AverageC}) for $\forall 0 \leq i \leq g-1$.
Similar considerations on $\coprod\limits _{\gamma \in \binom{[n]}{d+i}} \left( C^{\perp} \setminus \{ 0_G ^n \} \right) ^{( \subseteq \gamma)}$ provide (\ref{AverageCPerp}) for $\forall 0 \leq i \leq g^{\perp}-1$.

\end{proof}

In the case of $\FF_q$-linear codes $C$, $C^{\perp} \subset \FF_q ^n$, when the finite  sets $(C \setminus \{ 0 _{\FF_q} ^n \} ) ^{( \subseteq \gamma)}$,
$(C^{\perp} \setminus \{ 0_{\FF_q} ^n \} ) ^{( \subseteq \gamma)}$ are invariant under componentwise multiplications by $\lambda \in \FF_q ^*$,
the following corollary formulates the result in terms of the cardinalities of the subsets
$$
\PP (C)^{( \subseteq \gamma)} := \{ [a] \in \PP(C) \, \vert \, {\rm Supp} ([a]) \subseteq \gamma \},
$$
respectively,
$$
\PP(C^{\perp}) ^{( \subseteq \gamma)} := \{ [a] \in \PP (C^{\perp}) \, \vert \, {\rm Supp} ([a]) \subseteq \gamma \}
$$
of the projectivizations $\PP(C) := C \setminus \{ 0_{\FF_q} ^n \} / \FF_q ^* \subset \PP ( \FF_q^n) := \FF_q ^n \setminus \{ 0_{\FF_q} ^n \} / \FF_q ^*$,
respectively, $\PP (C^{\perp}) := C^{\perp} \setminus \{ 0_{\FF_q} ^n \} / \FF_q ^* \subset \PP ( \FF_q ^n )$, viewed as projective subspaces of the projectivization $\PP( \FF_q ^n)$ of $\FF_q^n$.

Pellikaan, Shen and van Wee have shown in \cite{PShW} that an arbitrary $\FF_q$-linear code $C \subset \FF_q ^n$ has an algebraic-geometric representation.
It means an existence of a smooth irreducible projective curve $X / \FF_q \subset \PP^N ( \overline{\FF_q})$, defined over $\FF_q$, distinct $\FF_q$-rational points $P_1, \ldots , P_n \in X( \FF_q) := X \cap \PP ^N ( \FF_q)$ and a divisor $G$ of the function field $\FF_q(X)$, such that ${\rm Supp} (G) \cap {\rm Supp} (D) = \emptyset$ and $C = \cE _D \cL _X (G)$ is the image of the evaluation map 
$$
\cE_D : \cL _X(g) = H^0 (X, \cO _X ([G])) \longrightarrow \FF_q ^n, \ \ \cE_D (f) = (f(P_1), \ldots , f(P_n)) 
$$
at $D := P_1 + \ldots + P_n$.
Any algebraic-geometric realization $C = \cE _D \cL _X (G)$ of $C$ is associated with a algebraic-geometric realization $C^{\perp} = \cE _D \cL _X (K_X - G +d)$ of the dual codes $C^{\perp} \subset \FF_q ^n$, where $K_X$ stands for a canonical divisor of $\FF_q(X)$.
Form now on, we denote by $l(E) := \dim _{\FF_q} \cL _X(E)$ the dimension of $\cL_X(E)$.
The next proposition interprets the elements of the projectivizations 
$$
\PP(C) := C \setminus \{ 0_{\FF_q} ^n \} / \FF_q ^* \ \  \mbox{ rm and } \ \ 
\PP(C) ^{( \subseteq \gamma)} := \PP \left( \left( C \setminus \{ 0_{\FF_q} ^n \} \right) \right)
$$
 with $\gamma \in \binom{[n]}{s}$, $d \leq s \leq n-k$ as orbits of effective divisors of $\FF_q (X)$.

\begin{corollary}    \label{ProjInterpretationOrbitsOfEffectiveDivisors}
 Let $C = \cE_D \cL _X (G)$ be an algebraic-geometric representation of an $\FF_q$-linear code $C \subset \FF_q ^n$ of minimum distance $d \geq 2$ and genus $g 
 \geq 1$ with dual $C^{\perp} = \cE_D \cL _X (K_X - G +d)$ of minimum distance $d^{\perp} \geq 2$ and genus $g^{\perp} \geq 1$.
 Denote by
 $$
 {\rm Div} _{\geq 0} ( \sim G, D) := \{ E = G + {\rm div} (f) \geq 0 \, \vert \, f \in \FF_q (X), \, {\rm Supp} (D) \nsubseteq {\rm Supp} (E) \}
 $$
 the set of the effective divisors of $\FF_q(X)$, which are linearly equivalent  to $G$ and do not contain $\{ P_1, \ldots , P_n \}$ in its support and put 
 $D_{\delta} := \sum\limits _{i \in \delta} P_i$ for $\forall  \delta \in \binom{[n]}{s}$, $1 \leq s \leq n$.

 (i) Then the kernel $\ker \cE_D = \cL _X (G - D)$ of the surjective $\FF_q$-linear evaluation map $\cE_D : \cL_X (G) \rightarrow C$ acts on ${\rm Div} _{\geq 0} ( \sim G, D)$ by the rule
 \begin{equation}   \label{KernelAction}
 \cL _X(G-D) \times {\rm Div} _{\geq 0} ( \sim G, D) \longrightarrow {\rm Div} _{\geq 0} ( \sim G, D), \ \ 
 (g, G = {\rm div} (f)) \mapsto G +  {\rm div} (f+g)
 \end{equation}
 and the projectivization
 $$
 \PP(C) = {\rm Div} _{\geq 0} ( \sim G, D) / \cL _X (G-D)
 $$
 of $C$ is the orbit space of ${\rm Div} _{\geq 0} ( \sim G, D)$ under this action.

 In particular, if $m < n$ then $\PP(C) = {\rm Div} _{\geq 0} ( \sim G, D)$.

 (ii) If $D_C(t) = \sum\limits _{i=0} ^{g + g^{\perp} -2} c_i t^{i}$ is Duursma's reduced polynomial of $C \subset \FF_q ^n $ and $\zeta _C(t) = \sum\limits _{i=0} ^{\infty} \cA _i (C) t^{i}$ is the $\zeta$-function of $C$ then
\begin{align*}
 c_i = \cA_i (C) = \binom{n}{d+i}^{-1} \left( \sum\limits _{\delta \in \binom{[n]}{n-d-i}} \left| {\rm Div} _{\geq 0} ( \sim (G - D_{\delta}), D) / \cL _X(G-D) \right| \right) =  \\
 \binom{n}{d+i} ^{-1} q^{ - l(G-D)} \left( \sum\limits _{\delta \in \binom{[n]}{n -d-i}}  \left|  {\rm Div} _{\geq 0} ( \sim (G - D_{\delta}), D) \right| \right) \ \ \mbox{ \rm with } \ \ 0 \leq i \leq g-1
 \end{align*}
 is the average cardinality of an $\cL_X(G-D)$-orbit space of 
 $$
 {\rm Div} _{\geq 0} ( \sim (G - D_{\delta}), D) := \{ E = G - D_{\delta} + {\rm div} (f) \geq 0 \, \vert \, f \in \FF_q (X), \, {\rm Supp} (D) \nsubseteq {\rm Supp} (E) \}  
 $$
 with $\delta \in \binom{[n]}{n-d-i}.$

 (iii)  If $D_{C^{\perp}} (t) = \sum\limits _{i=0} ^{g + g^{\perp}-2} c_i ^{\perp} t^{i}$ is Duursma's reduced polynomial of $C^{\perp}$ and $\zeta _{C^{\perp}} (t) = \sum\limits _{i=0} ^{\infty} \cA_i (C^{\perp}) t^{i}$ is the $\zeta$-function of $C^{\perp}$ then 
\begin{align*}
 c_i ^{\perp} = \cA_i (C^{\perp}) =  \\
  \binom{n}{d^{\perp}+i} ^{-1} \left( \sum\limits _{\gamma \in \binom{[n]}{d^{\perp} +i}} \left| {\rm Div} _{\geq 0} ( \sim (K_X - G + D_{\gamma}), D) / \cL _X( K_X -G) \right| \right) = \\
 \binom{n}{d^{\perp}+i} ^{-1} q^{- l (K_X -G)} \left( \sum\limits _{\gamma \in \binom{[n]}{d^{\perp}+i}} \left| {\rm Div} _{\geq 0} ( \sim (K_X - G + D_{\gamma}), D) \right| \right)  
 \end{align*}
with $0 \leq i \leq g^{\perp} -1,$  is the average cardinality of an $\cL_X(K_X -G)$-orbit space of
\begin{align*}
 {\rm Div} _{\geq 0} ( \sim (K_X - G + D_{\gamma}), D) := \\
  \{ E = K_X - G + D_{\gamma} + {\rm div} (f) \geq 0 \, \vert \, f \in \FF_q (X), \, {\rm Supp} (D) \nsubseteq {\rm Supp} (E) \} 
\end{align*}
 with $  \gamma \in \binom{[n]}{d^{\perp}+i}.$

 (iv)  The coefficients of the $\zeta$-function $\zeta _C(t) = \sum\limits _{i=0} ^{\infty} \cA _i (C) t^{i}$ of $C$ have
 $$
 \cA _i (C) \binom{n}{d+i} \in \ZZ ^{\geq 0} \ \ \mbox{  for } \ \ 0 \leq i \leq g + g^{\perp} -2 \ \ \mbox{\rm and  }
 $$
 $$
 \cA_i (C) \in \ZZ ^{\geq 0} \ \ \mbox{  for  } \ \ \forall i > g + g^{\perp} -2.
 $$
\end{corollary}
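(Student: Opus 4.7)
The plan is to establish (i) by a direct bijection argument, specialize it to obtain (ii) via the subcode $C^{(\subseteq \gamma)}$ and (iii) via the dual realization $C^{\perp} = \cE _D \cL _X(K_X - G + D)$, and finally deduce (iv) from the integrality of $\binom{n}{d+i} c_i$ combined with the generic Riemann-Roch formula for $\cA _i(C)$.

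For (i), I will first recall the standard bijection $[f] \mapsto G + {\rm div}(f)$ between $\PP(\cL _X(G))$ and effective divisors linearly equivalent to $G$, well-defined because ${\rm div}(f_1) = {\rm div}(f_2)$ forces $f_1/f_2 \in \FF_q^*$. Under this bijection $\PP(\cL _X(G-D))$ corresponds to effective $E \geq D$, equivalently ${\rm Supp}(D) \subseteq {\rm Supp}(E)$, so ${\rm Div}_{\geq 0}(\sim G, D)$ identifies with $(\cL _X(G) \setminus \cL _X(G-D))/\FF_q^*$. Since $\ker \cE _D = \cL _X(G-D)$ (from $\cE_D(f) = 0 \Leftrightarrow {\rm div}(f) + G \geq D$), the evaluation map induces an isomorphism $\cL _X(G)/\cL _X(G-D) \simeq C$; projectivizing yields $\PP(C) \simeq (\cL _X(G) \setminus \cL _X(G-D))/(\FF_q^* \cdot \cL _X(G-D))$. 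Combining the two identifications and checking that the translation action $(g, G + {\rm div}(f)) \mapsto G + {\rm div}(f+g)$ is well-defined on $\cL _X(G-D)$-orbits independently of the $\FF_q^*$-representative of $f$ produces $\PP(C) \simeq {\rm Div}_{\geq 0}(\sim G, D)/\cL _X(G-D)$. The case $m < n$ is immediate from $\deg(G-D) < 0 \Rightarrow \cL _X(G-D) = \{0\}$.

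For (ii) and (iii), I will first invoke equation (\ref{FirstFormulaDC}) from the proof of Theorem \ref{MWIffPRRC} to identify $c_i = \cA _i(C)$ for $0 \leq i \leq g-1$, and dually $c_i^{\perp} = \cA _i(C^{\perp})$ for $0 \leq i \leq g^{\perp}-1$. Proposition \ref{CoeffDCAreAverages}, combined with $|(C \setminus \{0_{\FF_q}^n\})^{(\subseteq \gamma)}| = (q-1) |\PP(C)^{(\subseteq \gamma)}|$ in the $\FF_q$-linear case, expresses $c_i$ as the average of $|\PP(C)^{(\subseteq \gamma)}|$ over $\gamma \in \binom{[n]}{d+i}$. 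The crucial point is that $C^{(\subseteq \gamma)} = \cE _D \cL _X(G - D_{\delta})$ for $\delta = [n] \setminus \gamma$, and that the restriction $\cE _D|_{\cL _X(G-D_{\delta})}$ retains kernel $\cL _X(G-D)$, since $\cL _X(G-D) \subseteq \cL _X(G-D_{\delta})$. Replaying the argument of (i) with $G$ replaced by $G - D_{\delta}$ while keeping the kernel at $\cL _X(G-D)$ yields $\PP(C)^{(\subseteq \gamma)} \simeq {\rm Div}_{\geq 0}(\sim(G - D_{\delta}), D)/\cL _X(G-D)$, the first equality of (ii). The second equality follows from freeness of the $\cL _X(G-D)$-translation on any representative $f$ with $\cE _D(f) \neq 0$: a stabilizing $g \in \cL _X(G-D)$ must satisfy $g = (\lambda - 1)f$ for some $\lambda \in \FF_q^*$, and since $f \notin \cL _X(G-D)$ this forces $\lambda = 1$ and $g = 0$, making every orbit of size $q^{l(G-D)}$. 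Part (iii) is identical after substituting $C^{\perp} = \cE _D \cL _X(K_X - G + D)$ and the corresponding kernel $\cL _X(K_X - G)$.

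Part (iv) then follows from (ii)--(iii), Corollary \ref{DenominatorsCoeffDCLinear}, and the expansion $\cA _i(C) = c_i + \frac{q^{i-g+1}-1}{q-1}$ for $i \geq g$ deduced from (\ref{D_And_A}). For $0 \leq i \leq g-1$ the orbit-space interpretation makes $\binom{n}{d+i} \cA _i(C) = \binom{n}{d+i} c_i$ a sum of non-negative integer orbit cardinalities; for $g \leq i \leq g + g^{\perp} - 2$, Corollary \ref{DenominatorsCoeffDCLinear} gives $\binom{n}{d+i} c_i \in \ZZ$, while Mac Williams identity (\ref{II_Part_D}) yields $c_i = q^{i-g+1} \cA _{g + g^{\perp} - 2 - i}(C^{\perp}) \geq 0$, so $\binom{n}{d+i} \cA _i(C) \in \ZZ^{\geq 0}$; for $i > g + g^{\perp} - 2$, $c_i = 0$ and $\cA _i(C) = \frac{q^{i-g+1}-1}{q-1} \in \ZZ^{\geq 0}$ directly. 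The principal technical obstacle I anticipate is in (ii) and (iii): verifying that the support condition ${\rm Supp}(D) \nsubseteq {\rm Supp}(E')$ in ${\rm Div}_{\geq 0}(\sim(G - D_{\delta}), D)$ carves out precisely the orbits parameterizing $\PP(C)^{(\subseteq \gamma)}$ and does not admit spurious divisors coming from $f \in \cL _X(G-D)$ with a simple zero at some $P_j$, $j \in \delta$; the resolution relies on careful bookkeeping of vanishing orders at the points of $D_{\delta}$ and on the fact that the kernel of $\cE _D$ restricted to $\cL _X(G-D_{\delta})$ remains $\cL _X(G-D)$ rather than the smaller $\cL _X(G-D-D_{\delta})$ that a naïve transcription of (i) would predict.
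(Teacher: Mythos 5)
Your proposal follows the paper's own route step for step: the same computation of $\ker \cE _D = \cL _X(G-D)$ and the same quotient identification of $\PP(C)$ with the $\cL _X(G-D)$-orbit space in (i), the same specialization $C^{(\subseteq \gamma)} = \cE _D \cL _X(G - D_{\delta})$ with kernel still equal to $\cL _X(G-D)$ in (ii) and (iii), and essentially the paper's argument for (iv) (your detour through Corollary \ref{DenominatorsCoeffDCLinear} and (\ref{II_Part_D}) is an equivalent repackaging of the paper's direct use of (\ref{MWA}); your explicit appeal to (\ref{FirstFormulaDC}) for $c_i = \cA _i(C)$, $0 \leq i \leq g-1$, supplies a detail the paper leaves implicit).

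The point you flag as ``the principal technical obstacle'' and then defer to ``careful bookkeeping'' is the only genuinely delicate step, and the bookkeeping does not come out the way you (or the paper) assume. If $f \in \cL _X(G-D) \setminus \cL _X(G-D-D_{\delta})$, i.e.\ a kernel element vanishing to order exactly $1$ at some $P_j$ with $j \in \delta$, then $E = G - D_{\delta} + {\rm div}(f)$ is effective with ${\rm ord}_{P_j}(E) = 0$, so ${\rm Supp}(D) \nsubseteq {\rm Supp}(E)$ and $E$ lies in ${\rm Div}_{\geq 0}(\sim (G - D_{\delta}), D)$ as literally defined, although $\cE _D(f) = 0_{\FF_q}^n$ and $E$ corresponds to no point of $\PP(C)^{(\subseteq \gamma)}$. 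Hence the map $\cL _X(G - D_{\delta}) \setminus \cL _X(G-D) \rightarrow {\rm Div}_{\geq 0}(\sim(G - D_{\delta}), D)$, $f \mapsto G - D_{\delta} + {\rm div}(f)$, lands in the target but fails to be surjective whenever $\cL _X(G-D) \neq \cL _X(G-D-D_{\delta})$ --- which is typical in the regime $\deg G > n$ that the paper emphasizes. The identification holds only after replacing the condition ${\rm Supp}(D) \nsubseteq {\rm Supp}(E)$ by ${\rm Supp}(D_{\gamma}) \nsubseteq {\rm Supp}(E)$ (equivalently $E \not\geq D_{\gamma}$), or under the extra hypothesis $\cL _X(G-D) = \cL _X(G-D-D_{\delta})$, e.g.\ $m < n$. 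The paper's proof of (ii) has exactly the same gap --- it merely asserts that ``the considerations from (i), applied to $G - D_{\delta}$'' yield the diagram --- so you have matched the paper and correctly located its weak point, but not repaired it. Part (iv) is unaffected, since it only needs $\binom{n}{d+i} c_i = \sum_{\gamma} \left| \PP(C)^{(\subseteq \gamma)} \right| \in \ZZ^{\geq 0}$, which already follows from Proposition \ref{CoeffDCAreAverages}.
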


\begin{proof}

(i) First of all, one has to check that the kernel of $\cE_D : \cL _X (G) \rightarrow \cE_D \cL _X (G) = C$ equals $\cL _X (G-D)$.
If $G = G_{+} - G_{-}$ for effective divisors $G_{+}$, $G_{-}$ of $\FF_q (X)$ then $f \in \cL _X(G)$ exactly when ${\rm div} (f) _0 + G_{+} \geq {\rm div} (f) _{\infty} + G_{-}$.
Due to the disjointness of the supports of ${\rm div} (f) _0$, ${\rm div} (f) _{\infty}$ and of $G_{+}$, $G_{-}$, this is equivalent to the conditions 
${\rm div} (f) _{\infty} \leq G_{+}$ and ${\rm div} (f) _0 \geq G_{-}$.
Now, $f \in \cL _X(G)$ belongs to $\ker \cE_D$ if and only if $D \leq {\rm div} (f) _0$.
By assumption, ${\rm Supp} (G_{-}) \cap {\rm Supp} (D) \subseteq {\rm Supp} (G) \cap {\rm Supp} (D) = \emptyset$, so that the kernel of $\cE_D$ on $\cL _X(G)$ consists of the rational functions $f \in \FF_q (X)$ with ${\rm div} (f) _{\infty} \leq G_{+}$ and ${\rm div} (f) _0 \geq G_{-} - D$.
That, in turn, amounts to $G - D + {\rm div} (f) = G_{+} - G_{-} - D + {\rm div} (f) _0 - {\rm div} (f) _{\infty} \geq 0$ and reveals that $\ker \cE _D = \cL _X (G-D)$.
Now, $\cE_D : \cL _X(G) \rightarrow C = \cE_D \cL _X(G)$ restricts to a surjective map of sets
$$
\cE_D : \cL _X(G) \setminus \ker \cE_D = \cL _X(G) \setminus \cL_X (G-D) \longrightarrow  C \setminus \{ 0 _{\FF_q} ^n \}.
$$
The correspondence
$$
G + {\rm div} : \cL _X(G) \setminus \cL _X (G-D) \longrightarrow {\rm Div} _{\geq 0} ( \sim G, D), \ \ f \mapsto G + {\rm div} (f),
$$
associating to $f \in \cL _X(G) \setminus \cL _X(G-D)$ the effective divisor $G + {\rm div} $, linearly equivalent to $G$, which does not contain ${\rm Supp} (D) = \{ P_1, \ldots , P_n \}$ in its support, coincides with the quotient map of $\cL _X(G) \setminus \cL _X(G-D)$ with respect to the $\FF_q^*$-action
$$
\FF_q ^* \times ( \cL _X(G) \setminus \cL _X(G-D)) \longrightarrow ( \cL _X(G) \setminus \cL _X(G-D)), \ \ (\lambda,f) \mapsto \lambda f.
$$
Note that $\FF_q^*$ acts on $C \setminus \{ 0_{\FF_q} ^n \}$ by the rule 
$$
\FF_q^* \times (C \setminus \{ 0_{\FF_q} ^n \} ) \longrightarrow C \setminus \{ 0_{\FF_q} ^n \}, \ \ (\lambda , (c_1, \ldots , c_n)) \mapsto (\lambda c_1, \ldots , \lambda c_n)
$$
and denote by
$$
\eta _{\FF_q^*} : C \setminus \{ 0_{\FF_q} ^n \} \longrightarrow \PP (C) := C \setminus \{ 0_{\FF_q} ^n \} / \FF_q^*
$$
the projectivization map of $C$.
Straightforwardly verification establishes the $\FF_q^*$-equi\-va\-lence of
$
\cE_D : \cL _X(G) \setminus \cL _X(G-D) \longrightarrow C \setminus \{ 0_{\FF_q} ^n \},
$
i.e., $\lambda \cE_D (f) = \lambda ( f(P_1), \ldots , f(P_n)) = ( \lambda f(P_1), \ldots , \lambda f(P_n))$, $\forall f \in \cL _X(G) \setminus \cL _X (G-D)$.
Therefore $\cE_D$ induces a surjective map of finite sets
$$
\overline{\cE_D} : {\rm Div} _{\geq 0} ( \sim G, D) \longrightarrow \PP(C),
$$
closing the commutative diagram
$$
\begin{diagram}
\node{\cL_X(G) \setminus \cL_X(G-D)}   \arrow{e,t}{\cE_D}  \arrow{s,r}{G + {\rm div}}  \node{C \setminus \{ 0_{\FF_q}^n \}}  \arrow{s,r}{\eta _{\FF_q^*}}  \\
\node{{\rm Div} _{\geq 0} ( \sim G, D)}  \arrow{e,t}{\overline{\cE_D}}  \node{\PP(C)}
\end{diagram}.
$$
The corresponding fibres of $G + {\rm div}$ and $\eta _{\FF_q^*}$ are isomorphic to each other and to $\FF_q^*$, so that the fibres of $\overline{\cE_D} : {\rm Div} _{\geq 0} ( \sim G, D) \rightarrow \PP(C)$ are isomorphic to $\ker \cE_D = \cL _X(G-D)$.
For arbitrary $g \in \cL _X(G -D)$ and $G + {\rm div} (f) \in {\rm Div} _{\geq 0} ( \sim G, D)$, note that $G + {\rm div} (f+g) \in {\rm Div} _{\geq 0} ( \sim G, D)$, as far as the assumption ${\rm Supp} (D) \subseteq {\rm Supp} (G + {\rm div} (f+g))$ implies ${\rm Supp} (D) \subseteq {\rm div} (f+g) _0$, due to ${\rm div} (f + g) _{\infty} \leq G_{+}$ and ${\rm Supp} (D) \cap {\rm Supp} (G) = \emptyset$.
Then $0_{\FF_q} ^n = \cE_D ( f+g) = \cE_D (f) + \cE_D(g) = \cE_D(f)$, which contradicts $f \not \in \cL_X(G-D) = \ker \cE_D$ and verifies the correctness of the map (\ref{KernelAction}).
All fibres of (\ref{KernelAction}) are isomorphic to the linear system  $\cL _X(G-D)$, because the assumption ${\rm div} (f +g) = {\rm div} (f+h)$ for some $g, h \in \cL _X(G-D)$ requires $f +h = \lambda (f+g)$ for some $\lambda \in \FF_q ^*$ and amounts to $(\lambda -1) f = h - \lambda g \in \cL _X (G-D) = \ker \cE_D$.
The choice of $f \in \cL _X(G) \setminus \cL _X(G-D)$ specifies that $\lambda =1$ and $g=h$.
Since all the fibres of $\overline{\cE_D}$ are isomorphic to $\cL_X(G-D)$,
$$
\PP(C) = \overline{\cE_D} {\rm Div} _{\geq 0} (\sim G, D) = {\rm Div} _{\geq 0} ( \sim G, D) / \cL_X (G-D)
$$
can be viewed as the quotient space of ${\rm Div} _{\geq 0} ( \sim G, D)$, under the action of $\cL _X(G-D)$.

In particular, if $m <n$ then $\deg (G-D) <0$ and $\cL_X(G-D) = \{ 0_{\FF_q(X)} \}$.
That allows to identify $\PP(C) = {\rm Div} _{\geq 0} ( \sim G, D)$ with the effective divisors, linearly equivalent to $G$, which do not contain 
$\{ P_1, \ldots , P_n \}$ in its support.

(ii)  Note that the support function ${\rm Supp} : C \rightarrow \{ 0,1, \ldots , n \}$ is invariant under the action
$$
\FF_q^* \times C \longrightarrow C, \ \ (\lambda, (c_1, \ldots , c_n)) \mapsto (\lambda c_1, \ldots , \lambda c_n)
$$
of $\FF_q^*$ and descends to ${\rm Supp} \PP(C) \rightarrow \{ 1, \ldots , n \}$ with ${\rm Supp} ([a]) = {\rm Supp} \eta _{\FF_q^*} (a) = {\rm Supp} (a)$ for $\forall a \in C \setminus \{ 0_{\FF_q} ^n \}$.
In particular, 
$$
\eta _{\FF_q^*} : \left( C \setminus \{ 0_{\FF_q} ^n \} \right) ^{( \subseteq \gamma)} \longrightarrow \PP(C) ^{( \subseteq  \gamma )} := \{ [a] \in \PP(C) \, \vert \, {\rm Supp} ([a]) \subseteq \gamma \}
$$
is an unramified $\FF_q^*$-Galois covering and (\ref{AverageC}) takes the form
$$
c_i = \binom{n}{d+i} ^{-1} \left( \sum\limits _{\gamma \in \binom{[n]}{d+i}} \left| \PP(C) ^{( \subseteq \gamma)} \right| \right) \ \ \mbox{  for  } \ \ 
\forall 0 \leq i \leq g-1,
$$
according to $\left| \left( C \setminus \{ 0_{\FF_q} ^n \} \right) ^{( \subseteq \gamma )} \right| = (q-1) \left| \PP(C) ^{( \subseteq \gamma )} \right|$.
If $\delta := \neg \gamma = [n] \setminus \gamma \in \binom{[n]}{n-d-i}$ is the complement of $\gamma \in \binom{[n]}{d+i}$, it suffices to show that
\begin{equation}   \label{ProjectivizatonSupportSet}
\PP(C) ^{( \subseteq \gamma)} \simeq {\rm Div} _{\geq 0} ( \sim (G - D_{\delta} ), D) / \cL_X(G-D)
\end{equation}
is the orbit space of 
$$
{\rm Div} _{\geq 0} ( \sim (G - D_{\delta}), D) := \{ E = G - D_{\delta} + {\rm div} (f) \geq 0 \, \vert \, f \in \FF_q(X), \, {\rm Supp} (D) \nsubseteq {\rm Supp} (E) \}
$$
under the action
$$
\cL_X(G-D) \times {\rm Div} _{\geq 0} ( \sim (G - D_{\delta}), D) \longrightarrow {\rm Div} _{\geq 0} ( \sim (G - D_{\delta}), D), 
$$
$$
(g, G - D_{\delta} + {\rm div} (f)) \mapsto G - D_{\delta} + {\rm div} (f+g)
$$
of $\cL _X(G-D)$, in order to conclude the proof of (ii).
To this end, note that 
\begin{align*}
\cE_D ^{-1} \left( C ^{( \subseteq \gamma)} \setminus \{ 0_{\FF_q} ^n \} \right) = \{ f \in \cL _X(G) \setminus \cL _X(G-D) \, \vert \, {\rm Supp} ( D_{\delta}) \leq {\rm div} (f) _0 \} = \\
 \cL _X (G - D_{\delta}) \setminus \cL _X (G-D).
\end{align*}
The considerations from (i), applied to $G - D_{\delta}$ instead of $G$ provide the commutative diagram
$$
\begin{diagram}
\node{\cL_X (G - D_{\delta}) \setminus \cL _X (G-D)}  \arrow{s,r}{G - D_{\delta} + {\rm div}}  \arrow{e,t}{\cE_D}  \node{ \left( C \setminus \{ 0_{\FF_q} ^n \} \right) ^{( \subseteq \gamma)} }  \arrow{s,r}{\eta _{\FF_q^*}}  \\
\node{{\rm Div} _{\geq 0} ( \sim (G - D_{\delta}), D)}  \arrow{e,t}{\overline{\cE _D}}  \node{\PP(C) ^{( \subseteq \gamma)}}
\end{diagram},
$$
where $\overline{\cE_D}$ is a surjective map, whose fibres are isomorphic to $\cL_X(G-D)$.
That allows the identification (\ref{ProjectivizationSupportSet}) with $\left| {\rm Div} _{\geq 0} ( \sim (G - D_{\delta}), D) \right| = q^{l(G-D)} \left| \PP(C) ^{( \subseteq \gamma)} \right|$.

(iii) follows from (\ref{AverageCPerp}) by observing that
\begin{align*}
\PP(C^{\perp}) ^{( \subseteq \gamma)} = 
 \PP ( \cE_D \cL_X (K_X - G+D) ) ^{( \subseteq \gamma )} \simeq  \\
{\rm Div} _{\geq 0} ( \sim K_X -G +D - (D - D_{\gamma})), D) / \cL_X(G-D) =  \\ {\rm Div} _{\geq 0} ( \sim (K_X - G + D_{\gamma}), D) / \cL _X(G-D)  \ \ \mbox{ for} \ \  \forall \gamma \in \binom{[n]}{d^{\perp} +i}.
\end{align*}

(iv) Note that (ii) implies $\cA _i (C) \binom{n}{d+i} \in \ZZ ^{\geq 0}$ for $\forall 0 \leq i \leq g-1$ and (iii) guarantees that $\cA_i (C^{\perp}) \binom{n}{d^{\perp} +i} \in \ZZ ^{\geq 0}$ for $\forall 0 \leq i \leq g^{\perp} -1$.
Making use of (\ref{MWA}), one concludes that
$$
\cA_i (C) \binom{n}{d+i} = q^{i-g+1} \cA_{g + g^{\perp} -2-i} (C^{\perp}) \binom{n}{d+i} + \binom{n}{d+i} \left( \frac{q^{i-g+1} -1}{q-1} \right)^\in \ZZ ^{\geq 0}  
$$
for $\forall g \leq i \leq g + g{\perp} -2,$ according to $d^{\perp} + (g + g^{\perp} -2-i) = n-d-i$ and $\binom{n}{n-d-i} = \binom{n}{d+i}$.
In the case of $i > g + g^{\perp} -2$, (\ref{MWA}) reduces to
$$
\cA_i (C) = \frac{q^{i-g+1} -1}{q-1} \in \ZZ ^{\geq 0}.
$$

\end{proof}

By Theorem 1.1.28 and Exercise 1.1.29 from \cite{TVN}, the homogeneous weight enumerator of an $\FF_q$-linear code $C \subset \FF_q ^n$ with dual $C^{\perp}$ of minimum distance $d^{\perp} \geq 2$ can be expressed in the form
$$
\cW_C (x,y) = x^n + \sum\limits _{i=0} ^{n-d} B_i (x-y)^{i} y^{n-i} \ \ \mbox{  with } \ \ B_i = (q-1) \left( \sum\limits _{\alpha \in \binom{[n]}{i}} \left| \PP (C) ^{( \subseteq \neg \alpha)} \right| \right).
$$
Corollary \ref{CoeffDCAreAverageLCodes} reveals that  Tsfasman-Vl${\rm \breve{a}}$dut-Nogin's coefficients $B_{d+i} = \binom{n}{d+i} (q-1) c_i$ are closely related with the coefficients $c_i$ of Duursma's reduced polynomial $D_C(t)$ of $C$ for $\forall 0 \leq i \leq g-1$.

\begin{proposition}    \label{ProbabilisticInterpretationsCoeffDCDCPerp}
Let $(C, +) < (G^n, +)$ be an additive code  of minimum distance $d \geq 2$ and genus $g \geq 1$ with dual $(C^{\perp},.) < (\widehat{G}^n, .)$ of minimum distance $d^{\perp} \geq 2$ and genus $g^{\perp} \geq 1$.
Suppose that $D_C(t) = \sum\limits _{i=0} ^{g + g^{\perp} -2} c_i t^{i}$, $D_{C^{\perp}} (t)  = \sum\limits _{i=0} ^{g + g^{\perp} -2} c_i ^{\perp} t^{i} \in \QQ[t]$ are Duursma's reduced polynomials of $C$, $C^{\perp}$, $p^{(s)} _C$ (respectively, $\pi ^{(s)}_{C^{\perp}}$) are the probabilities of $a \in G^n$  (respectively, $\pi \in \widehat{G}^n$) of weight ${\rm wt} (a) =s$ (respectively, ${\rm wt} (\pi ) =s$) to belong to $C$ (respectively, to $C^{\perp}$) and
$\overline{p_a^{(s)}}$ (respectively, $\overline{p_{\pi} ^{(s)}}$) are the probabilities of $\gamma \in \binom{[n]}{s}$ to contain the support ${\rm Supp} (a)$ of $a \in C$ (respectively, the support ${\rm Supp} ( \pi)$ of $\pi \in C^{\perp}$).
Then
\begin{equation}    \label{LowerProbC}
{\rm (i)} \ \ c_i = \sum\limits _{s=d} ^{d+i} p_C ^{(s)} \binom{d+i}{s} (|G|-1) ^{s-1} \ \ \mbox{\rm for } \ \ \forall 0 \leq i \leq g-1,
\end{equation}
\begin{equation}    \label{UpperProbC}
c_i = |G|^{i-g+1} \left[ \sum\limits _{s = d^{\perp}} ^{n-d-i} p_{C^{\perp}} ^{(s)}  \binom{n-d-i}{s}  (|G|-1) ^{s-1} \right] \ \ \mbox{\rm  for } \ \
 \forall g \leq i \leq g + g^{\perp} -2;
\end{equation}
\begin{equation}   \label{LowerProbBarC}
{\rm (ii)} \ \ c_i = (|G|-1)^{-1} \left( \sum\limits _{a \in C \setminus \{ 0_G ^n \}} \overline{p_a ^{(d+i)}} \right) \ \ \mbox{\rm  for  }
 \ \ \forall 0 \leq i \leq g-1,
\end{equation}
\begin{equation}   \label{UpperProbBarC}
c_i = (|G|-1) ^{-1} |G|^{i-g+1} \left( \sum\limits _{\pi \in C^{\perp} \setminus \{ \varepsilon ^n \}} \overline{p _{\pi} ^{(n-d-i)}} \right) \ \
\mbox{\rm for } \ \ \forall g \leq i \leq g + g^{\perp} -2.
\end{equation}
\end{proposition}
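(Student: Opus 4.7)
The plan is to obtain all four formulas directly from Proposition \ref{WCByDCAnd CoeffDenominators} and the functional equation (\ref{MacWilliamsForDRP}) established in Theorem \ref{MWIffPRRC}. The key algebraic observations are the weight count $\cW_C^{(s)} = p_C^{(s)} \binom{n}{s} (|G|-1)^s$, the combinatorial identity $\binom{n}{s}\binom{n-s}{d+i-s} = \binom{n}{d+i}\binom{d+i}{s}$, and the identity $\overline{p_a^{(d+i)}} = \binom{n-{\rm wt}(a)}{d+i-{\rm wt}(a)}/\binom{n}{d+i}$ for ${\rm wt}(a) \leq d+i$.

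First, for $0 \leq i \leq g-1$, I would apply (\ref{DuursmasCoefficientsC}) together with the fact that $d+i \leq n-k = n+1-(n+1-k)$, so that the summands $\cM_{n, n+1-k}^{(s)} = 0$ for $0 \leq s \leq d+i$. This yields
\[
(|G|-1)\binom{n}{d+i} c_i = \sum_{s=d}^{d+i} \cW_C^{(s)} \binom{n-s}{d+i-s}.
\]
Substituting $\cW_C^{(s)} = p_C^{(s)} \binom{n}{s} (|G|-1)^s$ and simplifying via the identity $\binom{n}{s}\binom{n-s}{d+i-s} = \binom{n}{d+i}\binom{d+i}{s}$, one cancels $\binom{n}{d+i}$ on both sides and divides by $|G|-1$ to obtain (\ref{LowerProbC}). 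Using instead the expression $\binom{n-s}{d+i-s} = \binom{n}{d+i} \overline{p_a^{(d+i)}}$ whenever ${\rm wt}(a)=s$, the same equality rewrites as $(|G|-1)\binom{n}{d+i} c_i = \binom{n}{d+i} \sum_{a \in C \setminus \{0_G^n\}} \overline{p_a^{(d+i)}}$, which is (\ref{LowerProbBarC}).

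For the ranges $g \leq i \leq g+g^\perp-2$, the plan is to invoke the functional equation (\ref{MacWilliamsForDRP}), which (by comparing coefficients) reduces to
\[
c_i = |G|^{i-g+1} c^\perp_{g+g^\perp-2-i} \quad \text{for} \ \ g \leq i \leq g+g^\perp-2.
\]
Setting $j := g+g^\perp-2-i$, we have $0 \leq j \leq g^\perp-2$, and the earlier formulas applied to $C^\perp$ (with roles $d \leftrightarrow d^\perp$, $g \leftrightarrow g^\perp$) give expressions for $c_j^\perp$. The key identity needed here is $d^\perp + j = d^\perp + g + g^\perp - 2 - i = n - d - i$, which follows from $g+g^\perp = n+2-d-d^\perp$. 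Substituting and relabeling yields (\ref{UpperProbC}) and (\ref{UpperProbBarC}) at once.

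I do not expect any serious obstacle; the work is essentially bookkeeping, and everything reduces to the already established formulas (\ref{DuursmasCoefficientsC}), (\ref{DuursmasCoefficientsCPerp}) and (\ref{MacWilliamsForDRP}). The only point where care is required is in tracking the index correspondence under the functional equation; one must confirm that the summation range $d^\perp \leq s \leq d^\perp + j$ in the formula for $c_j^\perp$ indeed becomes $d^\perp \leq s \leq n-d-i$ after the substitution $j = g+g^\perp-2-i$, which is exactly the content of the identity $d^\perp + j = n-d-i$ noted above.
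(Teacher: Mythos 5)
Your proposal is correct and follows essentially the same route as the paper: part (i) and (\ref{LowerProbBarC}) come from (\ref{DuursmasCoefficientsC}) with the vanishing of $\cM_{n,n+1-k}^{(s)}$ and the same binomial identities, and the upper ranges come from the coefficient relation $c_i = |G|^{i-g+1}c^{\perp}_{g+g^{\perp}-2-i}$ together with $d^{\perp}+(g+g^{\perp}-2-i)=n-d-i$. The only cosmetic difference is that you extract that coefficient relation directly from the functional equation (\ref{MacWilliamsForDRP}), whereas the paper reads it off from (\ref{DCFormula}) in Corollary \ref{LowerPartD}, which is itself a restatement of the same identity.
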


\begin{proof}

(i) If $(G^n)^{(s)} := \{ a \in G^n \, \vert \, {\rm wt} (a) = s \}$, respectively, $C^{(s)} := \{ a \in C \, \vert \, {\rm wt} (a) =s \}$ are the subsets of the words of weight $1 \leq s \leq n$, $q := |G|$ and $\cW _C ^{(s)} := |C ^{(s)}|$ then the probability of $a \in (G^n) ^{(s)}$ to belong to $C$ is
$$
p_C ^{(s)} = \frac{|C ^{(s)}|}{|(G^n) ^{(s)}|} = \frac{\cW _C ^{(s)}}{\binom{[n]}{s} (q-1)^s}.
$$
According to (\ref{DuursmasCoefficientsC}) for Proposition \ref{WCByDCAnd CoeffDenominators}, if $0 \leq i \leq g-1 = n-d-k$ then
\begin{equation}   \label{LowerDuursmasCoefficientsC}
(q-1) \binom{n}{d+i} c_i = \sum\limits _{s=d} ^{d+i} \cW _C ^{(s)} \binom{n-s}{n-d-i}
\end{equation}
as far as $\cW _C ^{(0)} = \cM _{n, n+1-k} ^{(0)} = 1$, $\cW _C ^{(s)} =0$ for $\forall 1 \leq s \leq d-1$ and $\cM _{n, n+-k} ^{(s)} =0$ for
 $\forall 1 \leq s \leq d+i \leq n-k$.
Substituting $\cW_C ^{(s)} = \binom{n}{s} (q-1) ^s p_C ^{(s)}$ in (\ref{LowerDuursmasCoefficientsC}) and making use of
$
\binom{n}{d+i} ^{-1} \binom{n}{s} \binom{n-s}{n-d-i} = \binom{d+i}{s},
$
one concludes that (\ref{LowerProbC}).

The application of (\ref{LowerProbC}) to $C^{\perp}$ yields
$$
c_i ^{\perp} = \sum\limits _{s = d^{\perp}} ^{d^{\perp} +i} p_{C^{\perp}} ^{(s)} \binom{d^{\perp} +i}{s} (q-1)^{s-1} \ \ \mbox{ for } \ \
\forall 0 \leq i \leq g^{\perp} -1.
$$
According to (\ref{DCFormula}) from Corollary (\ref{LowerPartD}), Duursma's reduced polynomial $D_C(t)$ can be represented as
$$
D_C(t) = \sum\limits _{i=0} ^{g -1} c_i t^{i} + \sum\limits _{i=g} ^{ g + g^{\perp} -2} c^{\perp} _{g + g^{\perp} -2-i} q^{i -g+1} t^{i}.
$$
Therefore
\begin{equation}   \label{CoeffExpressionCorollary5}
c_i = q^{i -g+1} c^{\perp} _{g + g^{\perp} -2-i} \ \ \mbox{  for } \ \ \forall g \leq i \leq g + g^{\perp} -2.
\end{equation}
Plugging in (\ref{LowerProbBarC}) with $0 \leq g + g^{\perp} -2-i = n-d-d^{\perp} -i \leq g^{\perp} -2$ in the above formula, one obtains (\ref{UpperProbC}).

(ii)  If $a \in C$ has support ${\rm Supp} (a) \in \binom{[n]}{s}$ and $s \leq w \leq n$, then the number of $\gamma \in \binom{[n]}{w}$, containing ${\rm Supp} (a)$ equals $\binom{n-s}{w-s}$.
Thus, the probability of $\gamma \in \binom{[n]}{w}$ to contain ${\rm Supp} (a)$ equals $\overline{p_a ^{(w)}} = \frac{\binom{n-s}{w-s}}{\binom{n}{w}}$.
If ${\rm wt} (a) = s > w$ then $\overline{p_a ^{(w)}} =0$.
Making use of this, one represents (\ref{LowerDuursmasCoefficientsC}) as
\begin{align*}
c_i =
 \sum\limits _{s=1} ^{d+i} \frac{\cW_C ^{(s)} \binom{n-s}{d+i-s}}{(q-1) \binom{n}{d+i}} =
  \sum\limits _{s=1} ^{d+i} \sum\limits _{a \in C^{(s)}} \frac{\overline{p _a ^{(d+i)}}}{q-1} =
  (q-1) ^{-1} \left( \sum\limits _{a \in C \setminus \{ 0_g ^n \}} \overline{p_b ^{(d^{\perp} +i)}} \right)
  \end{align*}
for $\forall 0 \leq i \leq g-1.$
Combining with (\ref{CoeffExpressionCorollary5}), one derives (\ref{UpperProbBarC}).

\end{proof}

In the case of $\FF_q$-linear codes,  Proposition \ref{ProbabilisticInterpretationsCoeffDCDCPerp} specializes to the following

\begin{corollary}   \label{ProbabilisticInterpretationsCoeffDCDCPerpForLinearCodes}
Let $C \subset \FF_Q ^n$ be an $\FF_q$-linear code of minimum distance $d \geq 2$ and genus $g \geq 1$ with dual $C^{\perp} \subset \FF_q ^n$ of minimum distance $d^{\perp} \geq 2$ and genus $g^{\perp} \geq 1$.
Denote by $D_C(t) = \sum\limits _{i=0} ^{g + g^{\perp} -2} c_i t^{i}$, $D_{C^{\perp}} (t) = \sum\limits _{i=0} ^{g + g^{\perp} -2} c_i ^{\perp} t^{i} \in \QQ[t]$ Duursma's reduced polynomials of $C$, $C^{\perp}$, put $\pi ^{(s)} _{\PP(C)}$ (respectively, $\pi ^{(s)} _{\PP ( C^{\perp})}$) for the probability of $[a] \in \PP^{n-1} ( \FF_q)$ to belong to $\PP(C)$ (respectively, to $\PP (C^{\perp})$) and designate by $\overline{\pi _{[a]} ^{(s)}}$ (respectively, by
$\overline{\pi _{[b]} ^{(s)}}$)  the probability of $\gamma \in \binom{[n]}{s}$ to contain the support of $[a] \in \PP(C)$ (respectively, of $[b] \in \PP (C^{\perp})$).
Then:
$$
{\rm (i)} \ \ c_i = \sum\limits _{s=d} ^{d+i} \pi _{\PP(C)} ^{(s)} \binom{d+i}{s} (q-1) ^{s-1} \ \ \mbox{\rm  for } \ \ \forall 0 \leq i \leq g-1,
$$
$$
c_i = q^{i-g+1} \left[ \sum\limits _{s = d^{\perp}} ^{n-d-i} \pi ^{(s)} _{\PP(C^{\perp})} \binom{n-d-i}{s} (q-1) ^{s-1} \right] \ \ \mbox{\rm  for } \ \
\forall g \leq i \leq g + g^{\perp} -2;
$$
$$
{\rm (ii)} \ \ c_i = \sum\limits _{[a] \in \PP(C)} \overline{\pi _{[a]} ^{(d+i)}} \ \ \mbox{\rm for } \ \ \forall 0 \leq i \leq g-1,
$$
$$
c_i = q^{i-g+1} \left( \sum\limits _{[b] \in \PP (C^{\perp})} \overline{\pi _{[b]} ^{(n-d-i)}} \right) \ \ \mbox{\rm for } \ \
\forall  g \leq i \leq g + g^{\perp} -2.
$$
\end{corollary}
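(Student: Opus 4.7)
The plan is to reduce the corollary to the previously-proved Proposition~\ref{ProbabilisticInterpretationsCoeffDCDCPerp} by exploiting the extra structure coming from the $\FF_q^*$-action on the linear codes $C$ and $C^\perp$, which is absent in the general additive case. The crucial point is that componentwise multiplication by $\lambda \in \FF_q^*$ preserves the Hamming weight and the support of every word, so both the weight strata and the support function descend to the projectivizations $\PP(C), \PP(C^\perp) \subset \PP(\FF_q^n) = \PP^{n-1}(\FF_q)$.

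First I would record the following two identifications of probabilities. Since the projection $\eta_{\FF_q^*} : \FF_q^n \setminus \{0_{\FF_q}^n\} \to \PP^{n-1}(\FF_q)$ is an unramified $\FF_q^*$-covering that restricts to an $\FF_q^*$-covering $C^{(s)} \to \PP(C)^{(s)}$, one has
\[
\pi^{(s)}_{\PP(C)} \;=\; \frac{|\PP(C)^{(s)}|}{|\PP^{n-1}(\FF_q)^{(s)}|} \;=\; \frac{|C^{(s)}|/(q-1)}{|(\FF_q^n)^{(s)}|/(q-1)} \;=\; p^{(s)}_C,
\]
and the analogous equality $\pi^{(s)}_{\PP(C^\perp)} = p^{(s)}_{C^\perp}$. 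Plugging these identifications into (\ref{LowerProbC}) and (\ref{UpperProbC}) from Proposition~\ref{ProbabilisticInterpretationsCoeffDCDCPerp}, with $|G|=q$, one obtains part (i) of the corollary immediately.

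For part (ii), the key observation is that the function $\overline{p_a^{(w)}} = \binom{n-\mathrm{wt}(a)}{w-\mathrm{wt}(a)}/\binom{n}{w}$ depends only on the support ${\rm Supp}(a)$, hence is constant along $\FF_q^*$-orbits in $C \setminus \{0_{\FF_q}^n\}$. Writing $\overline{\pi^{(w)}_{[a]}} := \overline{p_a^{(w)}}$ for the induced probability on $\PP(C)$, one gets
\[
\sum_{a \in C \setminus \{0_{\FF_q}^n\}} \overline{p_a^{(d+i)}} \;=\; (q-1) \sum_{[a] \in \PP(C)} \overline{\pi^{(d+i)}_{[a]}},
\]
because every fibre of $\eta_{\FF_q^*}|_{C \setminus \{0\}}$ has cardinality $q-1$. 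Substituting this into (\ref{LowerProbBarC}) cancels the factor $(q-1)^{-1}$ and yields the first formula of (ii). The same argument, applied to $C^\perp$ and combined with (\ref{UpperProbBarC}), gives the second formula of (ii).

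There is no serious obstacle here; the entire content is the cancellation of $(q-1)$ factors between the numerator (the orbit-summation) and the normalization factor coming from the cardinality of an $\FF_q^*$-orbit. The only step that requires a brief verification is that the support function and the Hamming weight are $\FF_q^*$-invariant on $\FF_q^n$, which is immediate from $\mathrm{Supp}(\lambda a_1, \ldots, \lambda a_n) = \mathrm{Supp}(a_1, \ldots, a_n)$ for $\lambda \in \FF_q^*$, so that the projectivizations $\PP(C)^{(s)}$ and $\PP(C)^{(\subseteq \gamma)}$ are well defined and the diagram of coverings is compatible with the weight stratification.
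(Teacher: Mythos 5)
Your proposal is correct and follows essentially the same route as the paper: both parts reduce to Proposition \ref{ProbabilisticInterpretationsCoeffDCDCPerp} via the identification $\pi^{(s)}_{\PP(C)} = p^{(s)}_C$ (from the $(q-1)$-to-one covering $C^{(s)} \to \PP(C)^{(s)}$) and the constancy of weight and support along $\FF_q^*$-orbits, which cancels the $(q-1)^{-1}$ factors in (\ref{LowerProbBarC}) and (\ref{UpperProbBarC}).
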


\begin{proof}

(i)  It suffices to note that $\PP(C) ^{(s)} := \{ [a] \in \PP(C) \, \vert \, {\rm wt} ([a]) =s \}$ is of cardinality
$$
\left| \PP(C) ^{(s)} \right| = \frac{|C^{(s)}|}{|\FF_q ^*|} = \frac{\cW_C ^{(s)}}{q-1} \ \ \mbox{  for } \ \ \forall 1 \leq s \leq n
$$
and $\PP^{n-1} ( \FF_q) ^{(s)} := \{ [a] \in \PP^{n-1} ( \FF_q) \, \vert \, {\rm wt} (a) = s \}$  is of cardinality
\begin{align*}
\left| \PP ^{n-1} ( \FF_q) ^{(s)} \right| = \binom{n}{s} \frac{| ( \FF_q ^* )^s|}{|\FF_q ^* |} = \binom{n}{s} (q-1)^{s-1},
\end{align*}
so that
$$
\pi ^{(s)} _{\PP(C)} = \frac{| \PP(C) ^{(s)}|}{| \PP^{n-1} (\FF_q) ^{(s)} |} = \frac{\cW _C ^{(s)}}{\binom{n}{s} (q-1)^s} = p_C ^{(s)}
$$
and (i) is an immediate consequence of (\ref{LowerProbC}) and (\ref{UpperProbC}) from Proposition \ref{ProbabilisticInterpretationsCoeffDCDCPerp}.

(ii)  The first equality follows from  (\ref{LowerProbBarC}) by noting that the support of $a \in C \setminus \{ 0 _{\FF_q} ^n \}$ is constant along an $\FF_q ^*$-orbit on $C \setminus \{ 0_{\FF_q} ^n \}$ and the projectivization $\PP(C) := C \setminus \{ 0_{\FF_q} ^n \} / \FF_q ^*$ of $C$ is the $\FF_q^*$-orbit space of $C \setminus \{ 0_{\FF_q} ^n \}$.
The second equality follows from (\ref{UpperProbBarC}), $\PP(C^{\perp}) := C^{\perp} \setminus \{ 0_{\FF_q} ^n \} / \FF_q ^*$ and the fact that the weight is constant along the $\FF_q^*$-orbits on $C^{\perp} \setminus \{ 0_{\FF_q} ^n \}$.

\end{proof}


 \end{document}